\begin{document}

%% Title information

\title{A Grounded Conceptual Model for Ownership Types in Rust}

%% Author information
\author{Will Crichton}
\orcid{0000-0001-8639-6541}
\affiliation{
  \department{Department of Computer Science}
  \institution{Brown University}           
  \city{Providence}
  \state{Rhode Island}
  \postcode{02912}
  \country{USA}                    
}
\email{wcrichto@brown.edu}

%% Author with two affiliations and emails.
\author{Gavin Gray}
\orcid{0000-0002-2960-1198}
\affiliation{
  \department{Department of Computer Science}
  \institution{ETH Z\"{u}rich}
  \city{Z\"{u}rich}
  \country{Switzerland}
}

\author{Shriram Krishnamurthi}
\orcid{0000-0001-5184-1975}
\affiliation{
  \department{Department of Computer Science}
  \institution{Brown University}           
  \city{Providence}
  \state{Rhode Island}
  \postcode{02912}
  \country{USA}                    
}

%% Abstract
%% Note: \begin{abstract}...\end{abstract} environment must come
%% before \maketitle command
\begin{abstract}
Programmers learning Rust struggle to understand ownership types, Rust's core mechanism for ensuring memory safety without garbage collection. This paper describes our attempt to systematically design a pedagogy for ownership types. First, we studied Rust developers' misconceptions of ownership to create the Ownership Inventory, a new instrument for measuring a person's knowledge of ownership. We found that Rust learners could not connect Rust's static and dynamic semantics, such as determining why an ill-typed program would (or would not) exhibit undefined behavior. Second, we created a conceptual model of Rust's semantics that explains borrow checking in terms of flow-sensitive permissions on paths into memory. Third, we implemented a Rust compiler plugin that visualizes programs under the model. Fourth, we integrated the permissions model and visualizations into a broader pedagogy of ownership by writing a new ownership chapter for \textit{The Rust Programming Language}, a popular Rust textbook. Fifth, we evaluated an initial deployment of our pedagogy against the original version, using reader responses to the Ownership Inventory as a point of comparison. Thus far, the new pedagogy has improved learner scores on the Ownership Inventory by an average of \data{9\%} ($N = \data{342}, d = \data{0.56}$).
\end{abstract}

\begin{CCSXML}
<ccs2012>
   <concept>
       <concept_id>10003752.10010124.10010125.10010130</concept_id>
       <concept_desc>Theory of computation~Type structures</concept_desc>
       <concept_significance>500</concept_significance>
       </concept>
   <concept>
       <concept_id>10003120.10003145.10003151</concept_id>
       <concept_desc>Human-centered computing~Visualization systems and tools</concept_desc>
       <concept_significance>300</concept_significance>
       </concept>
   <concept>
       <concept_id>10003456.10003457.10003527</concept_id>
       <concept_desc>Social and professional topics~Computing education</concept_desc>
       <concept_significance>300</concept_significance>
       </concept>
 </ccs2012>
\end{CCSXML}

\ccsdesc[500]{Theory of computation~Type structures}
\ccsdesc[300]{Human-centered computing~Visualization systems and tools}
\ccsdesc[300]{Social and professional topics~Computing education}

\keywords{Rust, ownership types, program state visualization, concept inventory}

\maketitle

\section{Introduction}
\label{sec:intro}

% "Evaluated the efficacy at scale" too authoritative. Make it more preliminary,
% emphasize we are still iterating on it.
%
% In the eval: talk more about transfer. We could stop at small inventory questions, but we're trying really hard to transfer. And we're not teaching to the test.
%
% Our eval still has positive results in Table 2!
%
% Break up Sec5 into "near" vs "far" transfer? Tool evaluation vs. inventory?
% 
% Talk about not putting diagrams in the Inventory.
%
% MAKE SURE TO UPDATE THE ANON BOOK

Ownership is a programming discipline for managing the aliasing and mutation of data, enforced statically through ownership types. The flagship programming language for ownership is Rust, which empowers programmers to write memory-safe code without garbage collection. Rust's ownership model synthesizes several ideas from PL research such as linear logic\,\cite{girard1987linear}, class-based alias management\,\cite{clarke1998ownership}, and region-based memory management\,\cite{grossman2002region}. History shows that developers cannot write memory-safe C and C++ in practice\,\cite{msrc2019}, so the software industry is turning toward Rust. For example, Google's Android team has found zero memory vulnerabilities in 1.5 million lines of Rust code\,\cite{android2022}.

This rosy picture of PL tech transfer belies a persistent obstacle: teaching ownership types to prospective users. Over the last four years, studies have found that Rust learners struggle to fix ownership type errors\,\cite{zeng2018,zhu2022}, and users self-report that ownership is among their biggest barriers to learning Rust\,\cite{rustsurvey2020,fulton2021}. To wit: advances in the technical factors of type systems require commensurate advances in the human factors of type systems.

Our work started with the question: how can we systematically design a pedagogy for ownership types? Today, popular pedagogies for advanced type systems are driven by experts' intuitions about how people learn, as well as by intuitions about what makes type systems difficult to understand. 
As practicing (computer) scientists, we wanted to approach pedagogic design through scientific principles: grounding the pedagogy in observations about the struggles of Rust learners, and then evaluating the pedagogy by its effects on learning outcomes. This paper describes how we put these principles into practice:

\begin{enumerate}
    \item \textbf{We ran a formative user study to identify misconceptions about ownership types held by Rust learners} (\Cref{sec:ci}).  
    We designed a new instrument for evaluating understanding of ownership, the Ownership Inventory, by drawing tasks from commonly-reported Rust issues on StackOverflow. We studied $N = \data{36}$ Rust learners trying to solve Ownership Inventory problems. We found that learners can generally identify the surface-level reason for why a program is ill-typed with respect to ownership. However, they do not understand what undefined behavior (if any) would occur if an ill-typed program were executed. This misunderstanding is reflected in often inefficient and incorrect strategies that participants used to fix ownership errors.

    \item \textbf{We developed a conceptual model of ownership types to address these misconceptions} (Sections \ref{sec:dynformalmodel} and \ref{sec:statformalmodel}).
    The conceptual model represents the aspects of Rust's static and dynamic semantics that are relevant to ownership, while abstracting other details. The model provides learners a foundation to understand essential concepts such as undefined behavior and the incompleteness of Rust's ownership type-checker, or ``borrow checker.''

    \item \textbf{We implemented tools to visualize Rust programs under the conceptual model} (Sections \ref{sec:dynimpl} and \ref{sec:statimpl}).
    We leveraged existing executable models of Rust's dynamic  and static semantics to generate traces for a given Rust program. We visualize these traces to help learners ``see'' the abstract conceptual model reified into concrete examples.

    \item \textbf{We designed a pedagogy around our conceptual model to explain ownership types} (\Cref{sec:pedagogy}).
    We wrote a chapter on ownership that explains how and why Rust's type system prevents undefined behavior, illustrating code with diagrams generated by our tool. We integrated this chapter into a popular Rust textbook, \textit{The Rust Programming Language} (\trpl{})\,\cite{trpl}. We setup and advertised a public website that hosts our \trpl{} fork.

    \item \textbf{We A/B tested our pedagogy against the \trpl{} baseline} (Section \ref{sec:eval}).
    We measured learning outcomes with two kinds of quizzes: simpler comprehension questions about the conceptual model, and more difficult multiple-choice versions of the Ownership Inventory.
    Learners could correctly answer comprehension questions with \data{72\%} accuracy. Our initial deployment improved the average Inventory score from \data{48\%} to \data{57\%} ($N = \data{342}, p < \data{0.001}, d = \data{0.56}$).
\end{enumerate}

The thesis of our pedagogy is that to understand ownership types, Rust learners need to understand two key concepts:  undefined behavior %(\ub{}) 
and incompleteness. What are the ``stuck states'' of Rust's dynamic semantics? Why does Rust's static semantics avoid stuck states? What valid programs are rejected by the static semantics? Existing Rust pedagogies accurately characterize the syntactic properties enforced by the compiler. However, they fall short of explaining the counterfactual \ub{} that could occur without the borrow checker, especially regarding memory safety. For example, three popular Rust books explain mutable references like this:
\begin{itemize}
    \item \textit{Rust in Action}\,\cite{McNamara2021-ot}: ``Borrows can be read-only or read-write. Only one read-write borrow can exist at any one time.'' (No justification is provided.)
    \item \textit{Hands-on Rust}\,\cite{Wolverson2021-vz}: ``Rust's safety features make a mutable borrow \emph{exclusive}. If a variable is mutably borrowed, it cannot be borrowed---mutably or immutably---by other statements while the borrow remains.'' (No justification is provided.)
    \item \textit{The Rust Programming Language}\,\cite{trpl}: ``If you have a mutable reference to a value, you can have no other references to that value. [...] The benefit of having this restriction is that Rust can prevent data races at compile time.  [...] Users of an immutable reference don’t expect the value to suddenly change out from under them!''
\end{itemize}

\noindent Of these resources, only \trpl{} provides a sense of the counterfactual by abstractly gesturing towards the issue of data races. But learners likely need to see concrete ``negative'' examples of counterfactual behavior to better grasp the overall logic of ownership\,\cite{dyer2022negative}. Furthermore, none of these resources explain incompleteness, or even suggest that safe programs can be rejected by the compiler. Programmers need different strategies to fix ownership errors if a program is rejected due to incompleteness\,\cite{crichton2022ownershipusability}.

To address these issues, we designed our pedagogy around a new conceptual model of Rust's dynamic and static semantics. To explain \ub{}, we ``turn off'' Rust's  borrow checker to interpret and visualize programs that would otherwise be rejected by the compiler. These counterfactual visualizations help learners understand \ub{} that is avoided by the compiler. To explain incompleteness, we reframe ownership type-checking as a form of abstract interpretation. The abstract state of the program is a mapping from paths to permissions such as ``readable'' or ``writable.'' We visualize these abstract permission states to show learners how incompleteness arises from phenomena such as field-insensitivity and limitations of the alias analysis within the borrow checker.

\section{A Concept Inventory for Ownership}
\label{sec:ci}

To develop a pedagogy for ownership types, we first sought to understand the core misconceptions that underlie the struggles of Rust learners. Many learning resources are developed based on educators' intuitions about what makes concepts difficult. Instead, we sought to ground our pedagogy in experimental data about the experiences of Rust learners. Our overarching methodology was the development of a \emph{concept inventory} for ownership, henceforth called the ``Ownership Inventory.''

\subsection{Background}

In education research, a concept inventory (CI) is a test, usually composed of multiple-choice questions, about a narrow domain where the questions and distractors are drawn from common misconceptions about the domain\,\cite{hestenes1992fci}. There is no singular method for devising a CI\,\cite{lindell2007}, but the general idea is to articulate a range of important concepts in the target domain, and then to elicit misconceptions from learners about those concepts. The CI helps evaluate curricula for whether they effectively address such misconceptions.

CIs are an increasingly popular tool for CS education researchers. The last decade has seen a Cambrian explosion of CIs for CS topics such as 
CS1\,\cite{kaczmarczyk2010}, 
CS2\,\cite{wittie2017},
algorithms\,\cite{farghally2017}, recursion\,\cite{hamouda2017}, 
data structures\,\cite{porter2019}, 
digital logic\,\cite{herman2010}, 
operating systems\,\cite{webb2014}, and cybersecurity\,\cite{poulsen2021}. The development of these CIs has proved useful in demonstrating both the existence and frequency of misconceptions. For example, the Java-focused CI of \citet{kaczmarczyk2010} revealed that many students who completed a CS1 course ended up with a ``dearth of even basic conception of an Object.'' As another example, \citet{farghally2017} developed an algorithms CI based on 17 misconceptions predicted by educators. After administering the CI to two iterations of the algorithms course at their university, they found that 7 misconceptions were held by at least 1/3 of students. This kind of data helps guide instructors in determining which misconceptions are most important to address in curricular development.

\subsection{Development}

\begin{figure}
    \centering
    \begin{subfigure}{0.46\linewidth}
    \begin{tcolorbox}[boxsep=0pt,left=0pt,right=0pt,top=0pt,bottom=0pt,colframe=excerptframe]
    \includegraphics[width=\linewidth]{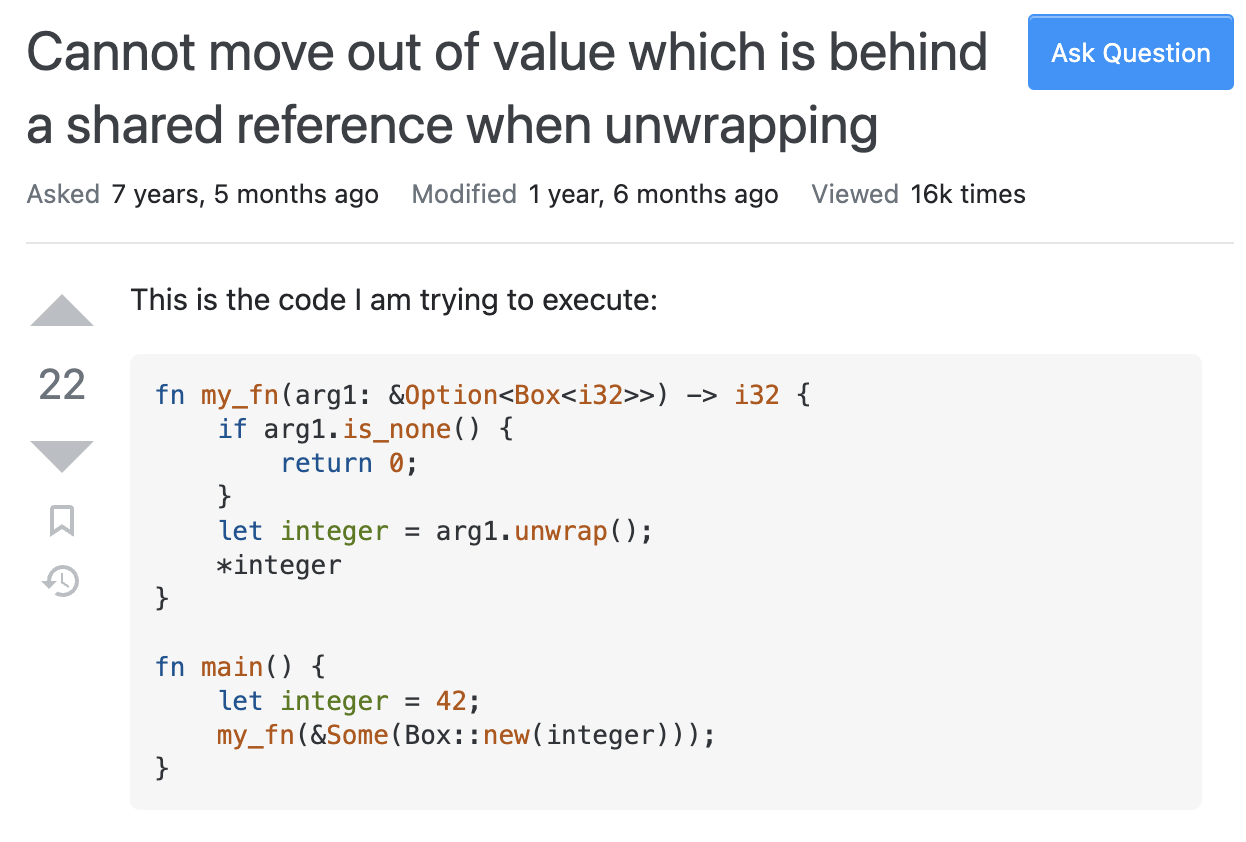}
    \end{tcolorbox}
    \caption{StackOverflow question \#32338659 that asks about moving out of a shared reference (a form of illegal borrow promotion).}
    \end{subfigure}
    \hfill
    \begin{subfigure}{0.49\linewidth}
\begin{lstlisting}
/// Gets the string out of an option if
/// exists, returning a default otherwise
fn get_or_default(arg: &Option<String>) 
  -> String 
{
    if arg.is_none() {
        return String::new();
    }	
    let s = arg.unwrap();
    s.clone()
}
\end{lstlisting}
    \cprotect\caption{The program we adapted from the question for the ownership inventory. The type \rs{Box<i32>} has been simplified to \rs{String}, the function has been given a meaningful name, and a doc-string has been added.}
    \end{subfigure}
    \cprotect\caption{An example of how we created snippets for the Ownership Inventory.}
    \label{fig:so-questions}
    \vspace{0.5em}
\end{figure}

\noindent We set out to design the Ownership Inventory for two reasons. First, the misconceptions we would observe in creating the Inventory would inform our later pedagogy. Second, we could use the Inventory to evaluate the efficacy of an intervention. If our pedagogy is better than before, then it should cause learners to score higher on the Inventory.

To construct the Inventory, we designed open-ended questions about ownership in situations that frequently stymie Rust learners. As we will describe in \Cref{sec:user-study-method}, we invited Rust learners to answer these questions, and then qualitatively analyzed their responses to characterize their misconceptions. Finally, we converted each open-ended question to multiple-choice by turning common misconceptions into distractors.

% In \Cref{sec:ci}, we describe the first step of identifying misconceptions by analyzing learner responses to open-ended question

This method presents a chicken-and-egg problem: how do we know which situations are hard for Rust learners until we study them? So we turned to the world's largest repository of programmer struggles: StackOverflow. We searched for the most common questions asked about Rust on StackOverflow that pertain to ownership. Specifically, we queried the top \data{50} ``Most Frequent'' questions with the \verb|[rust]| tag and manually filtered the list to \data{27} questions involving ownership. We iteratively categorized each question, and identified four main categories of ownership problems:
\begin{itemize}
    \item \textbf{Dangling pointers:} references to stack-allocated values that escape their scope.
    \item \textbf{Overlapping borrows:} mutating data that is aliased by another reference.
    \item \textbf{Illegal borrow promotion:} writing to read-only data or moving borrowed data.
    \item \textbf{Lifetime parameters:} taking multiple references as input and returning references as output, but failing to specify the relationship between these references' lifetimes.
\end{itemize}

\noindent For each category, we selected a few representative StackOverflow questions and cleaned up the snippet in question. For example, \Cref{fig:so-questions} shows a StackOverflow question in the ``illegal borrow promotion'' category and the corresponding clean program. This process created eight total programs. The full set of programs is provided in \Cref{sec:all-inventory-snippets}.

\begin{figure}
\begin{tightexcerpt}
\begin{small}
\begin{enumerate}[itemsep=0.5em]
    \item The following Rust function is rejected by the Rust compiler. What error message would you expect from the compiler? \\ 
    {\footnotesize (You do not need to exactly reproduce the wording --- the question is about whether you generally understand how Rust would justify rejecting this function.)}
    
    \item Assume that the compiler did NOT reject this function. 
    \begin{enumerate}
        \item What is a program that calls this function which would violate memory safety or cause a data race? \\
        {\footnotesize (If no such program exists, then leave this field blank and explain your reasoning below. If you are uncertain of a particular Rust syntax, you may use pseudocode notation.)}
        \item In a few sentences, explain why you believe your program will violate memory safety or cause a data race, or why it is impossible to write such a program.
    \end{enumerate}
    
    \item 
    \begin{enumerate}
        \item How can this function be changed to pass the compiler while (1) preserving its intent and (2) minimally impacting runtime performance? \\ 
        {\footnotesize (You may use the standard library documentation and Rust compiler for this task. There is no right answer --- use your judgment. You can change any aspect of the function, including the type signature.)}
        \item In a few sentences, explain why your fix satisfies the criteria above.
    \end{enumerate}
\end{enumerate}
\end{small}
\end{tightexcerpt}
\caption{The template for open-ended Ownership Inventory questions about a given program that is rejected by the compiler. The small parenthetical text is provided as additional context.}
\label{fig:inventory-question-template}
\vspace{0.5em}
\end{figure}

We designed a single set of template questions that apply to each program. The template represents each stage of reasoning involved in fixing an ownership error. \Cref{fig:inventory-question-template} shows the exact wording of the questions. The phrasing of the questions is open-ended to elicit misconceptions without biasing respondents towards preconceived incorrect answers.

\subsection{Methodology}
\label{sec:user-study-method}

After developing the open-ended Ownership Inventory, we next administered the Inventory to elicit misconceptions that Rust learners have about ownership.

\subsubsection{Participants}

We recruited $N = \data{36}$ participants for the study. We found Rust learners by embedding an advertisement for the study within the online version of \trpl{}. Participants were required to be 18 years or older, and they were required to have completed reading \trpl{} before participating. Participants were compensated \$20. Participants on average had $\data{1.7}$ ($\sigma = \data{2.2}$) prior years of experience with either C or C++.

\subsubsection{Materials}
\label{sec:ci-materials}

% \begin{figure}
% \begin{minipage}{0.49\linewidth}
%     \includegraphics[width=\linewidth]{}
% \end{minipage}
% \hfill
% \begin{minipage}{0.49\linewidth}
%         \includegraphics[width=\linewidth]{}
% \end{minipage}
% \caption{The web interface for the open-ended Ownership Inventory. \textbf{Left:} the first question shows participants a snippet and asks what error message they expect to receive. \textbf{Right:} a guided tutorial that explains each part of the interface.}
% \label{fig:ci-interface}
% \end{figure}

\noindent We created a web interface that presents participants with a program and prompts for open-ended responses to each question in \Cref{fig:inventory-question-template}.
%, shown in \Cref{fig:ci-interface}-left. 
The interface uses the Monaco code editor running a Rust language server via a WebAssembly build of Rust Analyzer. The in-browser IDE allows participants to get information about the type and functionality of unfamiliar methods.

The materials include a tutorial that guides participants through both the technical details of using the interface, as well as a sample program paired with a sample response to each of the questions. 
%\Cref{fig:ci-interface}-right shows a screenshot of the tutorial. 
The full source code for the experiment is provided in the artifact.

An important caveat: in our initial materials, participants were instructed to answer questions Q2b and Q3b by writing a code comment in the same editor used for questions Q2a and Q3a, respectively. However, after reviewing data from the first half of the experiment (\data{18} participants), we found that most participants either ignored or forgot this instruction. For the next \data{18} participants, we modified the website such that Q2b/Q3b had separate text boxes, which succeeded in eliciting responses from participants. Consequently, our results contain fewer data points for Q2b/Q3b than the other questions.

\subsubsection{Procedure}

Participants provided informed consent via the web interface, filled out information about their programming background, and then followed the tutorial. After learning about the style of task, participants were given three randomly selected programs in a random order. Participants had up to 15 minutes to answer all questions for a given program. Participants could complete the experiment at any time, and were not supervised by research staff.

At the end of the experiment, participants were given the option to provide open-ended feedback on their experience in the experiment. During the experiment, we continually monitored feedback for confusions with the materials. We did not ultimately make any changes based directly on participant feedback, which was most commonly of the form ``I found some of the questions difficult.''

\subsubsection{Analysis}

\noindent To evaluate the overall accuracy of participants, the first two authors independently coded each response as correct or incorrect. 
%
% according to the following rubric:
% \begin{enumerate}
%     \item A description of the error message is considered correct if it (a) indicated the right category of error and (b) pointed to the correct parts of the code that caused the error.
%     \item
%     \begin{enumerate} 
%         \item A safety violation is considered correct if (a) it would pass the compiler conditioned on the original snippet compiling, and (b) it would possibly encounter undefined behavior upon execution.
%         \item An explanation is considered correct if it provides a coherent argument for why the given program could encounter undefined behavior.
%     \end{enumerate}
%     \item
%     \begin{enumerate}
%         \item A fixed function is considered correct if it would pass the compiler and reasonably trades off preserving intent with reducing runtime performance.\footnotemark        
%         \item An explanation is considered correct if it provides a coherent argument for why the fix satisfies the criteria.
%     \end{enumerate}
% \end{enumerate}
%
After the first round of coding, the authors resolved major disagreements, then independently re-coded the data. After the second round, the inter-rater reliability was 91\% in terms of raw agreement and $\kappa = \data{0.81}$ as measured by Cohen's $\kappa$, which is generally considered ``excellent''\,\cite{fleiss2013kappa} or ``almost perfect''\,\cite{landis1977kappa} agreement. We considered this sufficient agreement to proceed with the analysis. For the quantitative results, we report scores as the average of the two raters' scores on each item. 
% \footnotetext{This judgment must be subjective on some level. We graded this based on our experience as Rust developers, but we also provide the raw data in the supplemental if you wish to compare your interpretation with ours.}

To characterize the specific misconceptions that led to incorrect answers, we performed a thematic analysis of participant responses. The first author coded each incorrect response for the category of error displayed in the response, such as ``changing a type from \rs{&Option<String>} to \rs{Option<String>}.'' Error categories were further categorized based on similarities across problems, such as ``using \rs{clone} to satisfy the borrow checker.''

\subsection{Results}

\begin{table}
    \small
    \centering
    \begin{tabular}{l|l|rr|rr|rr|rr|rr}
    \textbf{Program} & \textbf{Category} & \textbf{Q1} & $\boldsymbol{N}$ &  \textbf{Q2a} & $\boldsymbol{N}$ &  \textbf{Q2b} & $\boldsymbol{N}$ &  \textbf{Q3a} & $\boldsymbol{N}$ &  \textbf{Q3b} & $\boldsymbol{N}$ \\
    \midrule
 {\footnotesize \verb|make_separator|} &         Dangling pointer & 77\% &                  15 & 50\% &                 15 & 35\% &                              10 & 80\% &             15 & 65\% &                          10 \\ \hline
  {\footnotesize \verb|get_or_default|} & Illegal borrow promotion & 80\% &                  15 & 73\% &                 15 &  0\% &                               4 & 80\% &             15 &  0\% &                           4 \\ \hline
    {\footnotesize \verb|remove_zeros|} &      Overlapping borrows & 85\% &                  13 & 54\% &                 13 & 25\% &                               4 & 42\% &             13 & 75\% &                           4 \\ \hline
         {\footnotesize \verb|reverse|} &      Overlapping borrows & 75\% &                  16 & 22\% &                 16 & 10\% &                              10 & 72\% &             16 & 45\% &                          10 \\ \hline
        {\footnotesize \verb|find_nth|} & Illegal borrow promotion & 88\% &                  17 & 15\% &                 17 &  5\% &                              11 & 12\% &             17 & 18\% &                          11 \\ \hline
     {\footnotesize \verb|apply_curve|} &      Overlapping borrows & 60\% &                  15 & 30\% &                 15 & 28\% &                               9 & 37\% &             15 & 33\% &                           9 \\ \hline
      {\footnotesize \verb|concat_all|} &      Lifetime parameters & 10\% &                  15 &  3\% &                 15 & 11\% &                               9 & 43\% &             15 & 11\% &                           9 \\ \hline
 {\footnotesize \verb|add_displayable|} &      Lifetime parameters & 38\% &                  16 &  6\% &                 16 &  5\% &                              10 &  6\% &             16 & 10\% &                          10 \\ \hline
        % Footer
    \multicolumn{1}{l}{} & 
    \multicolumn{1}{r|}{\textbf{Average accuracy:}} &
    \textbf{64\%} & & \textbf{31\%} & & \textbf{15\%} & & \textbf{46\%} & & \textbf{31\%}
    \end{tabular}
    \vspace{0.5em}
    \caption{Percentage of correct responses by program and question for the open-ended Ownership Inventory.}
    \label{tab:open-ended-data}
    \vspace{-2em}
\end{table}

\noindent\Cref{tab:open-ended-data} shows the percentage of total correct responses per-question. Participants could usually predict why the borrow checker would reject a program (Q1). However, participants could only fix the program in \data{46\%} of cases (Q3a), and could only create a counterexample in \data{31\%} of cases (Q2a). Their accuracy further drops when asked to justify their answer (Q2b and Q3b). Participants could sometimes create counterexamples and fixes without understanding why their answer is correct.

% \todo{Make a note about how compiler help was misleading for findnth and adddisplayable. Contrast with concatall}

\subsubsection{Misconceptions about undefined behavior}

Participants' reasonable performance on Q1 suggests that Rust learners generally understand the surface-level reason for why a program is rejected. However, participants' comparatively poor performance on Q2a and Q2b suggests that Rust learners do not understand the deeper reasons that justify the ownership rules. Participants' incorrect attempts to construct counterexamples reveal a range of misconceptions about \ub{}.

\begin{wrapfigure}{r}{0.45\linewidth}
% \vspace{-0.5em}
\begin{lstlisting}
fn make_separator(user_str: &str) -> &str 
{
  if user_str == "" {
    let default = "=".repeat(10);
    &default
  } else {
    user_str
  }
}    
\end{lstlisting}
\vspace{-0.5em}
\end{wrapfigure}

\noindent Participants frequently struggled to construct a correct counterexample to an unsafe function. For example, consider the \rs{make_separator} program, shown on the right, which returns a dangling pointer to the variable \rs{default}.  \data{5/15} responses to Q2a provided a counterexample that called  \rs{make_separator("")} but did not use the output of the function call. These participants considered calling this function sufficient to violate memory safety, not realizing the need to actually use the dangling pointer.

\begin{wrapfigure}{l}{0.36\textwidth}
\vspace{-0.5em}
\begin{lstlisting}
fn add_displayable<T: Display>(
  v: &mut Vec<Box<dyn Display>>, 
  t: T
) {
  v.push(Box::new(t));
}
\end{lstlisting}
\end{wrapfigure}

\noindent As a more complex example, consider the \rs{add_displayable} program (left) which lacks correct lifetime parameters. If the type \rs{T} contains a reference, then the lifetime of that reference is erased when converting \rs{T} into a trait object \rs{Box<dyn Display>}. Without specifying how \rs{T} relates to \rs{dyn Display}, that erasure allows  \rs{v} to outlive references in its elements. \\[-1em]

\begin{wrapfigure}{r}{0.52\textwidth}
\vspace{-1em}
\begin{lstlisting}
let mut v = Vec::new();
{
  let some_string = String::from("Some string");
  add_displayable(&mut v, some_string);
}
dbg!(v.last().unwrap());

\end{lstlisting}
\vspace{-1em}
\end{wrapfigure}

\noindent \data{No participants} managed to write a correct counterexample for this task. The answer closest to correct is shown on the right (\data{4/15} participants gave a comparable answer). This answer is not a counterexample because \rs{add_displayable} moves the input string into the vector, so no data is deallocated upon exiting the nested scope. This snippet becomes a correct counterexample if a reference is added to \rs{v}, e.g. the function call is changed to \rs{add_displayable(&mut v, &some_string)}. Then \rs{v} contains a dangling pointer to \rs{some_string}, and a read of that pointer would be \ub{}.

\begin{wrapfigure}{l}{0.34\linewidth}
\vspace{-0.5em}
\begin{lstlisting}
fn reverse(v: &mut Vec<i32>) {
  let n = v.len();
  for i in 0 .. n / 2 {
    std::mem::swap(
      &mut v[i], 
      &mut v[n - i - 1]
    );
  }
}
\end{lstlisting}
\vspace{-1em}
\end{wrapfigure}

Participants also struggled to identify when a function is actually safe and no counterexample exists. Consider the \rs{reverse} program (left), a case of overlapping borrows. Rust considers \rs{&mut v[i]} and \rs{&mut v[n - i - 1]} to possibly alias. But $i \neq n - i - 1$ for $i \in [0, n/2)$ so this program is actually safe. Only \data{3/15} participants identified this fact, and only \data{1} of those \data{3} gave a correct justification. 
% The other \data{two} participants incorrectly justified the safety of the function by saying ``the function already has read-write access to the vector hence trying to acquire it more than once shouldn't be an issue'' and ``everything operates within the bounds of v''. 
Participant performance was comparably poor on Q2a and Q2b for the other two safe programs, \rs{find_nth} (\data{6\%} / \data{0\%}) and \rs{apply_curve} (\data{29\%} / \data{25\%}). \\[-1.5em]

\subsubsection{Misconceptions about fixing ownership errors}
\label{sec:misconceptions-fixing}

Participants could usually change a broken function to pass the borrow checker, but these fixes were not always correct and idiomatic.

\begin{wrapfigure}{r}{0.41\linewidth}
\vspace{-1.5em}
\begin{lstlisting}
fn reverse(v: &mut Vec<i32>) {
  let n = v.len();
  let mut v2 = v.clone();
  for i in 0 .. n / 2 {
    std::mem::swap(
      &mut v[i], &mut v2[n - i - 1]);
  }
}
\end{lstlisting}
\vspace{-0.5em}
\end{wrapfigure}

One common strategy we observed is the use of the \rs{.clone()} method. In Rust, cloning avoids aliasing by creating a deep copy of data. However, participants often incorrectly used \rs{clone}. As shown on the right, when fixing the \rs{reverse} program, \data{2/16} participants avoided overlapping borrows by cloning the input vector \rs{v}, and then swapping between the two vectors. This ``fix'' only reverses the first half of the input vector.

\begin{wrapfigure}{l}{0.45\textwidth}
\begin{lstlisting}
fn add_displayable<T: Display + 'static>(
  v: &mut Vec<Box<dyn Display>>, 
  t: T
);

fn add_displayable<'a, T: Display + 'a>(
  v: &mut Vec<Box<dyn Display + 'a>>, 
  t: T
);
\end{lstlisting}
\vspace{-1em}
\end{wrapfigure}    

When fixes required editing the type signature of a function, participants often created type signatures that were too restrictive or non-idiomatic. For example, returning to the \rs{add_displayable} program, \data{12/16} participants fixed the function by adding the \rs{+ 'static} bound to the generic type \rs{T}, as shown on the left on lines 1-4. (Notably, this solution is suggested in the compiler error for the original function.) However, this type signature is unnecessarily restrictive --- it prevents \rs{add_displayable} from being used on any type containing a non-static reference. An idiomatic solution is shown on lines 6-9, where a lifetime parameter \rs{'a} is added to the bounds of both \rs{T} and the trait object \rs{dyn Display} to indicate the aliasing relationship between the two types. Only \data{one} participant provided the correct and idiomatic solution.

\subsection{Discussion}

These results show that participants were quite capable at understanding the surface rules of ownership types. Excluding the two questions about lifetime parameters, participants could correctly predict the compiler's reason for rejection in 78\% of cases. However, the subsequent questions reveal that this understanding is shallow. On average, participants could not construct counterexamples to demonstrate \ub{}, nor could they effectively fix an ownership error.

Given these results, the key question for pedagogy is: why do Rust learning resources like \trpl{} lead to these learning outcomes? Learning is a complex process, so it is difficult to point to a specific passage and say, ``this is the problem.'' But in light of the misconceptions observed during this study, we hypothesized that a major learning challenge is that \trpl{} does not provide the foundations to reason counterfactually about \ub{}. Nor does \trpl{} explain how the borrow checker actually works, especially with respect to soundness vs. completeness.

\section{A Conceptual Model for Ownership}
\label{sec:notionalmachines}

To understand \ub{} and incompleteness, a person needs to understand Rust's dynamic and static semantics. At least, they need to understand a \emph{way of thinking} about these semantics that is \emph{viable} for tasks like debugging ownership errors. The responses to the open-ended Ownership Inventory showed that our participants had a fragile mental model of Rust's semantics. Therefore, we designed a new way of thinking (a ``conceptual'' model) of Rust that is precise enough to explain the relevant aspects of ownership, but approximate enough to avoid unnecessary detail.

To characterize the tension between precise and approximate conceptual models, consider a person who wants to understand integer addition in Rust. That is, they want a conceptual model of the semantics of the statement \rs{let z = x + y} where \rs{x, y : i32}. The true dynamic semantics of Rust's integer addition include aspects like two's complement overflow and auto-vectorization due to LLVM's optimizations. But for the average Rust user, these details are usually irrelevant for correctly using addition in routine programming tasks. A mental model that approximates the semantics as ``$x, y \in \mathbb{Z}$ and $x + y$ is standard integer addition'' is a generally viable model.

In this section, we describe our conceptual model of Rust's semantics designed to give learners a viable understanding of ownership. We provide a model for Rust's dynamic semantics (\Cref{sec:dynsem}) and for Rust's static semantics (\Cref{sec:statsem}). For each model, we articulate three aspects. First, the \textit{informal model}, an intuitive and visual representation of the model, as it would be presented to a Rust learner. Second, the \textit{formal model}, a precise and logical representation of the model for communication and reasoning within this paper. And third, the \textit{implementation}, a description of the tool that executes Rust programs under the model and generates the visualizations.

For a Rust learner, our models must be described in terms of Rust's surface syntax. Rust learners do not see or think in terms of core calculi or intermediate representations. However, this fact is in tension with our own need to design models that are both (1) simple to formally reason about and (2) feasible to implement. To resolve this tension, the formal model and implementation are described using the ``Mid-level Intermediate Representation,'' or MIR\,\citeyearpar{mir}, a control-flow graph IR within the Rust compiler. The informal model is described using the surface syntax, and the  implementation uses source-mapping information to lift the analysis from MIR to the surface.

% MIR is a simple model of Rust programs. Most features like loops and method resolution are desugared before reaching the MIR. MIR is a useful model of Rust programs. The Rust compiler performs many relevant operations at the MIR level, such as compile-time evaluation (relevant to the dynamic model) and borrow checking (relevant to the static model).
%
% However, we do not \emph{visualize} the models at the MIR level. We use source-mapping information to lift MIR constructs (both dynamic and static) to the level of surface syntax, as described further in the upcoming implementation subsections.

\subsection{Dynamic Model}
\label{sec:dynsem}

\begin{figure}
    \centering
    \includegraphics[width=0.9\linewidth]{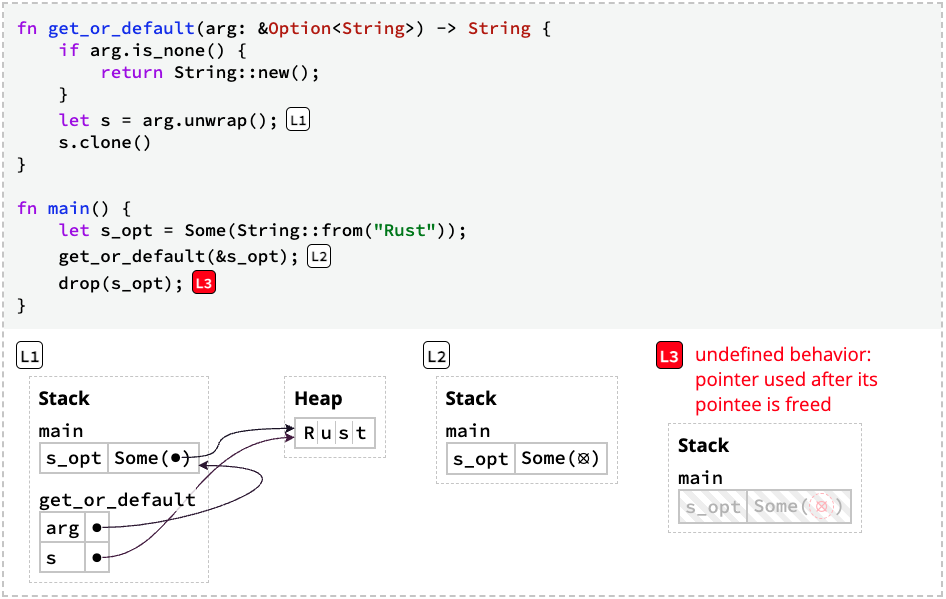}
    \cprotect\caption{The dynamic model visualized for the \rs{get_or_default} Ownership Inventory program. The state of memory is shown at three locations L1-L3. At L1, the diagram includes a heap pointer to the string data ``Rust'' and a stack pointer from the callee to the caller. At L2, the string has been deallocated on behalf of \rs{s} after the call to \rs{get_or_default} ends. At L3, undefined behavior occurs upon a double-free of \rs{s_opt}. Note that the L-$n$ labels and the \ub{} text are part of the visualization, not edited into the figure.}
    \label{fig:runtime-vis}
\end{figure}

An essential property of the dynamic model is that it must be able to express the \ub{} in Rust programs that is normally caught by the borrow checker. Conveniently, a similar need already exists in the Rust ecosystem to find \ub{} caused by \rs{unsafe} blocks. Miri\,\citeyearpar{miri} is a MIR interpreter that instruments a program's runtime to detect \ub{} like out-of-bounds memory accesses and use-after-free, comparable to Valgrind. Miri  provides a \textit{de facto} dynamic semantics to the MIR that can express \ub{} while avoiding unnecessary details like compiler optimizations.

Therefore, our dynamic model is basically ``what Miri does.'' The MIR does not have a formal semantics, although multiple projects are currently underway to design one\,\cite{minirust,amirformality}. Rather than provide a complete formal semantics for MIR, we will instead provide a didactic subset of Miri's semantics that suffices to explain our pedagogy.

\subsubsection{Informal Model}

A Rust program acts upon memory organized into a stack of function frames and a heap of long-lived data. Each stack frame maps syntactically-scoped variables to values, which are primitives (ints, bools, etc.), composites (structs, enums), or pointers. A path describes a particular value in memory, and pointers are essentially ``paths as values'' (as opposed to numeric addresses that can be arithmetically manipulated).

% One notable aspect is that when programs encounter \ub{}, the final state where the error occurred is highlighted in red. The offending pointer is emphasized to demonstrate the source of the error.
\Cref{fig:runtime-vis} shows an example of how we visualize the dynamic model. The state of memory is visualized at multiple points throughout the program. The diagram is fairly similar to prior work in program state visualization\,\cite{sorva2013}, so we will not belabor its design. In brief: one key aspect is that states are unrolled over time. Unrolling lets a person more easily compare changes between states, and it allows a person to see all the information in the diagram without interaction. This contrasts with tools like Python Tutor\,\cite{guo2013}, which shows one state at a time and requires users to actively scrub a slider between states. Another key aspect is that certain core data-types like boxes, strings, and vectors can be abstracted. For example, the abstracted version  of \rs{s_opt} just shows a pointer to heap data, while the expanded version would show a struct containing fields like the string's length (a button in the diagram permits toggling between these views).

% To visualize our dynamic model, we designed a diagram that can represent individual states of memory at points in a program, and then lays out multiple states adjacently. An example in shown in \Cref{fig:runtime-vis}. 

% A given state of memory $(S, H)$ is visualized by drawing boxed regions for each of the stack $S$ and the heap $H$. Within a frame $\sigma = (f, L, \psi)$, the name $f$ and the environment $\psi$ are drawn within the diagram as a table. The location $L$ is inserted into the code and used as a header for the memory state. All addresses $a$ are represented as arrows to the corresponding diagram locations. Additionally, if a program encounters undefined behavior, then we color the program location marker red. We draw a dotted red circle around the data that was used in the problematic operation.

\subsubsection{Formal Model}
\label{sec:dynformalmodel}

\setlength{\fboxrule}{0.5pt}
\newcommand{\synhead}[2]{%
    \begin{flushleft}\hspace{#2}\fbox{\textsc{#1}}\end{flushleft}}

\begin{figure}
\synhead{Programs}{0em}
\vspace{-3em}

\begin{align*}
\msf{Variable}~x \hspace{12pt} \msf{Number}~n \hspace{12pt} \msf{Function~Name}~f
\end{align*}

\begin{minipage}{0.43\textwidth}
$$
\begin{aligned}
\bnfm{Projection}{q}{\varepsilon \mid q.n} \\
\bnfm{Path}{p}{x \mid p.q \mid \ast p} \\
\bnfm{Constant}{c}{n \mid \msfb{true} \mid \msfb{false}} \\
\bnfm{Ownership~Qualifier}{\omega}{\shrd \mid \uniq} \\
\bnfm{Loan}{\ell}{\borrow{p}{\omega}}
\end{aligned}
$$
\end{minipage}
\hfill
\begin{minipage}{0.53\textwidth}
$$
\begin{aligned}
\bnfm{Rvalue}{rv}{c \mid p \mid \ell \mid (\overline{p_i})} \mid \msfb{box}~p \\
\bnfm{Instruction}{I}{}
    p := rv \mid \msfb{if}~p~\msfb{then}~n_1~\msfb{else}~n_2 \mid f(\overline{p_i}) \mid \\ 
    &\msfb{return}~p \mid \msfb{drop}~p \\
\bnfm{CFG}{G}{\seqof{I_i}} \\
\bnfm{Function}{F}{\msfb{fn}~f\seqof{\varrho_1 :> \varrho_2}(\overline{x_i : \tau_i}) \rightarrow \tau_o ~ \{ G \}}    
\end{aligned}
$$
\end{minipage}

\vspace{1em}

\begin{minipage}[t]{0.45\textwidth}
\synhead{Types}{0em}
\begin{gather*}
\begin{aligned}
\msf{Concrete~Lifetime}~r \hspace{12pt} \msf{Abstract~Lifetime}~\varrho
\end{aligned}
\\
\begin{aligned}
    \bnfm{Lifetime}{l}{r \mid \varrho} \\
    \bnfm{Type}{\tau}{}
        \msfb{u32} \mid \msfb{bool} \mid (\tau_1, \ldots, \tau_n) \mid \\ 
        & \&l~\omega~\tau \mid \msfb{box}~\tau
\end{aligned}
\end{gather*}
\end{minipage}
\hfill
\begin{minipage}[t]{0.49\textwidth}
\synhead{Runtime}{0em}
\vspace{-3em}

\begin{gather*}
\begin{aligned}
\msf{Heap~Location}~\heaploc
\end{aligned}
\\
\begin{aligned}
\bnfm{Segment}{s}{\msfb{frame}~(n, x) \mid \msfb{heap}~\heaploc} \\
\bnfm{Address}{a}{s.q} \\
\bnfm{Value}{v}{c \mid a \mid (v_1, \ldots, v_n)} \\
\bnfm{Env}{\env}{\seqof{x_i \mapsto v_i}} \\
\bnfm{Frame}{\sigma}{(f, n, \env)} \\
\bnfm{Stack}{S}{\seqof{\sigma_i}} \\
\bnfm{Heap}{H}{\seqof{\heaploc_i \mapsto v_i}}
\end{aligned}
\end{gather*}
\end{minipage}

\caption{The syntax and runtime structure of MIR Rust programs.}
\label{fig:syntax}
\end{figure}

\Cref{fig:syntax} provides the syntax for a subset of the MIR. A control-flow graph $G$ consists of a sequence of indexed instructions $I$. Instructions are either assignments, conditionals, function calls, returns, or deallocations. The basic primitives are standard (numbers, booleans, tuples, functions), but the interesting operations are those that involve memory.

Memory is arranged into two segments: a stack $S$ of frames $\sigma$, and a heap $H$ that maps locations $\heaploc$ to values $v$. Stack allocations are created as frame-local variables through instructions such as $x := (0, 1)$, and heap allocations are created with boxes such as $y := \msf{box}~2$. Data in memory are accessed via paths $p$, such as $x.0$ and $\ast y$. Finally, references to paths can be created with loans $\ell ::= \borrow{p}{\omega}$, where $\omega$ qualifies the loan as either shared ($\shrd$) or unique ($\uniq$). (We will say more about the type system when discussing the static model.)

Miri's operational semantics for this model is a judgment $\steps{(S, H)}{(S', H')}$. The key idea is that Miri's semantics can express \ub{} such as dangling pointers. Consider this program:
$$
x := \msf{box}~0;~ \msf{drop}~x;~ y := \ast x
$$

\noindent Miri will give the following execution trace for this program: 
%, abbreviating $f$ and $n$ with the instruction $I$ they represent:
%
\begin{alignat*}{4}
&(S = (x := \msfb{box}~0,~ &&\varnothing &),~ &H = \varnothing &&) \stepop \\
&(S = (\msfb{drop}~x,~ &&\{x \mapsto \heaploc\} &),~ &H = \{\heaploc \mapsto 0\} &&) \stepop \\
&(S = (y := \ast x,~ &&\{x \mapsto \heaploc\} &),~ &H = \varnothing &&) \stepop \\
&\text{\ub}
\end{alignat*}

Miri's semantics are not expressive enough to represent aspects of the compiler like register allocation and auto-vectorization. But if those optimizations are implemented correctly, a Rust program that does not have \ub{} under Miri's semantics should also not have \ub{} under the Rust compiler's actual semantics for its various assembly targets.

\subsubsection{Implementation}
\label{sec:dynimpl}

Given a Rust program, we run Miri to collect an execution trace that describes the state of the stack and heap after each instruction. We use the debug information within Miri (e.g., the types of variables on the stack) to reconstruct the structure of the data in memory. Based on source-mapping information in the Rust compiler, we group contiguous subsequences of the trace into steps that represent source-level expressions and statements. Then this data is passed to a script in a web browser which visualizes the data through a combination of HTML, CSS, and Javascript.

% NOTE: Shriram thinks this bit should be "separated", but need to ask
% what he means... like a \paragraph heading?
One subtlety is that Miri does not normally execute programs which would be rejected by the borrow checker, such as the one in \Cref{fig:runtime-vis}. However, in the current implementations of Rust and Miri, the borrow checker does not substantially influence the translation of a program into an Miri-interpretable CFG. Therefore, we can configure the compiler to simply ignore borrow checking errors, and then ask Miri to attempt to execute programs regardless. Miri is already well-suited to catching \ub{} (just normally in \rs{unsafe} code), so it suffices for catching \ub{} in unchecked code in the safe subset of Rust as well.

\subsection{Static Model}
\label{sec:statsem}

The dynamic model provides the foundation for understanding how programs can go wrong. The static model should then help learners understand how Rust's borrow checker catches programs that could go wrong (soundness), as well as when safe programs may be rejected (incompleteness).

As a flow-sensitive analysis, borrow checking is more complex than the usual type system encountered by today's programmers. So in designing a conceptual model of the borrow checker, our main goal was to condense the complex intermediate state of the analysis into a comprehensible, visualizable object. The result is the \emph{permissions model} of borrow checking.

\subsubsection{Informal Model}

\begin{figure}
    \hfill
    \begin{subfigure}{0.47\linewidth}
        \centering
        \includegraphics[width=0.9\linewidth]{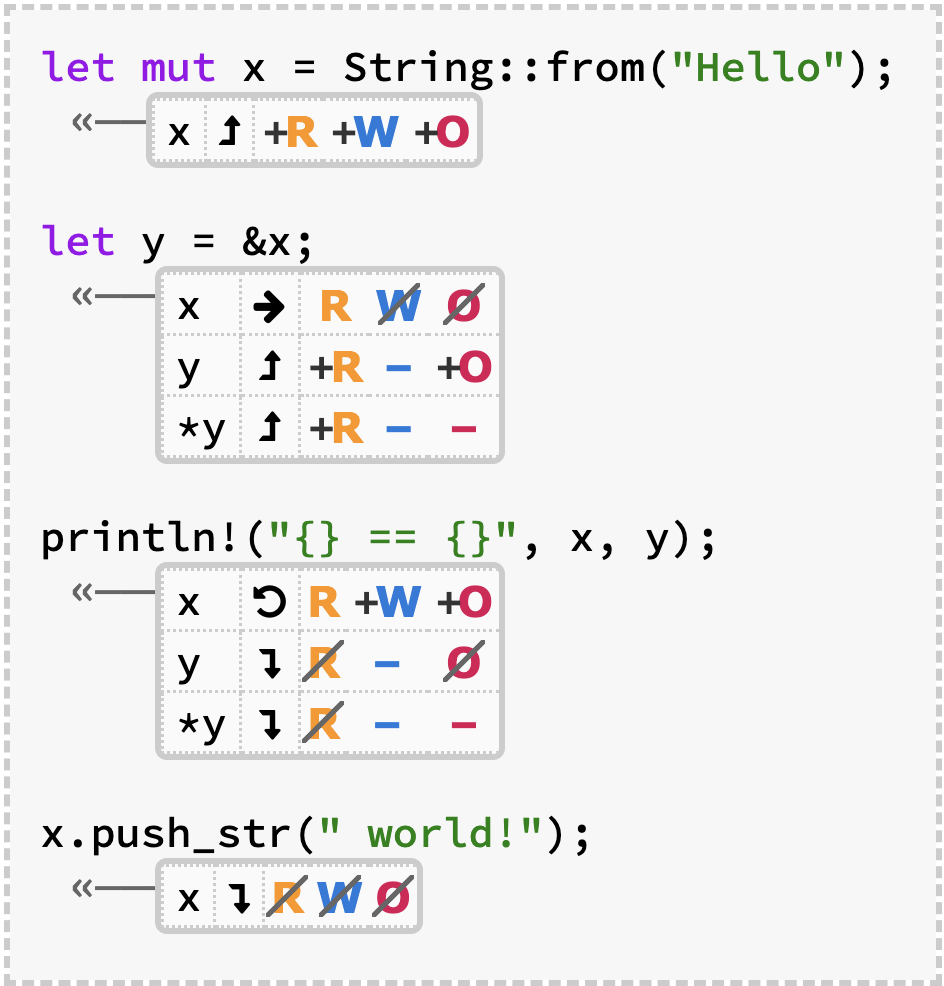}
        \caption{Each table shows the changes in permission state after a given statement.}
        \label{fig:static-diagrams-left}
    \end{subfigure}    
    \hfill
    \begin{subfigure}{0.49\linewidth}
        \centering
        \includegraphics[width=0.9\linewidth]{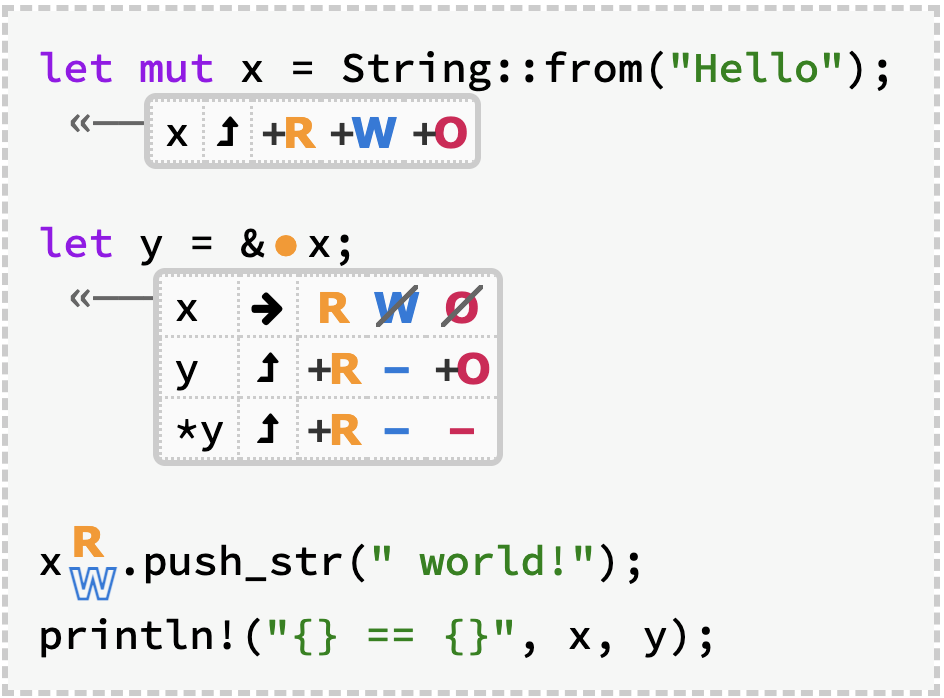}
        \caption{Operations on paths indicate which permissions are expected, and whether those permissions exist.}
        \label{fig:static-diagrams-right}
    \end{subfigure}
    \hfill
    \vspace{-0.5em}
    \cprotect\caption{Visualizations of the permissions model over two programs that borrow a string.}
    \label{fig:static-diagrams}
    % \vspace{-1em}
\end{figure}

At compile-time, Rust's borrow checker uses a system of \emph{permissions} to check whether an operation might cause \ub{}. The borrow checker tracks whether a path is readable ($\rpm$), writable ($\wpm$), or ownable ($\opm$). Focusing on \Cref{fig:static-diagrams-left}, a variable has $\rpm\opm$ permissions by default, and it has the $\wpm$ permission if declared with \rs{let mut}. The string \rs{x} therefore has $\rpm\wpm\opm$ permissions. With these permissions, the string can be read like \rs{x.len()}, written like \rs{x.push_str(..)}, and owned like \rs{drop(x)}. The plus sign indicates that the permissions were gained, and the cause of the change is indicated by the icon (up-arrow \birthIco\space for variable initialization). The borrow \rs{let y = &x} removes $\wpm\opm$ permissions from \rs{x} (right-arrow \refineBeginIco\space for path borrowed); this action provides $\rpm\opm$ permissions to \rs{y} and the $\rpm$ permission to \rs{*y}. Once \rs{y} is no longer used, then its permissions are eliminated (down-arrow \deathIco\space for the end of a variable's live range) and \rs{x} regains $\wpm\opm$ permissions (cycle-left \refineEndIco\space for regaining borrowed permissions). 
% Icons for the cause of a permission change were chosen by the authors to best approximate the semantic action.

Turning to \Cref{fig:static-diagrams-right}, operations expect permissions from paths. The  expectations are visually placed between the operator requiring permissions and the path operand. The borrowing operation \rs{&x} expects that \rs{x} has the $\rpm$ permission, represented as a yellow circle. The circle is filled in because \rs{x} has the $\rpm$ permission. The operation \rs{x.push_str(...)} expects $\rpm\wpm$ permissions, represented by the stack of the two letters. (The user of the diagram generator can place annotations on the source code to make an expectation be represented as either a letter or a circle, depending on where the user wants to focus their readers.) However, by swapping the order of the \rs{push_str} and \rs{println} lines, \rs{x} no longer has the $\wpm$ permission, visualized as a hollow letter. Therefore the borrow checker will reject this program. 
\subsubsection{Formal Model}
\label{sec:statformalmodel}

Rust's ownership types are often viewed as enforcing ``aliasing XOR mutability'', which recent work has started to articulate through the metaphor of permissions\,\cite{yanovski2021ghost}: what can or can't a program do on a particular path at a particular program location? We advance this idea by designing a conceptual model of the borrow checker that reifies permissions into formal objects that can be analyzed, visualized, and taught to Rust learners.

To precisely characterize the permissions model, we first need to provide a model of how Rust's borrow checker actually works. Polonius\,\cite{polonius} is a model of the borrow checker that is maintained by the Rust compiler developers and implemented in Datalog. The former aspect means that Polonius is likely to be a plausible model of Rust's implementation. The latter aspect enables us to easily implement our own alternative model that shares a base of facts about properties like liveness. Sharing facts simplifies both our implementation and our proof of model equivalence.

% invalid -> invalidated
% "moved at" -> "moved by" or "moved before"
% "outlives" vs. ":>"
% "\tau : \varrho" make sure this is explained
% P-Needs-Own: too strict, since &p doesn't need own.

\begin{figure}
    \centering
\begin{mathpar}
\ir{Pol-Borrow-Conflict}
    {\liveat{\ell}{I} \\ \invalidat{\ell}{I}}
    {\accesserror{G}}
    {b-borrow-conflict}

\ir{Pol-Move-Conflict}
    {\movedbefore{p}{I} \\ \readat{p}{I}}
    {\accesserror{G}}
    {b-move-conflict}

\\

\ir{Pol-Read-Invalid}
    {\readat{p'}{I} \\ p \conflicts p'}
    {\invalidat{\borrow{p}{\uniq}}{I}}
    {b-read-invalid}

\ir{Pol-Write-Invalid}
    {\writtenat{p'}{I} \\ p \conflicts p'}
    {\invalidat{\borrow{p}{\omega}}{I}}
    {b-write-invalid}

\ir{Pol-Move-Invalid}
    {\movedat{p'}{I} \\ p \conflicts p'}
    {\invalidat{\borrow{p}{\omega}}{I}}
    {b-move-invalid}
\end{mathpar}
    \vspace{-0.5em}
    \caption{The core subset of inferences rules for the Polonius model of the borrow checker.}
    \label{fig:polonius-rules}
\end{figure}

\paragraph{\hspace{-\parindent}Polonius model of borrow checking.}
\Cref{fig:polonius-rules} shows a subset of the inference rules for the Polonius model. Both the Polonius and permission models rely on a shared set of judgments about aspects like liveness and mutation. We do not define these judgments back to their axioms within this paper --- we refer interested readers to the Polonius source code\,\citeyearpar{polonius-src}. Instead, we just provide enough context to understand the differences and equivalences of the two model.

Given a control-flow graph $G$, Polonius will reject $G$ (written as ``$\accesserror{G}$'', consistent with Polonius' naming conventions)\footnotemark{} under one of two conditions (note that $I \in G$ for all rules):

\footnotetext{Type systems are normally formalized as a ``positive'' judgment, e.g., a program type-checks by constructing  a proof of $\Gamma \vdash e : \tau$. But Polonius is formulated as a ``negative'' judgment: a program does not type-check if a proof of $\accesserror{G}$ is constructed. For consistency with Polonius, we follow the negative judgment convention in our own model.}

\begin{description}

\item[\cref{tr:b-borrow-conflict}:] ``$\liveat{\ell}{I}$'' means that a loan $\ell ::= \borrow{p}{\omega}$ was created somewhere and is live at an instruction $I$, i.e., $\ell$ is used at $I$ or at some instruction reachable from $I$. ``$\invalidat{\ell}{I}$'' means $I$ performs an operation that conflicts with $\ell$. The $\msfb{invalidated\,at}$ judgment is defined in three ways. \cref{tr:b-read-invalid} states that a unique loan on a path $p$ is invalidated by a read of a conflicting path $p'$ (written $p \conflicts p'$). \cref{tr:b-write-invalid} states that any loan on a path $p$ is invalidated by a conflicting write. \cref{tr:b-move-invalid} states that any loan on a path $p$ is also invalidated by a conflicting move.

\begin{minipage}{0.79\textwidth}
\vspace{0.2em}\hspace{10pt} 
For example, consider the program on the right. Because $y$ is used in $z := \ast y$, then the loan $\borrow{x}{\shrd}$ is live at the instruction $x.0 := 1$. However, $x.0 := 1$ invalidates that loan because $x.0 \conflicts x$, and borrowed data cannot be mutated. Therefore this program has a loan conflict and is rejected.
\end{minipage}
\begin{minipage}{0.2\textwidth}
\vspace{-1em}
\begin{align*}
&x := (0, 0); \\[-0.3em]
&y := \borrow{x}{\shrd}; \\[-0.3em]
&x.0 := 1; \\[-0.3em]
&z := \ast y;
\end{align*}
\end{minipage}
\vspace{0.3em}

\item[\cref{tr:b-move-conflict}:] ``$\movedbefore{p}{I}$'' means that the path $p$ has been moved before reaching $I$. Any use of a movable data type (like a box) that is not through a reference will cause a move. ``$\readat{p}{I}$'' means that $p$ is read at $I$. 

\begin{minipage}{0.79\textwidth}
\vspace{0.2em}\hspace{10pt} 
For example, consider the program on the right. Because $x$ is moved by $y := x$, and $x$ is read later at $\ast x$, then this program has a move conflict and is rejected by the borrow checker.\end{minipage}
\begin{minipage}{0.2\textwidth}
\vspace{-1em}
\begin{align*}
&x := \ebox{0}; \\[-0.3em]
&y := x; \\[-0.3em]
&z := \ast x
\end{align*}
\end{minipage}
\vspace{0.3em}
\end{description}

\paragraph{\hspace{-\parindent}Permissions model of borrow checking.}
Next, we describe our permissions model and its relationship to the borrow checker. The basic idea is that the five different judgments used within the two Polonius $\msfb{access\text{-}error}$ rules can be abstracted into two higher-level judgments: ``$\needsat{p}{c}{I}$'' and ``$\missingat{p}{c}{I}$'', where $c ::= \rpm \mid \wpm \mid \opm$ is a permission to read, write, or own a path, respectively. A program has a permission violation (written as ``$\permfail{G}$'') under the \cref{tr:p-fail} rule, where a permission $c$ on a path $p$ is needed but missing at an instruction $I$.

\Cref{fig:permission-rules} shows the rules for the permissions model. 
The $\msfb{needs\,at}$ rules are straightforward: a path $p$ needs the $\rpm$ permission if read, the $\wpm$ permission if written, and the $\opm$ permission if moved. The $\msfb{missing\,at}$ rules describe the conditions under which a path lacks a particular permission. The \cref{tr:p-missing-read} rule states that a place $p$ cannot be read while a loan to a conflicting place $p'$ is live. This is analogous to \cref{tr:b-read-invalid}, i.e. that a read of place $p'$ invalidates any loan of a conflicting place $p$. The analogy can be formalized as a correctness theorem: we want to show that $\msfb{permission\text{-}error}$ soundly approximates $\msfb{access\text{-}error}$, i.e., that $\accesserror{G} \vdash \permfail{G}$. A simple rearrangement of terms proves this entailment for the case of \cref{tr:b-read-invalid}:
\begin{mathpar}
\infer
    {\liveat{\borrow{p'}{\uniq}}{I} \\ 
     \infer*[leftskip=0.5em]{\readat{p}{I} \\ p \conflicts p'}{\invalidat{\borrow{p'}{\uniq}}{I}}}
    {\accesserror{G}}
\vdash    
\infer
    {\infer*[rightskip=0.5em]
        {\readat{p}{I}}
        {\needsat{p}{\rpm}{I}} \\ 
     \infer*
        {\liveat{\borrow{p'}{\uniq}}{I} \\ 
         p \conflicts p'}
        {\missingat{p}{\rpm}{I}}}
    {\permfail{G}}
\end{mathpar}

\begin{figure}
    \centering
\begin{mathpar}
\ir{Perm-Fail}
    {\needsat{p}{c}{I} \\ \missingat{p}{c}{I}}
    {\permfail{G}}
    {p-fail} 

\ir{Perm-Needs-R}
    {\readat{p}{I}}
    {\needsat{p}{\rpm}{I}}
    {p-needs-read}

\ir{Perm-Needs-W}
    {\writtenat{p}{I}}
    {\needsat{p}{\wpm}{I}}
    {p-needs-write}

\ir{Perm-Needs-O}
    {\movedat{p}{I}}
    {\needsat{p}{\opm}{I}}
    {p-needs-own}

\ir{Perm-Missing-R}
    {\liveat{\borrow{p'}{\uniq}}{I} \\ p \conflicts p'}
    {\missingat{p}{\rpm}{I}}
    {p-missing-read}

\ir{Perm-Missing-WO}
    {\liveat{\borrow{p'}{\omega}}{I} \\ p \conflicts p'}
    {\missingat{p}{c \in \{\wpm, \opm\}}{I}}
    {p-missing-write-own}

\ir{Perm-Missing-$\ast$}
    {\movedbefore{p}{I}}
    {\missingat{p}{c \in \{\rpm, \wpm, \opm\}}{I}}
    {p-moved-no-permissions}
\end{mathpar}

    \caption{The inference rules for the permissions model of borrow checking.}
    \label{fig:permission-rules}
    \vspace{-0.5em}
\end{figure}

\noindent We can see a similar correspondence for the remaining rules. \cref{tr:p-missing-write-own} states that a path $p$ is missing the write and own permissions while there exists a live loan to a conflicting path $p'$. Correspondingly, \cref{tr:b-write-invalid} states that a loan on $p'$ is invalidated by a write to $p$. \cref{tr:p-moved-no-permissions} states that a path $p$ is missing all permissions after being moved. Correspondingly, \cref{tr:b-move-conflict} states that a use of a $p$ cannot occur if $p$ is moved. These correspondences are formalized in each case of the proof in \Cref{sec:proofs}.

\paragraph{\hspace{-\parindent}An open problem: lifetime parameter errors} 
Access errors are the most common kind of borrow checker error encountered by Rust users, accounting for 6/8 programs in the Ownership Inventory. The other kind of error is a \emph{lifetime parameter error}.

\noindent\begin{minipage}{0.66\textwidth}
\vspace{0.2em}\hspace{10pt} 
Consider the identity function on the right. Because $x$ flows into the return value, the lifetime $\varrho_1$ must ``outlive'' the lifetime $\varrho_2$. Polonius uses this information to modularly analyze the live ranges of references across function calls, e.g. $b := \msf{id}(a)$ should cause $a$ to be live as long as $b$ is live. However, Rust does not infer outlives-constraints for function types, so the user must explicitly specify $\varrho_1 :> \varrho_2$ or else this function would be rejected.

% Because $x$ is returned from the function,  Polonius expects that $\varrho_1$ ``outlives'' $\varrho_2$ at the return instruction. However, $\msf{id}$ does not list $\varrho_1 :> \varrho_2$ as a constraint. Therefore, the function has a lifetime conflict and is rejected by the borrow checker.
\end{minipage}
\hfill
\begin{minipage}{0.25\textwidth}
\vspace{-1em}
\begin{align*}
& \msfb{fn}~\msf{id}\langle \varrho_1, \varrho_2\rangle(x{: \&\varrho_1~\uniq~\msf{u32}}) \\[-0.3em]
& \hspace{8pt}\rightarrow \&\varrho_2~\uniq~\msf{u32} \\[-0.3em] 
& \{ \\[-0.3em]
& \hspace{12pt}y := x; \\[-0.3em]
& \hspace{12pt}\msfb{return}~y \\[-0.3em]
& \}
\end{align*}
\end{minipage}

\vspace{0.1em}
Permissions are not a perfect analogy to explain lifetime parameter errors. For example, an outlives-constraint cannot always be blamed on a path, while our model is structured around path-specific permissions. We have experimented with a fourth kind of ``flow'' permission $\fpm$:
\begin{mathpar}
\ir{Pol-Lifetime-Conflict}
    {\outlivesat{\varrho_1}{\varrho_2}{I} \\ \varrho_1 \notoutlives \varrho_2}
    {\subseterror{G}}
    {b-lifetime-conflict}

\hspace{-2em}

\ir{Perm-Needs-Flow}
    {\flowsto{\varrho_1}{\varrho_2}{p}{I}}
    {\needsat{p}{\fpm}{I}}
    {p-needs-flow}

\hspace{-2em}    
    
\ir{Perm-Missing-Flow}
    {\flowsto{\varrho_1}{\varrho_2}{p}{I} \\ \varrho_1 \notoutlives \varrho_2}
    {\missingat{p}{\fpm}{I}}
    {p-lifetime-missing-flow}
\end{mathpar}

The first rule states that Polonius finds a lifetime parameter error when an instruction $I$ requires $\varrho_1$ to outlive $\varrho_2$, but the function is not annotated with that outlives-constraint. The corresponding permission rules narrow the scope to outlives-constraints that can be blamed on a place $p$ (such as $y$ in the $\msf{id}$ example). The rules state that $p$ needs $\fpm$ if such a constraint exists, and that $p$ is missing $\fpm$ if the function lacks the necessary outlives annotation. Due to the narrowing of scope, our model does not soundly approximate lifetime parameter errors. 
% , i.e., $\accesserror{G} \mathop{\slashed{\vdash}} \subseterror{G}$
In future work, we will investigate either extensions to the permission model or alternative conceptual models that can better represent lifetime parameter errors.

\subsubsection{Implementation}
\label{sec:statimpl}

\begin{figure}
    % \hfill
    \begin{subfigure}[t]{0.48\linewidth}
        \centering
        \includegraphics[width=\textwidth]{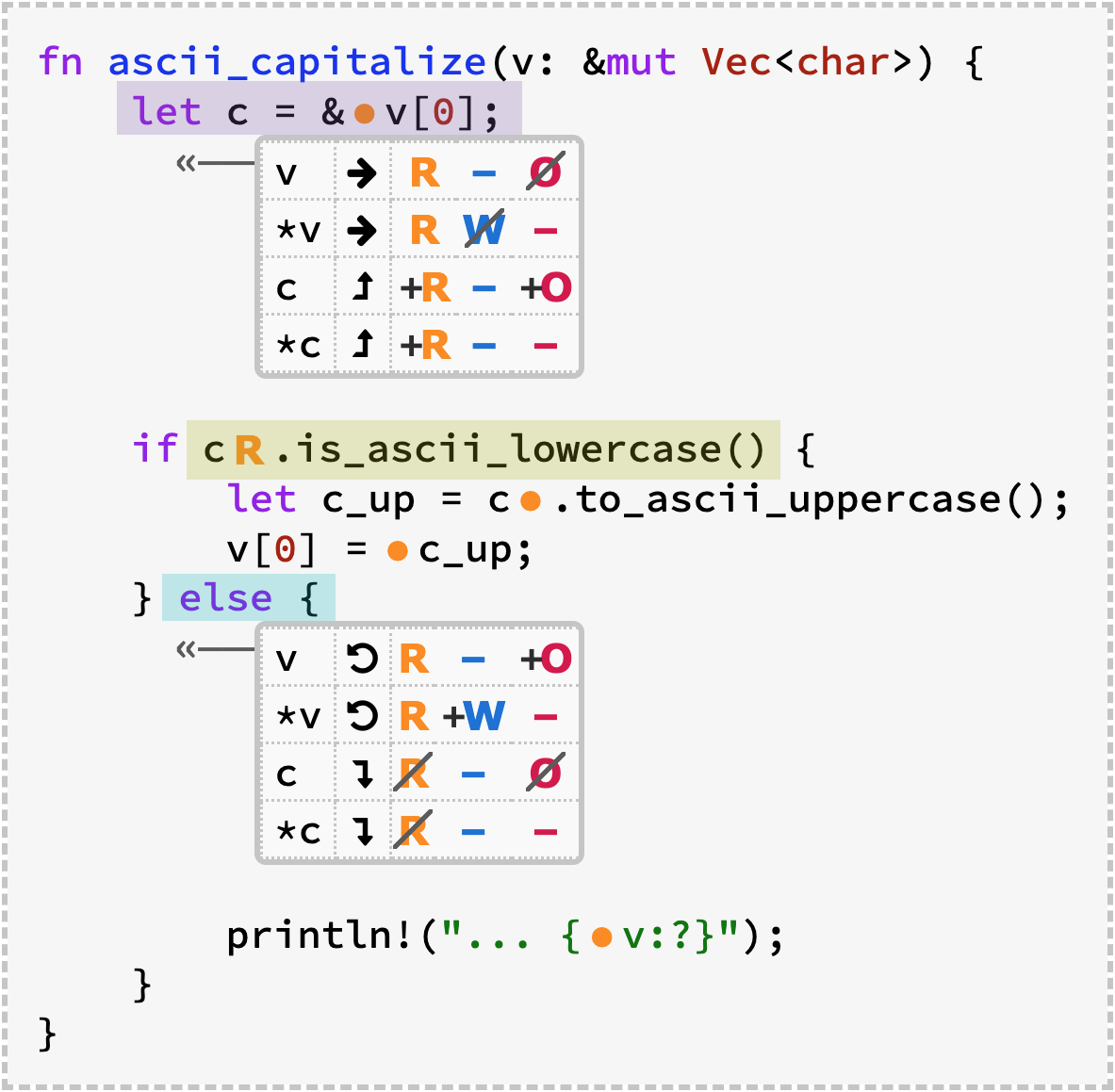}
        \cprotect\caption{The function \rs{ascii_capitalize} capitalizes the first character in a vector of ASCII characters. It demonstrates flow-sensitive changes in permissions.}
        \label{fig:static-impl-left}
    \end{subfigure}
    \hfill
    \begin{subfigure}[t]{0.48\linewidth}
        \centering
        \includegraphics[width=0.66\textwidth]{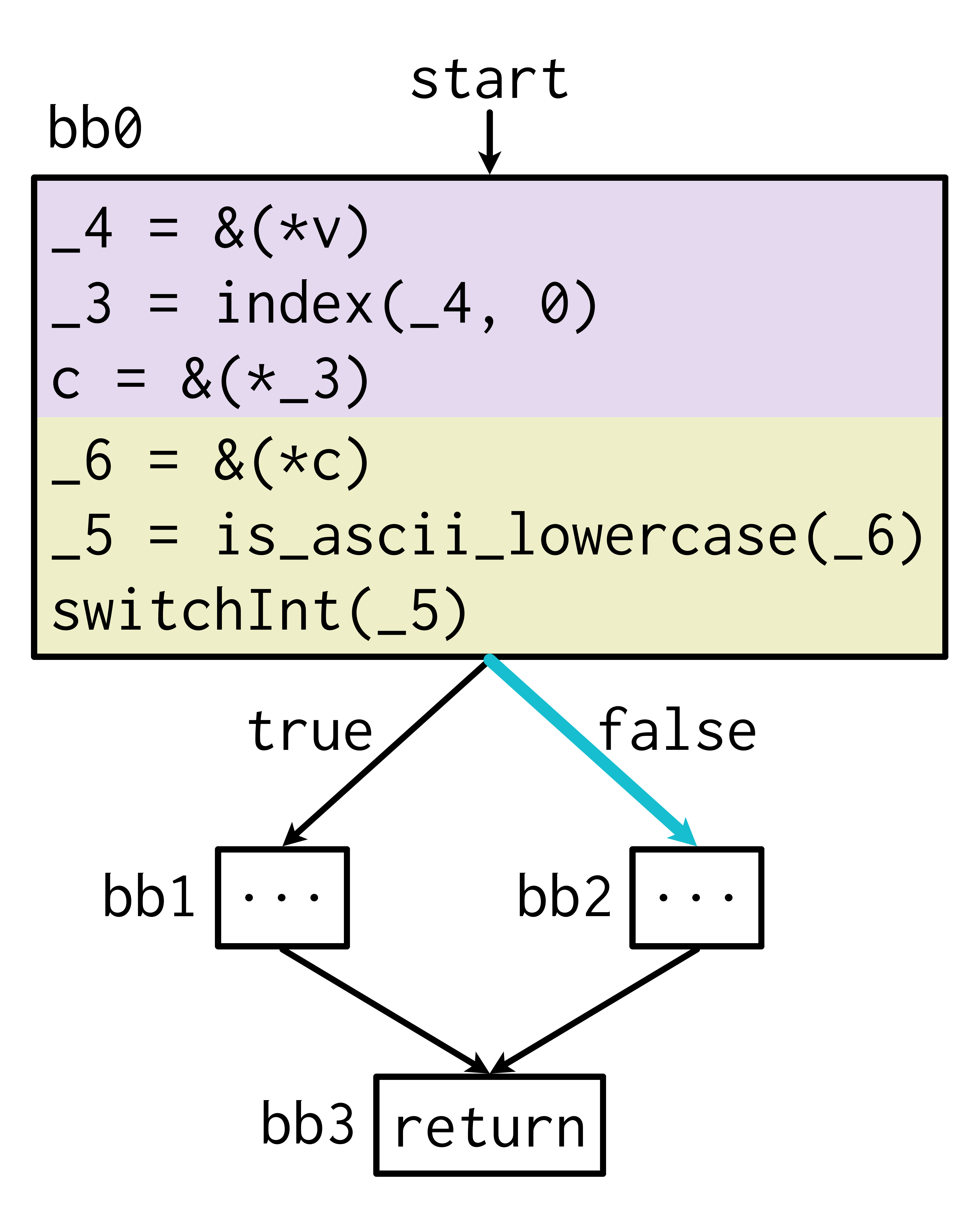}
        \cprotect\caption{A simplified MIR control-flow graph for \rs{ascii_capitalize} highlighting relevant parts of the CFG and how they map to the source-level.  }
        \label{fig:static-impl-right}
    \end{subfigure}    
    % \hfill
    
    \caption{Example of how the MIR CFG relates to source-code constructs.}
    \label{fig:static-impl}
\end{figure}

The structure of our static model visualization parallels the \cref{tr:p-fail} rule: one component for the $\msfb{needs\,at}$ relation and one component for the $\msf{missing\,at}$ relation, corresponding to the letters and tables shown in \Cref{fig:static-diagrams}.
% These components are realized as static analyses we will refer to as the \emph{steps analysis} and the \emph{expectations analysis} respectively.
% For a given Rust program the visualization of the static model is constructed using two distinct analyses: the \emph{steps analysis} and the \emph{expectations analysis}. 
After generating these relations at the MIR level following \Cref{fig:permission-rules}, the main implementation challenge is to lift the relations to the source-level.
% Discussion of these analyses will be centered around the function \rs{ascii_capitalize} and its corresponding MIR control-flow graph shown in \Cref{fig:static-impl}.
We will briefly discuss why this source-mapping is non-trivial and how we approach it, using the \rs{ascii_capitalize} function in \Cref{fig:static-impl} as a running example.

\paragraph{\hspace{-\parindent}Needs-at analysis.} 
One challenge for the $\msfb{needs\,at}$ analysis arises with desugared conversions. For example, consider the method call \rs{c.is_ascii_lowercase()} in \Cref{fig:static-impl-left}. Intuitively, the method's type signature \rss{&char -> bool} means that the $\rpm$ permission should be needed from the method's receiver, the path \rs{c}.
However, as shown in \rs{bb0} of \Cref{fig:static-impl-right}, the MIR-level method is not called directly on \rs{c}, but rather an automatically generated temporary \rs{_6} that is a reborrow of \rs{c}. Therefore, to lift the needs-at analysis for method calls, we have to search backwards from the MIR call instruction to find the first use of the source-visible receiver path.

 % To find the existing permissions at the expectation point, the analysis needs to find the exact MIR place and CFG location this usage corresponds to. In \Cref{fig:static-impl}-right we see highlighted in \softHighlight[implOlive]{yellow} the MIR instructions generated from the method call are \rs{_6 = &(*c)}, and \rss{_5 = is_ascii_lowercase(_6)}. To find the existing permissions, we rely on compiler debug information to filter out the MIRs temporary paths and determine which path maps directly to the desired source-level variable. The location is picked at the point where it is \emph{first used} in the sequence of generated instructions. Returning to the example, we select the path \rs{*c} within the instruction location \rs{_6 = &(*c)} to compute the existing permissions of $\rpm$, thus fulfilling the expectation.
 
\paragraph{\hspace{-\parindent}Missing-at analysis.}
The $\msf{missing\,at}$ relation defines a permission state, or a location-specific set of missing permissions for each path. Rather than visualize the entire permission state at each instruction, we instead visualize the differences in permission state (or ``steps'') caused by each instruction, which help readers better see how operations affect permissions. It is straightforward to compute steps between adjacent MIR instructions, but the implementation challenge is to determine which clusters of MIR instructions correspond to meaningful source-level steps.

For example, at the beginning of \rs{ascii_capitalize} in \Cref{fig:static-impl}-left, \rs{c} is defined as a shared borrow of \rs{v[0]}. As a result, \rs{v} loses $\opm$, \rs{*v} loses $\wpm$, \rs{c} gains $\rpm\opm$, and \rs{*c} gains $\rpm$. Using compiler source map information, we compute the contiguous subsequence of MIR instructions that correspond to the source-level statement; in \Cref{fig:static-impl} these are highlighted in \softHighlight[implPurple]{purple}. Then we compute the step as the difference in permission state between the first and last instructions.
% The key concept is that as a result of the created borrow, \rs{*v} loses $\wpm$, \rs{c} gains $\rpm\opm$, and \rs{*c} gains $\rpm$. In \Cref{fig:static-impl}-right, the MIR instructions generated from this source-level statement are highlighted in \textcolor{cyan}{cyan} and the creation of \rs{c} generates a linear sequence of MIR instructions. Computing the permission step is then done by taking the difference between the permission state before and after this sequence of instructions.

Steps are not always intra-basic-block subsequences --- they can also span across basic blocks, as shown in \Cref{fig:static-impl}-right when the \rss{else} branch is taken. In this case the sequence doesn't even include any instructions, just the branch between blocks highlighted in \softHighlight[implCyan]{cyan}. These changes occur due to the flow-sensitive liveness of \rs{c}, which is indicated in our diagram by the down-arrow next to the permission changes. Using these techniques, we lift the formal model to a visual description displayed on the source language.

\section{A Pedagogy for Ownership}
\label{sec:pedagogy}

\begin{figure}
\begin{framed}
\begin{small}
\centering
\hfill
\begin{minipage}[t]{0.51\textwidth}
\begin{enumerate}
    \item[4.1] \textbf{What is Ownership?}
    \begin{enumerate}[label=4.1.\arabic*,leftmargin=1.5em]
        \raggedright
        \item Safety is the Absence of Undefined Behavior 
        \item Ownership is a Discipline for Memory Safety
        \item Variables Live in the Stack
        \item Boxes Live in the Heap
        \item Rust Does Not Permit Manual Memory Management
        \item A Box's Owner Manages Deallocation
        \item At Runtime, A Move is Just a Copy
        \item Collections Use Boxes
        \item Variables Cannot Be Used After Being Moved
        \item Cloning Avoids Moves
    \end{enumerate}
\end{enumerate}
\end{minipage}
\hfill
\begin{minipage}[t]{0.48\textwidth}
\begin{enumerate}
    \item[4.2] \textbf{References and Borrowing}
    \begin{enumerate}[label=4.2.\arabic*,leftmargin=1.5em]
        \raggedright
        \item References Are Non-Owning Pointers
        \item Dereferencing a Pointer Accesses Its Data
        \item Rust Avoids Simultaneous Aliasing and Mutation
        \item References Change Permissions on Paths
        \item The Borrow Checker Finds Permission Violations
        \item Mutable References Provide Unique and Non-Owning Access to Data
        \item Permissions Are Returned At The End of a Reference's Lifetime
        \item Data Must Outlives All Of Its References
    \end{enumerate}
\end{enumerate}
\vspace{0.5em}
\end{minipage}
\hfill
\begin{minipage}{0.8\textwidth}
\begin{enumerate}[resume*]
    \item[4.3] \textbf{Fixing Ownership Errors}
    \begin{enumerate}[label=4.3.\arabic*,leftmargin=1.5em]
        \item Fixing an Unsafe Program: Reference to the Stack
        \item Fixing an Unsafe Program: Not Enough Permissions
         \item Fixing an Unsafe Program: Aliasing and Mutating a Data Structure
        \item Fixing an Unsafe Program: Copying vs. Moving Out of a Collection
        \item Fixing a Safe Program: Mutating Different Tuple Fields
        \item Fixing a Safe Program: Mutating Different Array Elements
    \end{enumerate}
\end{enumerate}
\end{minipage}
\end{small}
\end{framed}

\caption{The table of contents for the three sections of our chapter on ownership, designed as a drop-in replacement for \trpl{} Chapter 4: ``Understanding Ownership.''}
\label{fig:pedagogy}

\end{figure}

\noindent The models in \Cref{sec:notionalmachines} provide the conceptual foundation for understanding the aspects of ownership identified in \Cref{sec:ci} such as \ub{} and incompleteness. Next, we designed a pedagogy that could help Rust learners internalize these models. Rather than designing an entire Rust curriculum from scratch, we instead  forked the popular open-source Rust textbook \textit{The Rust Programming Language} (\trpl{})\,\cite{trpl}. \trpl{} covers most of the language's core features, and it is the official Rust learning resource endorsed by the Rust project.

% The core content relevant to ownership in \trpl{} is Chapter 4: ``Understanding Ownership.'' We wrote a new chapter on ownership based on our pedagogy that serves as a drop-in replacement for \trpl{} Chapter 4. 

We designed a new pedagogy of ownership as a replacement for the existing chapter on ownership in \trpl{}. The structure of the pedagogy is apparent in the sequence of headings used to organize each section, shown in \Cref{fig:pedagogy}. We start by explaining the core ideas of undefined behavior and memory safety through boxes and moves (the dynamic model). We then introduce references, the borrow checker, and permissions (the static model). Finally, we synthesize these ideas by providing multiple examples of how a Rust programmer can interpret and fix ownership errors, emphasizing the distinction between soundness and completeness. The full text of the chapter is available online at this link: \url{https://rust-book.cs.brown.edu/}

\subsection{An Illustrative Excerpt}

We will provide a sense of the pedagogic principles we used in writing the chapter by walking through the design of the book's \S 4.3.4: ``Fixing an Unsafe Program: Copying vs. Moving Out of a Collection''. Each excerpt is boxed in gray, and the pedagogic justification is beside the box. 

The goal of this section is to help learners understand the distinction between movable types (like \rs{String} or \rs{Vec}) and copyable types (like \rs{i32} or \rs{bool}). The section starts like this:

\vspace{0.5em}

\noindent\begin{minipage}{0.6\textwidth}
\begin{excerpt}
    A common confusion for Rust learners happens when copying data out of a collection, like a vector. For example, here's a safe program that copies a number out of a vector:

    \vspace{0.5em}
    \begin{center}
    \noindent\includegraphics[width=\linewidth]{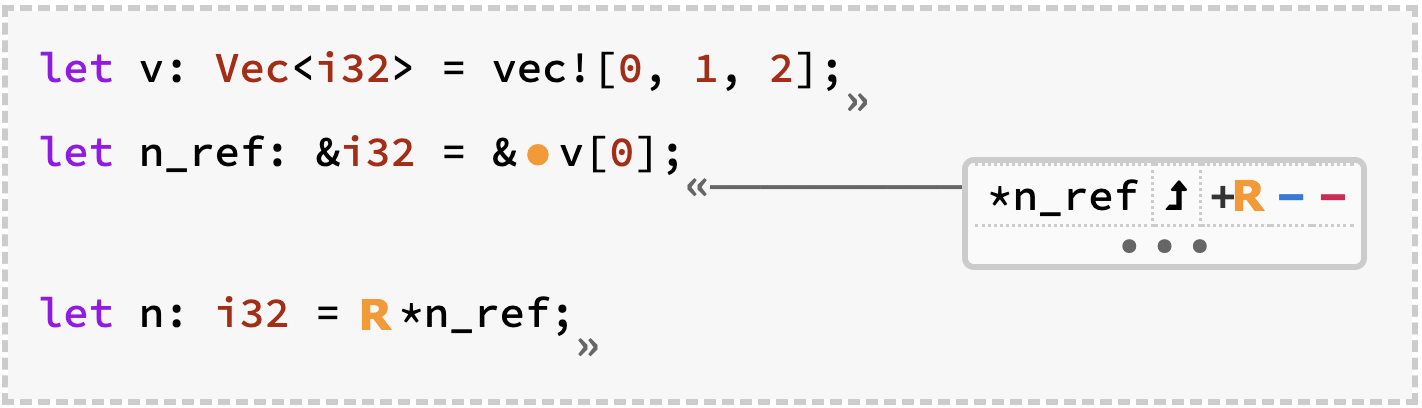}
    \end{center}
\end{excerpt}
\end{minipage}
\hfill
\begin{minipage}{0.39\textwidth}
    Each section is anchored around a concrete running example, like copying an element from a vector. Initially, the example is a valid, compiling program. The permission annotations show both the expected permissions on the relevant operation ($\rpm$ on \rs{*n_ref}) and how those permissions were gained (via the borrow \rs{&v[0]}).
\end{minipage}

\vspace{0.5em}

\noindent\begin{minipage}{0.35\textwidth}
    Then a small change is made to the program such that it no longer compiles (and in this case, is also now unsafe). The change is as small as possible so the reader can maximally transfer their understanding of the previous snippet onto the current one. The contrast in the permission diagrams emphasizes how the change in type has affected the internal state of the borrow checker.

\end{minipage}
\hfill
\begin{minipage}{0.63\textwidth}
\begin{excerpt}
    The dereference operation \rss{*n_ref} expects just the $\rpm$ permission, which the path \rss{*n_ref} has. But what happens if we change the type of elements in the vector from \rss{i32} to \rss{String}? Then it turns out we no longer have the necessary permissions:
    
    \vspace{0.5em}
    \begin{center}
    \includegraphics[width=\linewidth]{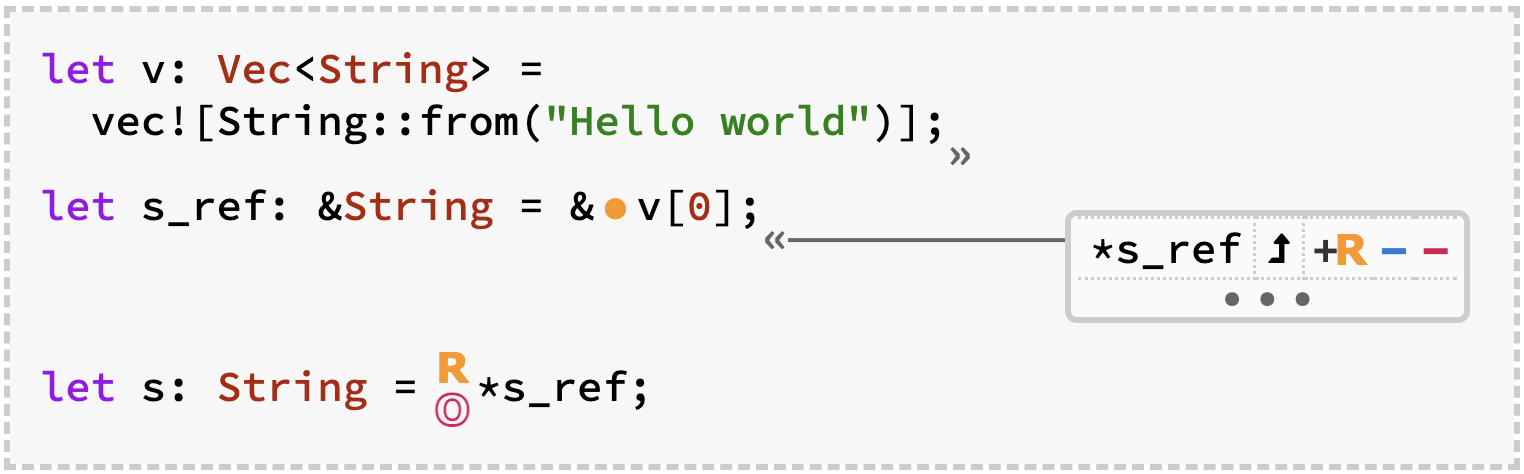}
    \end{center}
\end{excerpt}
\end{minipage}

\vspace{0.5em}

\noindent\begin{minipage}{0.7\textwidth}
\begin{excerpt}
    The first program will compile, but the second program will not compile. Rust gives the following error message:

    \vspace{0.5em}
    \begin{footnotesize}
    \begin{verbatim}
error[E0507]: cannot move out of `*s_ref` which is behind 
              a shared reference
 --> test.rs:4:9
  |
4 | let s = *s_ref;
  |         ^^^^^^
  |         |
  |         move occurs because `*s_ref` has type `String`, 
  |         which does not implement the `Copy` trait
    \end{verbatim}
    \end{footnotesize}

    \noindent The issue is that the vector \rss{v} owns the string ``Hello world''. When we dereference \rss{s_ref}, that tries to take ownership of the string from the vector. But references are non-owning pointers — we can't take ownership through a reference. Therefore Rust complains that we ``cannot move out of [...] a shared reference''.
\end{excerpt}
\end{minipage}
\hfill
\begin{minipage}{0.28\textwidth}
The reader is then given the actual output of the Rust compiler. These error messages will be the actual text encountered by Rust learners in their day-to-day practice, so it is important to explicitly relate the text of the error to the permissions model. (In future work we hope to incorporate the permissions visualizer into the IDE.)
\end{minipage}

\newpage
% \vspace{0.5em}

\begin{wrapfigure}{r}{0.62\textwidth}
\begin{excerpt}
    But why is this unsafe? We can illustrate the problem by simulating the rejected program:

    \vspace{0.5em}
    \begin{center}
    \includegraphics[width=\linewidth]{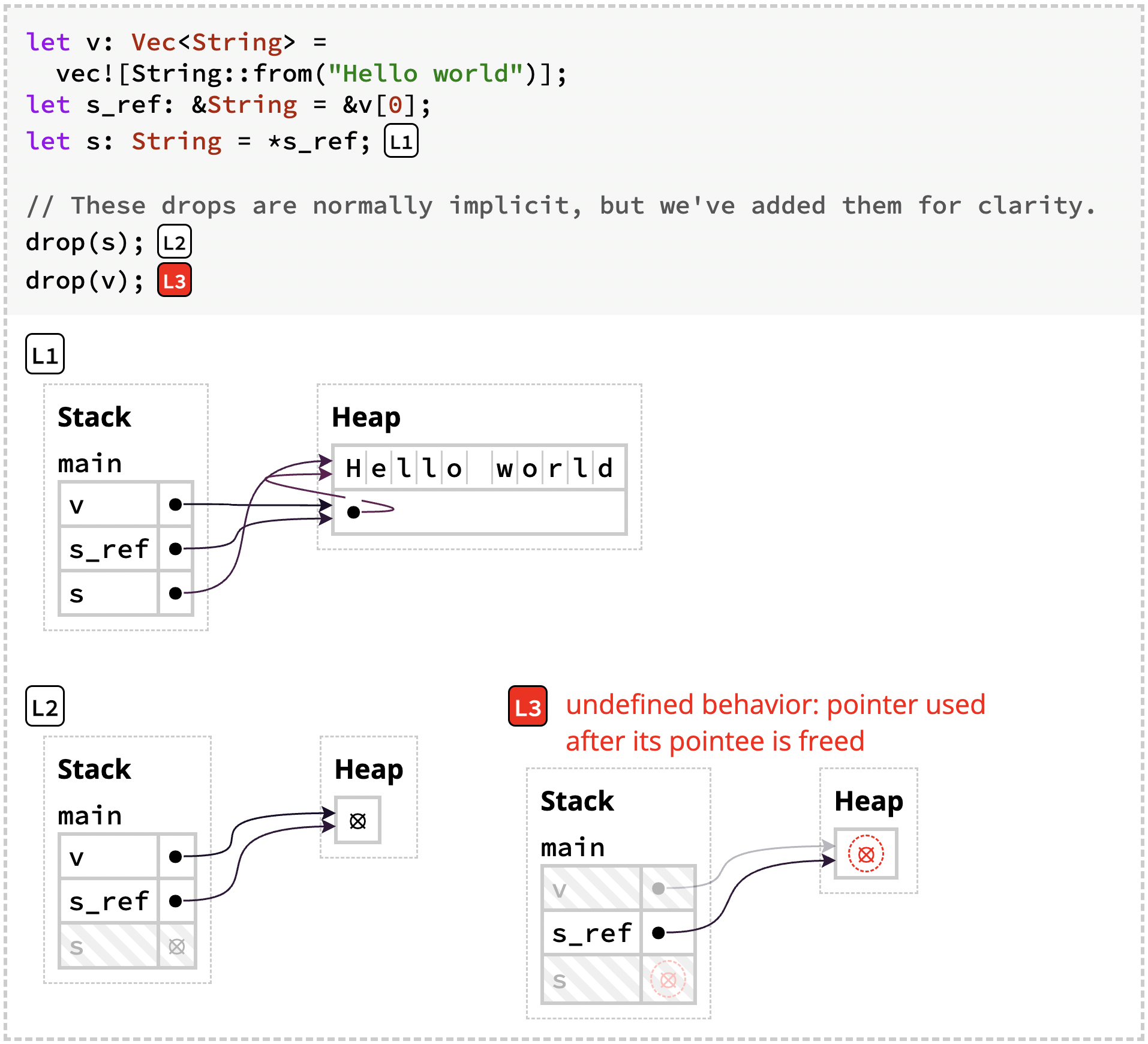}
    \end{center}

     \noindent What happens here is a \textbf{double-free}. After executing \rss{let s = *s_ref}, both \rss{v} and \rss{s} think they own ``Hello world''. After \rss{s} is dropped, ``Hello world'' is deallocated. Then \rss{v} is dropped, and undefined behavior happens when the string is freed a second time.
\end{excerpt}
\end{wrapfigure}

% \noindent\begin{minipage}{0.37\linewidth}
After establishing that a program is rejected by the compiler, we then engage in counterfactual reasoning: what would happen if the program were allowed to execute? In this instance, the code is already executable (i.e., not a abstract function), so we do not need to construct a separate counterexample. 

    % \hspace{8pt} 
The steps in the diagram are purposefully selected: first, we show a reasonable initial state of memory with live pointers. Then, we show where a pointer becomes dangling. Finally, we show where the dangling pointer is used. 

    % \hspace{8pt}
In the remainder of the section (which we elide for brevity), the case study is generalized into a pithy principle: ``if a value does not own heap data, then it can be copied without a move''. Then the text explores a space of solutions to avoid the move, such as: only using a reference, deep copying the data, or consuming ownership by removing the element from the collection. 

% \end{minipage}
% %
% \hfill
% %
% \begin{minipage}{0.62\linewidth}

% \end{minipage}

% \vspace{0.5em}

We believe that this style of exposition provides readers with the requisite foundations to reason about errors like ``cannot move out of a shared reference'' from first principles. Notably, this is the same kind of error that stymied the StackOverflow questioner in \Cref{fig:so-questions}. In their SO post, that person wrote: ``I see there is already a lot of documentation about borrow checker issues, but after reading it, I still can't figure out the problem.'' We set out to determine whether our new pedagogy would suffice to help learners like this one in such cases.

\section{Evaluation}
\label{sec:eval}

We sought to evaluate our pedagogy on whether it helps learners understand ownership in Rust. This raised two immediate questions. First, how do we find learners to try out our pedagogy? The vast majority of CS education research takes place in a classroom, but we explored an alternative route: free online textbooks. These resources provide access to a larger and more diverse population of learners than CS undergraduates at a single institution. To that end, we set up a publicly-accessible website that hosts our \trpl{} fork, and it has been visited by tens of thousands of Rust learners to date. This site provides a research platform for analyzing and intervening in the Rust learning process. The intervention described in this paper is the first step in an ongoing experiment to leverage the platform for systematically improving Rust education at scale.

The second key question is: how do we know if learners understand ownership after following our pedagogy? ``Understand'' is difficult to define --- ideally, a longitudinal study might measure understanding as learners' ability to productively write safe and performant Rust code in their context of use. But for lack of such  data, we instead opted for a common substitute: quiz questions. For example, \citet{ongaro2014raft} faced a similar situation: evaluating a novel conceptual model (Raft) for a complex problem (distributed consensus) against a baseline technique (Paxos). They measured understanding by presenting graduate students with a 1-hour lecture on either system followed by a quiz, finding that Raft quiz scores were 8\% higher than Paxos quiz scores.

We used a similar approach, but adapted to the setting of an online textbook. Rather than present a monolithic lecture followed by a monolithic quiz, we diffused the quiz questions throughout the ownership chapter and the rest of the book. Furthermore, we distinguished between two kinds of quiz questions designed to answer two research questions:

\begin{itemize}[leftmargin=4em,topsep=2pt]
    \item[\textbf{RQ1.}] Does our pedagogy help learners understand ownership \emph{at all}? 
    \item[\textbf{RQ2.}] Does our pedagogy help learners understand ownership \emph{better than before}?
\end{itemize} 

\noindent For RQ1, we asked participants simple comprehension questions about ownership, presented immediately following the book content that is relevant to a given question. These questions determine whether participants can transfer their ownership knowledge to situations similar to the text. Because the questions make references to the permissions model, we cannot establish a score baseline. Therefore we judge the scores in absolute rather than relative terms.

For RQ2, we gave participants a multiple-choice version of the Ownership Inventory. We inserted these questions later in the book after covering the essential prerequisites for a given program. To compare the baseline \trpl{} pedagogy against ours, we ran a kind of temporal A/B test. Participants answered Inventory questions after reading the original \trpl{} content for a few weeks. We then deployed the intervention and continued receiving responses to the Inventory. We quantified the pedagogy's effect based on the resulting change in scores. 

\subsection{Methodology}
\label{sec:evalmethods}

In designing our methodology, we traded off between minimizing the amount of infrastructure required, and maximizing the statistical power of our inferences from data. For example, we did not gather any demographic information about participants. We did not want to dissuade privacy-sensitive people from participating in the experiment (reducing the sample size). Moreover, we did not want to implement the infrastructure necessary to securely manage PII at scale. Nevertheless, given that our participants came over from a popular Rust textbook, we believe that people visiting \trpl{} are reasonably representative of the average Rust learner.

Additionally, the temporal A/B test is logistically simpler but statistically weaker than a traditional population-randomizing A/B test. The traditional setup requires a centralized user database to ensure a reader would not see condition A on their desktop and then accidentally enter condition B on their phone. Instead, our statistical inferences assume that each new participant is sampled from an i.i.d. stream of Rust learners. We discuss this and other trade-offs further in \Cref{sec:threats}. Our methodology was evaluated by Brown's IRB, which determined that the project did not require institutional review due to the study's purpose and safeguards to ensure anonymity.

\subsubsection{Participants}

We recruited participants by advertising in the title page of the official web version of \trpl{}, courtesy of the authors. The advertisement read: ``Want a more interactive learning experience? Try out a different version of the Rust Book, featuring: quizzes, highlighting, visualizations, and more.'' Since this advertisement was put up on \data{November 1, 2022}, our \trpl{} fork has received an average of \data{450} visitors per day, as measured by unique session IDs stored via cookies.

\subsubsection{Materials}

\begin{figure}
    \centering
    \begin{subfigure}[t]{0.49\textwidth}
    \begin{tightexcerpt}
        Consider the permissions in the following program:

        \vspace{0.5em}
        \noindent\begin{center}
            \includegraphics[width=\textwidth]{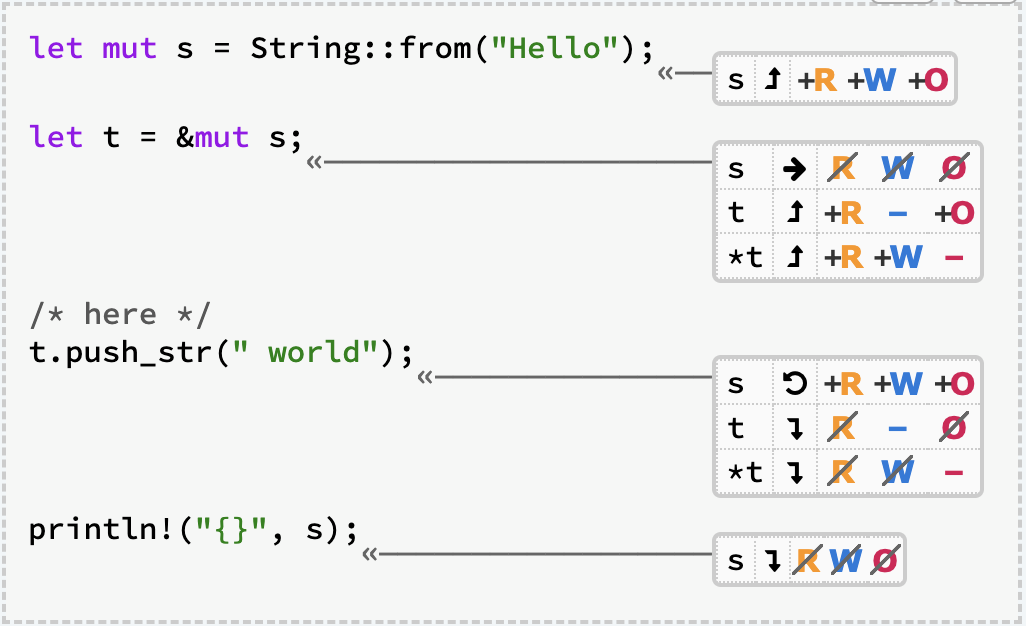}
        \end{center}
        \vspace{0.5em}

        \noindent At the point marked \rs{/* here */}, what are the permissions on the path \rs{s}? Select each permission below, or select ``no permissions'' if the path has no permissions.
        \begin{qchoices}
        \wronganswer $\rpm$
        \wronganswer $\opm$
        \wronganswer $\wpm$
        \rightanswer No permissions
        \end{qchoices}
    \end{tightexcerpt}
    \captionsetup{width=0.95\linewidth}
    \cprotect\caption{A comprehension question (``Analysis state in permissions diagram'') that tests whether a person can correctly interpret a permission diagram. }
    \label{fig:eval-questions-left}
    \end{subfigure}
    \begin{subfigure}[t]{0.50\textwidth}
    \begin{tightexcerpt}
\begin{lstlisting}[basicstyle=\ttfamily\scriptsize]
fn make_separator(user_str: &str) -> &str {
    if user_str == "" {
        let default = "=".repeat(10);
        &default
    } else {
        user_str
    }
}
\end{lstlisting}

    \noindent Assume that the compiler did NOT reject this function. Which (if any) of the following programs would (1) pass the compiler, and (2) possibly cause undefined behavior if executed? Check each program that satisfies both criteria, OR check ``None of these programs'' if none are satisfying.

    \begin{qchoices}
    \rightanswer[add to width=4mm]
\begin{lstlisting}[style=mcq]
let s = make_separator("");
println!("{s}");        
\end{lstlisting}%
    \wronganswer[add to width=-4mm]
\begin{lstlisting}[style=mcq]
let s = 
  make_separator("");
\end{lstlisting}%
    \wronganswer    
\begin{lstlisting}[style=mcq]
println!("{}", 
  make_separator("Hello world!")); 
\end{lstlisting}%
    \wronganswer None of these programs
    \end{qchoices}
    \end{tightexcerpt}
    \cprotect\caption{A multiple-choice Ownership Inventory question (Q2a for the \rs{make_separator} program). The distractors are drawn from common incorrect answers given for the open-ended version of the same question.}
    \label{fig:eval-questions-right}
    \end{subfigure}
    \captionsetup{width=0.95\linewidth}
    \cprotect\caption{Two examples of questions used in the evaluation.}
    \label{fig:eval-questions}
    % \vspace{-0.5em}
\end{figure}

\noindent We developed \data{11} comprehension questions to cover the content of the ownership chapter. \Cref{sec:allcomprehensionquestions} contains the full text of each question, and \Cref{tab:comprehension-questions} contains a short description of each question. \Cref{fig:eval-questions-left} shows one example --- to test understanding of permission diagrams, we asked participants to infer the permissions for a path at a given point.

% To embed these questions into \trpl{}, we developed a plugin for mdBook (the Markdown authoring framework used by \trpl{}) called \rs{mdbook-quiz} that allows quiz questions to be embedded inline into Markdown content.

We developed \data{24} Ownership Inventory questions based on the \data{8} program in \Cref{tab:open-ended-data}. For each program, we created a close-ended version of Q1, Q2a, and Q3a in \Cref{fig:inventory-question-template} (the justification questions Q2b/Q3b did not translate to the multiple-choice setting). Following the concept inventory methodology, we selected distractors from common misconceptions about the  Inventory programs.
\Cref{fig:eval-questions-right} shows an example of a multiple-choice Ownership Inventory question. The \rs{make_separator} task is an instance of a dangling stack reference. A common incorrect counterexample provided by participants in \Cref{sec:ci} was the snippet \rs{let s = make_separator("")} which creates a dangling pointer, but does not use it. By using that incorrect answer as a distractor, the multiple-choice question is more likely to test for the presence of this misconception.

We spaced out the 24 questions into 4 sets of 6 questions, using the same order as in \Cref{tab:open-ended-data}. Each pair of tasks was embedded into the end of the appropriate chapter such that the cumulative preceding content would cover all necessary features --- Chapter 6, 8, 10, and 17, respectively. Additionally, to ensure participants had access to the relevant documentation on standard library constructs, all the code snippets in the Inventory questions used the same embedded language server technology as described in \Cref{sec:ci-materials}. As an example, the Chapter 6 Inventory questions can be viewed here: \url{https://rust-book.cs.brown.edu/ch06-04-inventory.html}

The quiz widget permits participants to take a quiz multiple times if they answer any question incorrectly. We eliminate any repeat attempts (as determined by the participants' session IDs) from the dataset and only evaluate based on each participant's first attempt on a given quiz.

\subsubsection{Procedure}

On \data{December 13, 2022}, we deployed the Inventory questions to our book. After fixing bugs for a few weeks, on \data{January 3, 2023} we froze the question content and started gathering pre-intervention data on performance with the baseline \trpl{} pedagogy. After \data{45} days, on \data{February 17, 2023} we deployed the initial draft of our new ownership chapter. 

For the next few months, we iterated on the text. This process consisted of reviewing quiz data and user feedback (readers had the option to report broken or confusing text, diagrams, and questions). We did not substantively change the pedagogy during this time, but rather fixed or clarified small issues. Here are two examples of the several changes made in this time:
\begin{itemize}
    \item The original version of our new ownership chapter illustrated borrowing with an example involving vectors. One user alerted us to the fact that \trpl{} does not explain vectors until a later chapter, making the example difficult to understand. In response, we added context explaining the aspects of vectors that are needed for the example.
    \item The original version of the runtime diagram did not visually represent a variable being invalidated upon move, since technically move-invalidation is not part of the Rust runtime (a move is just a copy). But readers frequently complained that they expected moves to be visible in the diagram --- that is, their mental model did not match our conceptual model. In this case, we ultimately agreed with readers that the point was too technical, and added invalidation to the runtime diagram.
\end{itemize}
\noindent A complete list of the changes can be found in the commit log of the GitHub repository for our book\,\citegithub{https://github.com/cognitive-engineering-lab/rust-book}. On \data{June 15, 2023} we froze the text and began gathering post-intervention data. Data collection continued for \data{45} days until \data{July 30, 2023}.

\subsection{Results}

\begin{table}
    \centering
    \begin{tabular}{l|rr}
                                 \textbf{Description} & \textbf{Accuracy} & $\boldsymbol{N}$\\
\midrule
                Difference of stack and heap &     64\% & 2060 \\
                 Aliasing in runtime diagram &     93\% & 2060 \\
                    Moves in runtime diagram &     83\% & 2060 \\
              Defined vs. undefined behavior &     58\% & 1958 \\
                  Compiler error due to move &     84\% & 1958 \\
Dereferencing multiple layers of indirection &     47\% & 1683 \\
               How moves affect deallocation &     95\% & 1683 \\
       Analysis state in permissions diagram &     81\% & 1281 \\
              Why borrows change permissions &     59\% & 1281 \\
 How references can cause undefined behavior &     44\% & 1281 \\
   Compiler error due to overlapping borrows &     87\% & 1180 \\
\bottomrule
\multicolumn{1}{r}{\textbf{Pooled accuracy}} &
\multicolumn{1}{r}{\textbf{72\%}}
\end{tabular}
    \caption{Readers' accuracy on simple comprehension questions about the new ownership pedagogy. Accuracies are average correctness with the number of data points in parentheses. Questions are presented in the order encountered by readers. The full text of the questions is in \Cref{sec:allcomprehensionquestions}.}
    \label{tab:comprehension-questions}
    % \vspace{-1em}
\end{table}

\subsubsection{RQ1: Does our pedagogy help learners understand ownership at all?} 
\Cref{tab:comprehension-questions} shows readers' accuracies on the  comprehension questions. Overall, the pooled accuracy of \data{72\%} shows that readers could mostly understand the basic concepts within our pedagogy.
Readers were able to successfully interpret both runtime and compile-time diagrams (``Aliasing in runtime diagram'' at \data{93\%}, ``Moves in runtime diagram'' at \data{83\%}, ``Analysis state in permissions diagram'' at \data{81\%}). Readers could also identify when the compiler was going to reject a program (``Compiler error due to move'' at \data{84\%}, ``Compiler error due to overlapping borrows'' at \data{87\%}).

However, readers' mediocre performance on a few of the comprehension questions suggests that their understanding may be somewhat shallow. For example, the ownership chapter provides an example program containing a variable \rs{x : &Box<i32>}, and explains that two dereferences like \rs{**x} are needed to access the inner integer. The comprehension question ``Dereferencing multiple layers of indirection'' presents a program that constructs an expression of type \rs{Box<&Box<i32>>} (including a runtime diagram), and asks respondents to determine the number of dereference operations needed to access the inner \rs{i32}. Only \data{47\%} of respondents correctly answer three, suggesting that readers still leave with a somewhat fragile understanding of an essential concept like pointers.
    
\subsubsection{RQ2: Does our pedagogy help learners understand ownership better than before?}

We focus on the first \data{18} Inventory questions, as those questions received enough responses to make statistical inferences. Many readers don't read to the end of the book --- pre-intervention, we only collected \data{77} responses to the last 6 questions versus \data{1,120} for the first 6. This dropout rate is comparable to the 90\%+ dropout rates seen in MOOCs\,\cite{jordan2015mooc}. Additionally, participants answered the first Inventory question on average \data{4} days after answering the last comprehension question. Considerable time elapsed between reading the ownership chapter and taking the Inventory.

First, we analyze the overall Inventory scores for the $N =$ \data{177} (pre) / \data{165} (post) participants who completed the first 18 questions. The average pre-intervention score was \data{48\%} ($\sigma = \data{16\%}$). Notably, the average score on the open-ended Inventory questions in \Cref{sec:ci} was \data{41\%} (which should be more difficult than equivalent multiple-choice questions), showing that the quantitative results of the formative study reasonably generalized to a larger sample. The average post-intervention score was \data{57\%} ($\sigma = \data{15\%}$). Using a two-tailed $t$-test, the difference is statistically significant ($p < 0.001$). The normalized effect size as measured by Cohen's $d$ is $\data{0.56}$. Therefore, the pedagogy had a statistically significant positive effect (\data{+9\%}) on overall Inventory performance. Additionally, the results confirm that Inventory questions are substantially harder than the comprehension questions.

% WILL: substantially harder, but also ... predictably? Should be!

\begin{table}
\begin{tabular}{ll|rr|rr|r|r|r}
    \textbf{Task} & \textbf{Q.} & \textbf{Before} & $\boldsymbol{N}$ & \textbf{After} & $\boldsymbol{N}$ & \textbf{Effect} & $\boldsymbol{d}$ &  $\boldsymbol{p}$ \\
\midrule
\verb|make_separator| &       Q2 &   33\% &      1120 &  40\% &      660 &   +7\% & 0.14 &  \textbf{0.007} \\
\verb|make_separator| &       Q3 &   57\% &      1120 &  62\% &      660 &   +5\% & 0.11 &  \textbf{0.024} \\
\verb|get_or_default| &       Q1 &   56\% &      1120 &  66\% &      660 &  +10\% & 0.21 & \textbf{<0.001} \\
\verb|get_or_default| &       Q2 &   10\% &      1120 &  16\% &      660 &   +6\% & 0.19 & \textbf{<0.001} \\
  \verb|remove_zeros| &       Q3 &   35\% &       629 &  52\% &      470 &  +17\% & 0.34 & \textbf{<0.001} \\
       \verb|reverse| &       Q2 &   28\% &       629 &  40\% &      470 &  +13\% & 0.27 & \textbf{<0.001} \\
       \verb|reverse| &       Q3 &   21\% &       629 &  33\% &      470 &  +13\% & 0.29 & \textbf{<0.001} \\
      \verb|find_nth| &       Q1 &   86\% &       314 &  90\% &      374 &   +4\% & 0.12 &           0.099 \\
      \verb|find_nth| &       Q2 &   16\% &       314 &  23\% &      374 &   +7\% & 0.18 &  \textbf{0.018} \\
      \verb|find_nth| &       Q3 &   27\% &       316 &  34\% &      374 &   +7\% & 0.15 &  \textbf{0.041} \\
   \verb|apply_curve| &       Q2 &   41\% &       452 &  58\% &      374 &  +17\% & 0.34 & \textbf{<0.001} \\
\bottomrule
    \multicolumn{6}{r}{\textbf{Pooled significant effect:}} &
    \multicolumn{1}{r}{\textbf{+10\%}} &
    \multicolumn{1}{r}{\textbf{0.22}}
\end{tabular}
\caption{Effects of the permissions pedagogy for readers' accuracy on Ownership Inventory questions. Questions are presented in the order encountered by readers. Only effects with $p < 0.15$ are included here, with $p < 0.05$ in bold.}
\label{tab:mcq-inventory-data}
% \vspace{-1.5em}
\end{table}

Second, we analyze the intervention's effect on each Inventory question individually. The intervention had a statistically significant effect on \data{10/18} questions. \Cref{tab:mcq-inventory-data} shows the size of these effects, including almost-significant effects. Overall, the pooled significant effect was \data{10\%} or $d = \data{0.22}$ (note that the question-level effect size is smaller than the quiz-level effect size due to the higher per-question variance). Between the different types of questions, the intervention primarily affected performance on questions about undefined behavior (Q2) and fixing a type error (Q3) moreso than identifying a type error (Q1). For example with \rs{make_separator} Q2, the \data{+7\%} effect corresponds to an \data{8\%} decrease in the incorrect response of ``does not have counterexamples''. Conversely, for \rs{reverse} and \rs{apply_curve} Q2, the \data{+13\%}/\data{+17\%} effects correspond to participants answering correctly that these functions are safe and do not have counterexamples.

% \begin{wrapfigure}{r}{0.48\textwidth}
% \vspace{-0.5em}
% \begin{lstlisting}
% fn remove_zeros(v: &mut Vec<i32>) {
%   for (i, t) in v.iter().enumerate().rev() {
%     if *t == 0 {
%       v.remove(i);
%     }
%   }
% }
% \end{lstlisting}
% \vspace{-1em}
% \end{wrapfigure}

% The one negative effect occurred on \rs{remove_zeros} (reproduced on the right) Q2. Q2 asks readers to identify counterexamples that cause undefined behavior through iterator invalidation. The intervention had no effect on the percentage who incorrectly responded ``No none of these programs'' (\data{12\%}). Instead, post-intervention participants were \data{3\%} more likely to exclude a correct counterexample \rs{remove_zeros(&mut vec![5, 5, 0])}. Participants were also \data{3\%} more likely to include an incorrect counterexample \rs{remove_zeros(&mut vec![1; 100])}. These numbers are small enough to caution against a particular causal interpretation. Regardless, they show that our pedagogy was insufficient to help readers reason about this specific case of undefined behavior.

\subsection{Discussion}
The results on the comprehension questions show that our pedagogy is comprehensible to the average Rust learner. That is notable \textit{per se}, as we have no control over the learner population, many of whom  come with no experience in relevant areas like systems or functional programming. 
% Poor performance on some questions such as the triple dereference problem points to holes in the pedagogy that we will address.

The results on the Inventory questions show that the effect of our pedagogy is statistically significant with an effect size of $d = \data{0.56}$. 
% For reference, the +8\% Raft effect of  \citet{ongaro2014raft} is $d = 0.49$. 
For reference, according to the meta-analysis of education research by \citet{Hattie2008-xf}, the average effect of educational interventions on learning outcomes is $d = 0.40$. \citet[p.~17]{Hattie2008-xf} argues that ``the effect size of 0.40 sets a level where the effects of innovation enhance achievement
in such a way that we can notice real-world differences, and this should be a benchmark of such real-world change.'' Therefore, we interpret our results as saying that the new ownership pedagogy is a substantive step in the right direction. But a post-intervention average of 57\% clearly demonstrates that we have not ``closed the book'' on the challenge of teaching ownership types.

\subsection{Threats to Validity}
\label{sec:threats}

Given the large scope of this experiment, we considered several threats to validity in its design.

\subsubsection{Construct validity}

This experiment assumes that the Ownership Inventory is a valid instrument to measure a person's understanding of ownership. To that end, we designed the Inventory such that the situations reflect common ownership problems (by weighting based on StackOverflow), and such that the questions reflect each stage of reasoning about ownership (based on our formative study). However, future work should validate the extent to which performance on the Inventory correlates to performance in solving ownership problems in practice.

\subsubsection{Internal validity} 

The setting of an online textbook provides the benefit of scale, but it also poses methodological challenges due to lack of controls. One such threat is the uncontrolled quizzing environment. A reader could augment their problem-solving with external aids like a friend, a compiler, a Google search, a large language model, and so on. Participants could also be influenced by learning material outside of the book, such as the official \trpl{} or Rust-related YouTube videos. To combat this threat, we explicitly instructed participants to not use external resources while solving quiz problems, and the quiz widget takes over the browser tab while taking a quiz. Moreover, we assume that the average participant will be a good actor --- our readers are taking these quizzes for their own edification, not to get paid by us or to get a good grade. Gathering enough data should turn bad actors into noise.

Another threat is the uncontrolled assignment to experimental conditions. We chose not to perform a randomized-controlled trial for the reasons discussed in \Cref{sec:evalmethods}. However, it is possible that temporal correlations in readership could have affected our results. For example, if all the C++ engineers at one company decided to start learning Rust at the same time, then average scores would likely go up in that window of time compared to the average in the limit.

A final threat is teaching to the test. Unlike us, the \trpl{} authors were not aware of the Ownership Inventory when they wrote the book. At the extreme, if our pedagogy taught the exact answers to Inventory questions, then Inventory scores would not be a useful measure of ownership understanding.
At the same time, part of the point of our experiment is exactly to teach to the test! For example, the Inventory is intentionally designed to measure understanding of \ub{}, and in turn we intentionally designed our new pedagogy to explain \ub{}. Like any well-meaning educator, we sought a balance. The Inventory materials do not appear anywhere in the revised text. But we do, for instance, walk through an example of how iterator invalidation causes \ub{}, which is similar to the \rs{remove_zeros} problem.

\subsubsection{External validity}

Conditioned on construct and internal validity, our results should reasonably generalize to the larger population of Rust learners.  \trpl{} is the official Rust textbook for the community, so its readers should be a representative cross-section of the broader Rust ecosystem.

\section{Related Work}

In response to the reports of learners' struggles with ownership\,\cite{zeng2018,zhu2022,fulton2021,rustsurvey2020}, researchers have explored several ways to help Rust users deal with ownership. For instance, \citet{coblenz2022bronze} showed that garbage collection can help users avoid ownership issues and thereby complete a coding task more quickly. 

More directly relevant to our work are \emph{ownership visualizations}. \citet{dominik2018} and \citet{blaser2019} developed a tool that visualizes a graph of the outlives-constraints generated by the Rust compiler. They did not evaluate the human factors of their tools, and we believe their visualization would be more appropriate for aiding compiler engineers than learners.
% Our interpretation is that their visualizations could aid compiler engineers seeking to understand the fine-grained mechanisms of the borrow checker, but is probably less useful to novices seeking a viable mental model of ownership.
%
\citet{almeida2022rustviz} created RustViz, a visualization format for ownership annotations on a Rust program. In terms of pedagogy, RustViz's premise is that the key challenge with ownership is that ``the user must learn to mentally simulate the logic of the borrow checker''. Our pedagogy is based more on connecting Rust's static and dynamic semantics, which we show in our formative study is a more serious problem for Rust learners. 
% In terms of visual design, the core difference is an emphasis on lifetimes (for RustViz) versus permissions (for our diagrams). RustViz gives each path its own ``timeline'', placed to the right of the corresponding function. Our diagram only shows a path when an operation affects its permissions. 
In terms of implementation, RustViz diagrams are constructed by hand using a DSL, while we automatically generate our diagrams from the compiler. 
In terms of evaluation, Almeida et al. deployed RustViz in a classroom, finding that students responded to a Likert item that the visualizations were ``helpful in terms of improving their understanding of ownership.'' Our evaluation goes further to quantify the effect of our pedagogy on learning outcomes.

Our runtime diagram is similar to program state visualizers in prior work --- see \citet{sorva2013} for a survey. In particular, our work is similar to C runtime visualizations\,\cite{ishizue2018pvc,egan2021seec,taylor2023debugc}. In the same vein, our findings about misconceptions of undefined behavior and memory safety are consistent with prior work on teaching C. For instance, \citet{lam2022secure} found in a study of undergraduates who had taken a computer systems course that ``many students displayed little knowledge or had misunderstandings about memory and memory layout'' and would simply say ``something bad'' happens when unsafe operations occur.

Our work continues a line of CS education research about conceptual models. \citet{bayman1988models} first showed that an appropriate conceptual model for BASIC could help students ``develop fewer misconceptions [...] and perform better on transfer tests.'' \citet{dubolay1986} coined the term ``notional machine'' for conceptual models specifically of a language's dynamic semantics, which has received renewed focus in recent years\,\cite{dickson2020notional}. 
Our work differs from most research on notional machines by focusing equally on a conceptual model of \emph{static} semantics.

Our work also intersects with a line of programming language research on the human factors of type systems and functional languages. Most prior work has focused on algorithms for identifying the root cause of confusing type inference errors\,\cite{wand1986typeerrors,chitil2001error,zhang2014error}. Recent work has broadened scope to develop theories about how programmers read functional programs\,\cite{marceau2011values}, leverage the type system during development\,\cite{lubin2021fp}, and solve problems with higher-order functions\,\cite{rivera2022hof}. 
% We contribute to this body of work by both identifying learner struggles with a particular kind of type system (ownership types), and by developing a methodology for evidence-based design of pedagogy for type systems.  

This paper focuses on ownership types as they are implemented in Rust, but ownership types have taken many forms in prior work\,\cite{Clarke2013}. For instance, early systems of ownership focused on ensuring uniqueness of access to data by checking for dominance in the alias graph\,\cite{clarke1998ownership}. Later systems relaxed this constraint by permitting temporary borrowing of data, both mutably\,\cite{boyland2001burying,aldrich2002alias} and immutably\,\cite{dietl2012universe,ostlund2008own}. The connection between ownership and permissions has been well-established within formal models such as fractional permissions\,\cite{boyland2003frac} and $F_{\text{own}}$\,\cite{krishnaswami2005perm}.

\section{General Discussion}
\label{sec:conclusion}

Future programming languages will undoubtedly have increasingly complex type systems. Rust is the language \textit{du jour}, so this work focused on ownership types. But the next popular language could bring a renewed emphasis to any existing line of PL research: refinement types, session types, or even theorem proving. Effective transfer of these technologies will require pedagogies that do not expect learners to come equipped with Ph.D.-level knowledge of programming languages, mathematics, and Greek. While our immediate goal in this work was to make ownership types more understandable, our broader goal was to explore the viability of different techniques for improving PL pedagogy. In this section, we will briefly reflect on lessons learned.

First, to develop a metric for understanding of ownership, we created a concept inventory by combining data from StackOverflow with a formative study of Rust learners. StackOverflow works for popular languages like Rust, but is less useful for niche languages. Human factors research on niche languages can instead consider using telemetry from developer interactions as has been explored for Racket\,\cite{marceau2011racket} and Coq\,\cite{ringer2020replica}. The concept inventory is an idea that could easily be reused in the context of other languages. Inventories can serve as communal benchmarks for progress in education research, like how datasets of programs serve as benchmarks for performance in compiler research. 

Second, to develop a conceptual model for ownership, we carefully selected a level of abstraction that was concrete enough to explain relevant phenomena like \ub{}, while abstract enough to avoid unnecessary details. We leveraged the rich prior work on distilling the Rust type system into a small, explainable set of mechanisms, especially the Oxide\,\cite{weiss2019oxide} and Polonius\,\cite{polonius} models. However, PL research usually distills type systems to permit formal reasoning, such as a soundness proof. An open question is how to distill type systems for didactic reasoning, that is, to help learners acquire a conceptual model valid for common tasks. For example, one of our principles was that our model must be encodable in a concise visual representation, which is not a property usually expected of standard PL research. Future work can investigate the properties of semantics that make them more or less explainable.

Finally, to evaluate the efficacy of our pedagogy, we publicly deployed our textbook and compared pre/post-intervention scores on the Ownership Inventory. Collecting telemetry from quizzes in online learning resources is a readily applicable strategy for other contexts. Learners \emph{want} to take quizzes to engage with the content they are reading. Temporal A/B testing offers a lightweight method for evaluating content changes without sophisticated infrastructure. We encourage anyone interested in programming language learning to try out our methodology. To that end, we have open-sourced our frontend quiz plugin and our backend telemetry system at: \anon{\url{https://github.com/cognitive-engineering-lab}}

\begin{acks}
The authors are immensely grateful to Niko Matsakis and Amazon. They provided both the encouragement and the funding to initiate this project, and supplied additional emergency funding when our first grant application fell through because we were studying Rust instead of C++. We thank Carol Nichols for taking a leap of faith in allowing us to advertise in \trpl{}; this was essential for driving traffic to the experiment. Later parts of this work are partially supported by the US NSF under Grant No.~2319014.

%% Acks:
%% Carol, Niko, etc.
%% We are deeply grateful to Amazon for providing the
%% seed funding to get this project off the ground.
%% We are equally grateful for them providing
%% emergency funding when the NSF declined a
%% proposal on this work because the panelists
%% felt we should be studying C++ instead
%% [insert a choice quote].
%% NB: Thank NSF if they do come through later.
\end{acks}

%% Bibliography
\bibliography{bibs/wc,bibs/misc}

%%% -*-BibTeX-*-
%%% Do NOT edit. File created by BibTeX with style
%%% ACM-Reference-Format-Journals [18-Jan-2012].

\begin{thebibliography}{65}

%%% ====================================================================
%%% NOTE TO THE USER: you can override these defaults by providing
%%% customized versions of any of these macros before the \bibliography
%%% command.  Each of them MUST provide its own final punctuation,
%%% except for \shownote{}, \showDOI{}, and \showURL{}.  The latter two
%%% do not use final punctuation, in order to avoid confusing it with
%%% the Web address.
%%%
%%% To suppress output of a particular field, define its macro to expand
%%% to an empty string, or better, \unskip, like this:
%%%
%%% \newcommand{\showDOI}[1]{\unskip}   % LaTeX syntax
%%%
%%% \def \showDOI #1{\unskip}           % plain TeX syntax
%%%
%%% ====================================================================

\ifx \showCODEN    \undefined \def \showCODEN     #1{\unskip}     \fi
\ifx \showDOI      \undefined \def \showDOI       #1{#1}\fi
\ifx \showISBNx    \undefined \def \showISBNx     #1{\unskip}     \fi
\ifx \showISBNxiii \undefined \def \showISBNxiii  #1{\unskip}     \fi
\ifx \showISSN     \undefined \def \showISSN      #1{\unskip}     \fi
\ifx \showLCCN     \undefined \def \showLCCN      #1{\unskip}     \fi
\ifx \shownote     \undefined \def \shownote      #1{#1}          \fi
\ifx \showarticletitle \undefined \def \showarticletitle #1{#1}   \fi
\ifx \showURL      \undefined \def \showURL       {\relax}        \fi
% The following commands are used for tagged output and should be
% invisible to TeX
\providecommand\bibfield[2]{#2}
\providecommand\bibinfo[2]{#2}
\providecommand\natexlab[1]{#1}
\providecommand\showeprint[2][]{arXiv:#2}

\bibitem[Aldrich et~al\mbox{.}(2002)]%
        {aldrich2002alias}
\bibfield{author}{\bibinfo{person}{Jonathan Aldrich}, \bibinfo{person}{Valentin
  Kostadinov}, {and} \bibinfo{person}{Craig Chambers}.}
  \bibinfo{year}{2002}\natexlab{}.
\newblock \showarticletitle{Alias Annotations for Program Understanding}. In
  \bibinfo{booktitle}{\emph{Proceedings of the 17th ACM SIGPLAN Conference on
  Object-Oriented Programming, Systems, Languages, and Applications}} (Seattle,
  Washington, USA) \emph{(\bibinfo{series}{OOPSLA '02})}.
  \bibinfo{publisher}{Association for Computing Machinery},
  \bibinfo{address}{New York, NY, USA}, \bibinfo{pages}{311–330}.
\newblock
\showISBNx{1581134711}
\urldef\tempurl%
\url{https://doi.org/10.1145/582419.582448}
\showDOI{\tempurl}


\bibitem[Almeida et~al\mbox{.}(2022)]%
        {almeida2022rustviz}
\bibfield{author}{\bibinfo{person}{Marcelo Almeida}, \bibinfo{person}{Grant
  Cole}, \bibinfo{person}{Ke Du}, \bibinfo{person}{Gongming Luo},
  \bibinfo{person}{Shulin Pan}, \bibinfo{person}{Yu Pan}, \bibinfo{person}{Kai
  Qiu}, \bibinfo{person}{Vishnu Reddy}, \bibinfo{person}{Haochen Zhang},
  \bibinfo{person}{Yingying Zhu}, {and} \bibinfo{person}{Cyrus Omar}.}
  \bibinfo{year}{2022}\natexlab{}.
\newblock \showarticletitle{RustViz: Interactively Visualizing Ownership and
  Borrowing}. In \bibinfo{booktitle}{\emph{2022 IEEE Symposium on Visual
  Languages and Human-Centric Computing (VL/HCC)}}. \bibinfo{publisher}{IEEE},
  \bibinfo{address}{New York, USA}, \bibinfo{pages}{1--10}.
\newblock
\urldef\tempurl%
\url{https://doi.org/10.1109/VL/HCC53370.2022.9833121}
\showDOI{\tempurl}


\bibitem[Bayman and Mayer(1988)]%
        {bayman1988models}
\bibfield{author}{\bibinfo{person}{Piraye Bayman} {and}
  \bibinfo{person}{Richard~E. Mayer}.} \bibinfo{year}{1988}\natexlab{}.
\newblock \showarticletitle{Using conceptual models to teach {BASIC} computer
  programming.}
\newblock \bibinfo{journal}{\emph{Journal of Educational Psychology}}
  \bibinfo{volume}{80} (\bibinfo{year}{1988}), \bibinfo{pages}{291--298}.
\newblock
\showISSN{1939-2176(Electronic),0022-0663(Print)}
\urldef\tempurl%
\url{https://doi.org/10.1037/0022-0663.80.3.291}
\showDOI{\tempurl}
\newblock
\shownote{Place: US Publisher: American Psychological Association}.


\bibitem[Blaser(2019)]%
        {blaser2019}
\bibfield{author}{\bibinfo{person}{David Blaser}.}
  \bibinfo{year}{2019}\natexlab{}.
\newblock \emph{\bibinfo{title}{Simple Explanation of Complex Lifetime Errors
  in Rust}}.
\newblock Bachelor's Thesis. \bibinfo{school}{ETH Z\"{u}rich}.
\newblock


\bibitem[Boyland(2001)]%
        {boyland2001burying}
\bibfield{author}{\bibinfo{person}{John Boyland}.}
  \bibinfo{year}{2001}\natexlab{}.
\newblock \showarticletitle{Alias burying: Unique variables without destructive
  reads}.
\newblock \bibinfo{journal}{\emph{Software: Practice and Experience}}
  \bibinfo{volume}{31}, \bibinfo{number}{6} (\bibinfo{year}{2001}),
  \bibinfo{pages}{533--553}.
\newblock
\urldef\tempurl%
\url{https://doi.org/10.1002/spe.370}
\showDOI{\tempurl}
\showeprint{https://onlinelibrary.wiley.com/doi/pdf/10.1002/spe.370}


\bibitem[Boyland(2003)]%
        {boyland2003frac}
\bibfield{author}{\bibinfo{person}{John Boyland}.}
  \bibinfo{year}{2003}\natexlab{}.
\newblock \showarticletitle{Checking Interference with Fractional Permissions}.
  In \bibinfo{booktitle}{\emph{Static Analysis}},
  \bibfield{editor}{\bibinfo{person}{Radhia Cousot}} (Ed.).
  \bibinfo{publisher}{Springer Berlin Heidelberg}, \bibinfo{address}{Berlin,
  Heidelberg}, \bibinfo{pages}{55--72}.
\newblock
\showISBNx{978-3-540-44898-3}


\bibitem[Chitil(2001)]%
        {chitil2001error}
\bibfield{author}{\bibinfo{person}{Olaf Chitil}.}
  \bibinfo{year}{2001}\natexlab{}.
\newblock \showarticletitle{Compositional Explanation of Types and Algorithmic
  Debugging of Type Errors}. In \bibinfo{booktitle}{\emph{Proceedings of the
  Sixth ACM SIGPLAN International Conference on Functional Programming}}
  (Florence, Italy) \emph{(\bibinfo{series}{ICFP '01})}.
  \bibinfo{publisher}{Association for Computing Machinery},
  \bibinfo{address}{New York, NY, USA}, \bibinfo{pages}{193–204}.
\newblock
\showISBNx{1581134150}
\urldef\tempurl%
\url{https://doi.org/10.1145/507635.507659}
\showDOI{\tempurl}


\bibitem[Clarke et~al\mbox{.}(2013)]%
        {Clarke2013}
\bibfield{author}{\bibinfo{person}{Dave Clarke}, \bibinfo{person}{Johan
  {\"O}stlund}, \bibinfo{person}{Ilya Sergey}, {and} \bibinfo{person}{Tobias
  Wrigstad}.} \bibinfo{year}{2013}\natexlab{}.
\newblock \bibinfo{booktitle}{\emph{Ownership Types: A Survey}}.
\newblock \bibinfo{publisher}{Springer Berlin Heidelberg},
  \bibinfo{address}{Berlin, Heidelberg}, \bibinfo{pages}{15--58}.
\newblock
\showISBNx{978-3-642-36946-9}
\urldef\tempurl%
\url{https://doi.org/10.1007/978-3-642-36946-9_3}
\showDOI{\tempurl}


\bibitem[Clarke et~al\mbox{.}(1998)]%
        {clarke1998ownership}
\bibfield{author}{\bibinfo{person}{David~G. Clarke}, \bibinfo{person}{John~M.
  Potter}, {and} \bibinfo{person}{James Noble}.}
  \bibinfo{year}{1998}\natexlab{}.
\newblock \showarticletitle{Ownership Types for Flexible Alias Protection}. In
  \bibinfo{booktitle}{\emph{Proceedings of the 13th ACM SIGPLAN Conference on
  Object-Oriented Programming, Systems, Languages, and Applications}}
  (Vancouver, British Columbia, Canada) \emph{(\bibinfo{series}{OOPSLA '98})}.
  \bibinfo{publisher}{Association for Computing Machinery},
  \bibinfo{address}{New York, NY, USA}, \bibinfo{pages}{48–64}.
\newblock
\showISBNx{1581130058}
\urldef\tempurl%
\url{https://doi.org/10.1145/286936.286947}
\showDOI{\tempurl}


\bibitem[Coblenz et~al\mbox{.}(2022)]%
        {coblenz2022bronze}
\bibfield{author}{\bibinfo{person}{Michael Coblenz},
  \bibinfo{person}{Michelle~L. Mazurek}, {and} \bibinfo{person}{Michael
  Hicks}.} \bibinfo{year}{2022}\natexlab{}.
\newblock \showarticletitle{Garbage Collection Makes Rust Easier to Use: A
  Randomized Controlled Trial of the Bronze Garbage Collector}. In
  \bibinfo{booktitle}{\emph{Proceedings of the 44th International Conference on
  Software Engineering}} (Pittsburgh, Pennsylvania)
  \emph{(\bibinfo{series}{ICSE '22})}. \bibinfo{publisher}{Association for
  Computing Machinery}, \bibinfo{address}{New York, NY, USA},
  \bibinfo{pages}{1021–1032}.
\newblock
\showISBNx{9781450392211}
\urldef\tempurl%
\url{https://doi.org/10.1145/3510003.3510107}
\showDOI{\tempurl}


\bibitem[Crichton(2020)]%
        {crichton2022ownershipusability}
\bibfield{author}{\bibinfo{person}{Will Crichton}.}
  \bibinfo{year}{2020}\natexlab{}.
\newblock \showarticletitle{The Usability of Ownership}. In
  \bibinfo{booktitle}{\emph{Procedings of the 1st Workshop on Human Aspects of
  Types and Reasoning Assistants}} \emph{(\bibinfo{series}{HATRA})}.
\newblock
\showeprint{arxiv:2011.06171}


\bibitem[Dickson et~al\mbox{.}(2020)]%
        {dickson2020notional}
\bibfield{author}{\bibinfo{person}{Paul~E. Dickson}, \bibinfo{person}{Neil
  C.~C. Brown}, {and} \bibinfo{person}{Brett~A. Becker}.}
  \bibinfo{year}{2020}\natexlab{}.
\newblock \showarticletitle{Engage Against the Machine: Rise of the Notional
  Machines as Effective Pedagogical Devices}. In
  \bibinfo{booktitle}{\emph{Proceedings of the 2020 ACM Conference on
  Innovation and Technology in Computer Science Education}} (Trondheim, Norway)
  \emph{(\bibinfo{series}{ITiCSE '20})}. \bibinfo{publisher}{Association for
  Computing Machinery}, \bibinfo{address}{New York, NY, USA},
  \bibinfo{pages}{159–165}.
\newblock
\showISBNx{9781450368742}
\urldef\tempurl%
\url{https://doi.org/10.1145/3341525.3387404}
\showDOI{\tempurl}


\bibitem[Dietl et~al\mbox{.}(2012)]%
        {dietl2012universe}
\bibfield{author}{\bibinfo{person}{Werner Dietl}, \bibinfo{person}{Sophia
  Drossopoulou}, {and} \bibinfo{person}{Peter M\"{u}ller}.}
  \bibinfo{year}{2012}\natexlab{}.
\newblock \showarticletitle{Separating Ownership Topology and Encapsulation
  with Generic Universe Types}.
\newblock \bibinfo{journal}{\emph{ACM Trans. Program. Lang. Syst.}}
  \bibinfo{volume}{33}, \bibinfo{number}{6}, Article \bibinfo{articleno}{20}
  (\bibinfo{date}{jan} \bibinfo{year}{2012}), \bibinfo{numpages}{62}~pages.
\newblock
\showISSN{0164-0925}
\urldef\tempurl%
\url{https://doi.org/10.1145/2049706.2049709}
\showDOI{\tempurl}


\bibitem[Dominik(2018)]%
        {dominik2018}
\bibfield{author}{\bibinfo{person}{Dietler Dominik}.}
  \bibinfo{year}{2018}\natexlab{}.
\newblock \emph{\bibinfo{title}{Visualization of Lifetime Constraints in
  Rust}}.
\newblock Bachelor's Thesis. \bibinfo{school}{ETH Z\"{u}rich}.
\newblock


\bibitem[du~Boulay(1986)]%
        {dubolay1986}
\bibfield{author}{\bibinfo{person}{Benedict du Boulay}.}
  \bibinfo{year}{1986}\natexlab{}.
\newblock \showarticletitle{Some Difficulties of Learning to Program}.
\newblock \bibinfo{journal}{\emph{Journal of Educational Computing Research}}
  \bibinfo{volume}{2}, \bibinfo{number}{1} (\bibinfo{year}{1986}),
  \bibinfo{pages}{57--73}.
\newblock
\urldef\tempurl%
\url{https://doi.org/10.2190/3LFX-9RRF-67T8-UVK9}
\showDOI{\tempurl}


\bibitem[Dyer et~al\mbox{.}(2022)]%
        {dyer2022negative}
\bibfield{author}{\bibinfo{person}{Tristan Dyer}, \bibinfo{person}{Tim Nelson},
  \bibinfo{person}{Kathi Fisler}, {and} \bibinfo{person}{Shriram
  Krishnamurthi}.} \bibinfo{year}{2022}\natexlab{}.
\newblock \showarticletitle{Applying Cognitive Principles to Model-Finding
  Output: The Positive Value of Negative Information}.
\newblock \bibinfo{journal}{\emph{Proc. ACM Program. Lang.}}
  \bibinfo{volume}{6}, \bibinfo{number}{OOPSLA1}, Article
  \bibinfo{articleno}{79} (\bibinfo{date}{apr} \bibinfo{year}{2022}),
  \bibinfo{numpages}{29}~pages.
\newblock
\urldef\tempurl%
\url{https://doi.org/10.1145/3527323}
\showDOI{\tempurl}


\bibitem[Egan and McDonald(2021)]%
        {egan2021seec}
\bibfield{author}{\bibinfo{person}{Matthew~Heinsen Egan} {and}
  \bibinfo{person}{Chris McDonald}.} \bibinfo{year}{2021}\natexlab{}.
\newblock \showarticletitle{An evaluation of SeeC: a tool designed to assist
  novice C programmers with program understanding and debugging}.
\newblock \bibinfo{journal}{\emph{Computer Science Education}}
  \bibinfo{volume}{31}, \bibinfo{number}{3} (\bibinfo{year}{2021}),
  \bibinfo{pages}{340--373}.
\newblock
\urldef\tempurl%
\url{https://doi.org/10.1080/08993408.2020.1777034}
\showDOI{\tempurl}


\bibitem[Farghally et~al\mbox{.}(2017)]%
        {farghally2017}
\bibfield{author}{\bibinfo{person}{Mohammed~F. Farghally},
  \bibinfo{person}{Kyu~Han Koh}, \bibinfo{person}{Jeremy~V. Ernst}, {and}
  \bibinfo{person}{Clifford~A. Shaffer}.} \bibinfo{year}{2017}\natexlab{}.
\newblock \showarticletitle{Towards a Concept Inventory for Algorithm Analysis
  Topics}. In \bibinfo{booktitle}{\emph{Proceedings of the 2017 ACM SIGCSE
  Technical Symposium on Computer Science Education}} (Seattle, Washington,
  USA) \emph{(\bibinfo{series}{SIGCSE '17})}. \bibinfo{publisher}{Association
  for Computing Machinery}, \bibinfo{address}{New York, NY, USA},
  \bibinfo{pages}{207–212}.
\newblock
\showISBNx{9781450346986}
\urldef\tempurl%
\url{https://doi.org/10.1145/3017680.3017756}
\showDOI{\tempurl}


\bibitem[Fleiss et~al\mbox{.}(2013)]%
        {fleiss2013kappa}
\bibfield{author}{\bibinfo{person}{Joseph~L Fleiss}, \bibinfo{person}{Bruce
  Levin}, {and} \bibinfo{person}{Myunghee~Cho Paik}.}
  \bibinfo{year}{2013}\natexlab{}.
\newblock \bibinfo{booktitle}{\emph{Statistical Methods for Rates and
  Proportions} (\bibinfo{edition}{3} ed.)}.
\newblock \bibinfo{publisher}{Wiley-Interscience}, \bibinfo{address}{Newy
  York}.
\newblock


\bibitem[Fulton et~al\mbox{.}(2021)]%
        {fulton2021}
\bibfield{author}{\bibinfo{person}{Kelsey~R. Fulton}, \bibinfo{person}{Anna
  Chan}, \bibinfo{person}{Daniel Votipka}, \bibinfo{person}{Michael Hicks},
  {and} \bibinfo{person}{Michelle~L. Mazurek}.}
  \bibinfo{year}{2021}\natexlab{}.
\newblock \showarticletitle{Benefits and Drawbacks of Adopting a Secure
  Programming Language: Rust as a Case Study}. In
  \bibinfo{booktitle}{\emph{Seventeenth Symposium on Usable Privacy and
  Security (SOUPS 2021)}}. \bibinfo{publisher}{USENIX Association},
  \bibinfo{address}{Berkeley, CA}, \bibinfo{pages}{597--616}.
\newblock
\showISBNx{978-1-939133-25-0}
\urldef\tempurl%
\url{https://www.usenix.org/conference/soups2021/presentation/fulton}
\showURL{%
\tempurl}


\bibitem[Girard(1987)]%
        {girard1987linear}
\bibfield{author}{\bibinfo{person}{Jean-Yves Girard}.}
  \bibinfo{year}{1987}\natexlab{}.
\newblock \showarticletitle{Linear logic}.
\newblock \bibinfo{journal}{\emph{Theoretical Computer Science}}
  \bibinfo{volume}{50}, \bibinfo{number}{1} (\bibinfo{year}{1987}),
  \bibinfo{pages}{1--101}.
\newblock
\urldef\tempurl%
\url{https://doi.org/10.1016/0304-3975(87)90045-4}
\showDOI{\tempurl}


\bibitem[Grossman et~al\mbox{.}(2002)]%
        {grossman2002region}
\bibfield{author}{\bibinfo{person}{Dan Grossman}, \bibinfo{person}{Greg
  Morrisett}, \bibinfo{person}{Trevor Jim}, \bibinfo{person}{Michael Hicks},
  \bibinfo{person}{Yanling Wang}, {and} \bibinfo{person}{James Cheney}.}
  \bibinfo{year}{2002}\natexlab{}.
\newblock \showarticletitle{Region-Based Memory Management in Cyclone}. In
  \bibinfo{booktitle}{\emph{Proceedings of the ACM SIGPLAN 2002 Conference on
  Programming Language Design and Implementation}} (Berlin, Germany)
  \emph{(\bibinfo{series}{PLDI '02})}. \bibinfo{publisher}{Association for
  Computing Machinery}, \bibinfo{address}{New York, NY, USA},
  \bibinfo{pages}{282–293}.
\newblock
\showISBNx{1581134630}
\urldef\tempurl%
\url{https://doi.org/10.1145/512529.512563}
\showDOI{\tempurl}


\bibitem[{Guide to Rustc Development}(2023)]%
        {mir}
\bibfield{author}{\bibinfo{person}{{Guide to Rustc Development}}.}
  \bibinfo{year}{2023}\natexlab{}.
\newblock \bibinfo{title}{The MIR (Mid-level IR)}.
\newblock
  \bibinfo{howpublished}{\url{https://rustc-dev-guide.rust-lang.org/mir/index.html}}.
\newblock


\bibitem[Guo(2013)]%
        {guo2013}
\bibfield{author}{\bibinfo{person}{Philip~J. Guo}.}
  \bibinfo{year}{2013}\natexlab{}.
\newblock \showarticletitle{Online Python Tutor: Embeddable Web-Based Program
  Visualization for Cs Education}. In \bibinfo{booktitle}{\emph{Proceeding of
  the 44th ACM Technical Symposium on Computer Science Education}} (Denver,
  Colorado, USA) \emph{(\bibinfo{series}{SIGCSE '13})}.
  \bibinfo{publisher}{Association for Computing Machinery},
  \bibinfo{address}{New York, NY, USA}, \bibinfo{pages}{579–584}.
\newblock
\showISBNx{9781450318686}
\urldef\tempurl%
\url{https://doi.org/10.1145/2445196.2445368}
\showDOI{\tempurl}


\bibitem[Hamouda et~al\mbox{.}(2017)]%
        {hamouda2017}
\bibfield{author}{\bibinfo{person}{Sally Hamouda}, \bibinfo{person}{Stephen~H.
  Edwards}, \bibinfo{person}{Hicham~G. Elmongui}, \bibinfo{person}{Jeremy~V.
  Ernst}, {and} \bibinfo{person}{Clifford~A. Shaffer}.}
  \bibinfo{year}{2017}\natexlab{}.
\newblock \showarticletitle{A basic recursion concept inventory}.
\newblock \bibinfo{journal}{\emph{Computer Science Education}}
  \bibinfo{volume}{27}, \bibinfo{number}{2} (\bibinfo{year}{2017}),
  \bibinfo{pages}{121--148}.
\newblock
\urldef\tempurl%
\url{https://doi.org/10.1080/08993408.2017.1414728}
\showDOI{\tempurl}
\showeprint{https://doi.org/10.1080/08993408.2017.1414728}


\bibitem[Hattie(2008)]%
        {Hattie2008-xf}
\bibfield{author}{\bibinfo{person}{John Hattie}.}
  \bibinfo{year}{2008}\natexlab{}.
\newblock \bibinfo{booktitle}{\emph{Visible learning}}.
\newblock \bibinfo{publisher}{Routledge}, \bibinfo{address}{London, England}.
\newblock


\bibitem[Herman et~al\mbox{.}(2010)]%
        {herman2010}
\bibfield{author}{\bibinfo{person}{Geoffrey~L. Herman},
  \bibinfo{person}{Michael~C. Loui}, {and} \bibinfo{person}{Craig Zilles}.}
  \bibinfo{year}{2010}\natexlab{}.
\newblock \showarticletitle{Creating the Digital Logic Concept Inventory}. In
  \bibinfo{booktitle}{\emph{Proceedings of the 41st ACM Technical Symposium on
  Computer Science Education}} (Milwaukee, Wisconsin, USA)
  \emph{(\bibinfo{series}{SIGCSE '10})}. \bibinfo{publisher}{Association for
  Computing Machinery}, \bibinfo{address}{New York, NY, USA},
  \bibinfo{pages}{102–106}.
\newblock
\showISBNx{9781450300063}
\urldef\tempurl%
\url{https://doi.org/10.1145/1734263.1734298}
\showDOI{\tempurl}


\bibitem[Hestenes et~al\mbox{.}(1992)]%
        {hestenes1992fci}
\bibfield{author}{\bibinfo{person}{David Hestenes}, \bibinfo{person}{Malcolm
  Wells}, {and} \bibinfo{person}{Gregg Swackhamer}.}
  \bibinfo{year}{1992}\natexlab{}.
\newblock \showarticletitle{Force concept inventory}.
\newblock \bibinfo{journal}{\emph{The Physics Teacher}} \bibinfo{volume}{30},
  \bibinfo{number}{3} (\bibinfo{year}{1992}), \bibinfo{pages}{141--158}.
\newblock
\urldef\tempurl%
\url{https://doi.org/10.1119/1.2343497}
\showDOI{\tempurl}


\bibitem[Ishizue et~al\mbox{.}(2018)]%
        {ishizue2018pvc}
\bibfield{author}{\bibinfo{person}{Ryosuke Ishizue}, \bibinfo{person}{Kazunori
  Sakamoto}, \bibinfo{person}{Hironori Washizaki}, {and}
  \bibinfo{person}{Yoshiaki Fukazawa}.} \bibinfo{year}{2018}\natexlab{}.
\newblock \showarticletitle{PVC: Visualizing C Programs on Web Browsers for
  Novices}. In \bibinfo{booktitle}{\emph{Proceedings of the 49th ACM Technical
  Symposium on Computer Science Education}} (Baltimore, Maryland, USA)
  \emph{(\bibinfo{series}{SIGCSE '18})}. \bibinfo{publisher}{Association for
  Computing Machinery}, \bibinfo{address}{New York, NY, USA},
  \bibinfo{pages}{245–250}.
\newblock
\showISBNx{9781450351034}
\urldef\tempurl%
\url{https://doi.org/10.1145/3159450.3159566}
\showDOI{\tempurl}


\bibitem[Jordan(2015)]%
        {jordan2015mooc}
\bibfield{author}{\bibinfo{person}{Katy Jordan}.}
  \bibinfo{year}{2015}\natexlab{}.
\newblock \bibinfo{title}{MOOC Completion Rates}.
\newblock
  \bibinfo{howpublished}{\url{http://www.katyjordan.com/MOOCproject.html}}.
\newblock


\bibitem[Kaczmarczyk et~al\mbox{.}(2010)]%
        {kaczmarczyk2010}
\bibfield{author}{\bibinfo{person}{Lisa~C. Kaczmarczyk},
  \bibinfo{person}{Elizabeth~R. Petrick}, \bibinfo{person}{J.~Philip East},
  {and} \bibinfo{person}{Geoffrey~L. Herman}.} \bibinfo{year}{2010}\natexlab{}.
\newblock \showarticletitle{Identifying Student Misconceptions of Programming}.
  In \bibinfo{booktitle}{\emph{Proceedings of the 41st ACM Technical Symposium
  on Computer Science Education}} (Milwaukee, Wisconsin, USA)
  \emph{(\bibinfo{series}{SIGCSE '10})}. \bibinfo{publisher}{Association for
  Computing Machinery}, \bibinfo{address}{New York, NY, USA},
  \bibinfo{pages}{107–111}.
\newblock
\showISBNx{9781450300063}
\urldef\tempurl%
\url{https://doi.org/10.1145/1734263.1734299}
\showDOI{\tempurl}


\bibitem[Klabnik and Nichols(2022)]%
        {trpl}
\bibfield{author}{\bibinfo{person}{Steve Klabnik} {and} \bibinfo{person}{Carol
  Nichols}.} \bibinfo{year}{2022}\natexlab{}.
\newblock \bibinfo{title}{The Rust Programming Language}.
\newblock \bibinfo{howpublished}{\url{https://doc.rust-lang.org/book}}.
\newblock


\bibitem[Krishnaswami and Aldrich(2005)]%
        {krishnaswami2005perm}
\bibfield{author}{\bibinfo{person}{Neel Krishnaswami} {and}
  \bibinfo{person}{Jonathan Aldrich}.} \bibinfo{year}{2005}\natexlab{}.
\newblock \showarticletitle{Permission-Based Ownership: Encapsulating State in
  Higher-Order Typed Languages}. In \bibinfo{booktitle}{\emph{Proceedings of
  the 2005 ACM SIGPLAN Conference on Programming Language Design and
  Implementation}} (Chicago, IL, USA) \emph{(\bibinfo{series}{PLDI '05})}.
  \bibinfo{publisher}{Association for Computing Machinery},
  \bibinfo{address}{New York, NY, USA}, \bibinfo{pages}{96–106}.
\newblock
\showISBNx{1595930566}
\urldef\tempurl%
\url{https://doi.org/10.1145/1065010.1065023}
\showDOI{\tempurl}


\bibitem[Lam et~al\mbox{.}(2022)]%
        {lam2022secure}
\bibfield{author}{\bibinfo{person}{Jessica Lam}, \bibinfo{person}{Elias Fang},
  \bibinfo{person}{Majed Almansoori}, \bibinfo{person}{Rahul Chatterjee}, {and}
  \bibinfo{person}{Adalbert~Gerald Soosai~Raj}.}
  \bibinfo{year}{2022}\natexlab{}.
\newblock \showarticletitle{Identifying Gaps in the Secure Programming
  Knowledge and Skills of Students}. In \bibinfo{booktitle}{\emph{Proceedings
  of the 53rd ACM Technical Symposium on Computer Science Education - Volume
  1}} (Providence, RI, USA) \emph{(\bibinfo{series}{SIGCSE 2022})}.
  \bibinfo{publisher}{Association for Computing Machinery},
  \bibinfo{address}{New York, NY, USA}, \bibinfo{pages}{703–709}.
\newblock
\showISBNx{9781450390705}
\urldef\tempurl%
\url{https://doi.org/10.1145/3478431.3499391}
\showDOI{\tempurl}


\bibitem[Landis and Koch(1977)]%
        {landis1977kappa}
\bibfield{author}{\bibinfo{person}{J.~Richard Landis} {and}
  \bibinfo{person}{Gary~G. Koch}.} \bibinfo{year}{1977}\natexlab{}.
\newblock \showarticletitle{The Measurement of Observer Agreement for
  Categorical Data}.
\newblock \bibinfo{journal}{\emph{Biometrics}} \bibinfo{volume}{33},
  \bibinfo{number}{1} (\bibinfo{year}{1977}), \bibinfo{pages}{159--174}.
\newblock
\showISSN{0006341X, 15410420}
\urldef\tempurl%
\url{http://www.jstor.org/stable/2529310}
\showURL{%
\tempurl}


\bibitem[Lindell et~al\mbox{.}(2007)]%
        {lindell2007}
\bibfield{author}{\bibinfo{person}{Rebecca~S. Lindell},
  \bibinfo{person}{Elizabeth Peak}, {and} \bibinfo{person}{Thomas~M. Foster}.}
  \bibinfo{year}{2007}\natexlab{}.
\newblock \showarticletitle{Are They All Created Equal? A Comparison of
  Different Concept Inventory Development Methodologies}.
\newblock \bibinfo{journal}{\emph{AIP Conference Proceedings}}
  \bibinfo{volume}{883}, \bibinfo{number}{1} (\bibinfo{year}{2007}),
  \bibinfo{pages}{14--17}.
\newblock
\urldef\tempurl%
\url{https://doi.org/10.1063/1.2508680}
\showDOI{\tempurl}


\bibitem[Lubin and Chasins(2021)]%
        {lubin2021fp}
\bibfield{author}{\bibinfo{person}{Justin Lubin} {and}
  \bibinfo{person}{Sarah~E. Chasins}.} \bibinfo{year}{2021}\natexlab{}.
\newblock \showarticletitle{How Statically-Typed Functional Programmers Write
  Code}.
\newblock \bibinfo{journal}{\emph{Proc. ACM Program. Lang.}}
  \bibinfo{volume}{5}, \bibinfo{number}{OOPSLA}, Article
  \bibinfo{articleno}{155} (\bibinfo{date}{oct} \bibinfo{year}{2021}),
  \bibinfo{numpages}{30}~pages.
\newblock
\urldef\tempurl%
\url{https://doi.org/10.1145/3485532}
\showDOI{\tempurl}


\bibitem[Marceau et~al\mbox{.}(2011a)]%
        {marceau2011values}
\bibfield{author}{\bibinfo{person}{Guillaume Marceau}, \bibinfo{person}{Kathi
  Fisler}, {and} \bibinfo{person}{Shriram Krishnamurthi}.}
  \bibinfo{year}{2011}\natexlab{a}.
\newblock \showarticletitle{Do Values Grow on Trees? Expression Integrity in
  Functional Programming}. In \bibinfo{booktitle}{\emph{Proceedings of the
  Seventh International Workshop on Computing Education Research}} (Providence,
  Rhode Island, USA) \emph{(\bibinfo{series}{ICER '11})}.
  \bibinfo{publisher}{Association for Computing Machinery},
  \bibinfo{address}{New York, NY, USA}, \bibinfo{pages}{39–44}.
\newblock
\showISBNx{9781450308298}
\urldef\tempurl%
\url{https://doi.org/10.1145/2016911.2016921}
\showDOI{\tempurl}


\bibitem[Marceau et~al\mbox{.}(2011b)]%
        {marceau2011racket}
\bibfield{author}{\bibinfo{person}{Guillaume Marceau}, \bibinfo{person}{Kathi
  Fisler}, {and} \bibinfo{person}{Shriram Krishnamurthi}.}
  \bibinfo{year}{2011}\natexlab{b}.
\newblock \showarticletitle{Measuring the Effectiveness of Error Messages
  Designed for Novice Programmers}. In \bibinfo{booktitle}{\emph{Proceedings of
  the 42nd ACM Technical Symposium on Computer Science Education}} (Dallas, TX,
  USA) \emph{(\bibinfo{series}{SIGCSE '11})}. \bibinfo{publisher}{Association
  for Computing Machinery}, \bibinfo{address}{New York, NY, USA},
  \bibinfo{pages}{499–504}.
\newblock
\showISBNx{9781450305006}
\urldef\tempurl%
\url{https://doi.org/10.1145/1953163.1953308}
\showDOI{\tempurl}


\bibitem[Matsakis(2018)]%
        {polonius}
\bibfield{author}{\bibinfo{person}{Niko Matsakis}.}
  \bibinfo{year}{2018}\natexlab{}.
\newblock \bibinfo{title}{An alias-based formulation of the borrow checker}.
\newblock
\newblock
\urldef\tempurl%
\url{http://smallcultfollowing.com/babysteps/blog/2018/04/27/an-alias-based-formulation-of-the-borrow-checker}
\showURL{%
\tempurl}


\bibitem[McNamara(2021)]%
        {McNamara2021-ot}
\bibfield{author}{\bibinfo{person}{Tim McNamara}.}
  \bibinfo{year}{2021}\natexlab{}.
\newblock \bibinfo{booktitle}{\emph{{Rust in Action}}}.
\newblock \bibinfo{publisher}{Manning Publications}, \bibinfo{address}{New
  York, NY}.
\newblock


\bibitem[{MSRC Team}(2019)]%
        {msrc2019}
\bibfield{author}{\bibinfo{person}{{MSRC Team}}.}
  \bibinfo{year}{2019}\natexlab{}.
\newblock \bibinfo{title}{We need a safer systems programming language}.
\newblock
\newblock
\urldef\tempurl%
\url{https://msrc-blog.microsoft.com/2019/07/18/we-need-a-safer-systems-programming-language/}
\showURL{%
\tempurl}


\bibitem[{Niko Matsakis}(2023)]%
        {amirformality}
\bibfield{author}{\bibinfo{person}{{Niko Matsakis}}.}
  \bibinfo{year}{2023}\natexlab{}.
\newblock \bibinfo{title}{nikomatsakis/a-mir-formality: a model of MIR and the
  Rust type/trait system}.
\newblock
  \bibinfo{howpublished}{\url{https://github.com/nikomatsakis/a-mir-formality}}.
\newblock


\bibitem[Ongaro and Ousterhout(2014)]%
        {ongaro2014raft}
\bibfield{author}{\bibinfo{person}{Diego Ongaro} {and} \bibinfo{person}{John
  Ousterhout}.} \bibinfo{year}{2014}\natexlab{}.
\newblock \showarticletitle{In Search of an Understandable Consensus
  Algorithm}. In \bibinfo{booktitle}{\emph{2014 USENIX Annual Technical
  Conference (USENIX ATC 14)}}. \bibinfo{publisher}{USENIX Association},
  \bibinfo{address}{Philadelphia, PA}, \bibinfo{pages}{305--319}.
\newblock
\showISBNx{978-1-931971-10-2}
\urldef\tempurl%
\url{https://www.usenix.org/conference/atc14/technical-sessions/presentation/ongaro}
\showURL{%
\tempurl}


\bibitem[{\"O}stlund et~al\mbox{.}(2008)]%
        {ostlund2008own}
\bibfield{author}{\bibinfo{person}{Johan {\"O}stlund}, \bibinfo{person}{Tobias
  Wrigstad}, \bibinfo{person}{Dave Clarke}, {and} \bibinfo{person}{Beatrice
  {\AA}kerblom}.} \bibinfo{year}{2008}\natexlab{}.
\newblock \showarticletitle{Ownership, Uniqueness, and Immutability}. In
  \bibinfo{booktitle}{\emph{Objects, Components, Models and Patterns}},
  \bibfield{editor}{\bibinfo{person}{Richard~F. Paige} {and}
  \bibinfo{person}{Bertrand Meyer}} (Eds.). \bibinfo{publisher}{Springer Berlin
  Heidelberg}, \bibinfo{address}{Berlin, Heidelberg},
  \bibinfo{pages}{178--197}.
\newblock
\showISBNx{978-3-540-69824-1}


\bibitem[Porter et~al\mbox{.}(2019)]%
        {porter2019}
\bibfield{author}{\bibinfo{person}{Leo Porter}, \bibinfo{person}{Daniel
  Zingaro}, \bibinfo{person}{Soohyun~Nam Liao}, \bibinfo{person}{Cynthia
  Taylor}, \bibinfo{person}{Kevin~C. Webb}, \bibinfo{person}{Cynthia Lee},
  {and} \bibinfo{person}{Michael Clancy}.} \bibinfo{year}{2019}\natexlab{}.
\newblock \showarticletitle{BDSI: A Validated Concept Inventory for Basic Data
  Structures}. In \bibinfo{booktitle}{\emph{Proceedings of the 2019 ACM
  Conference on International Computing Education Research}} (Toronto ON,
  Canada) \emph{(\bibinfo{series}{ICER '19})}. \bibinfo{publisher}{Association
  for Computing Machinery}, \bibinfo{address}{New York, NY, USA},
  \bibinfo{pages}{111–119}.
\newblock
\showISBNx{9781450361859}
\urldef\tempurl%
\url{https://doi.org/10.1145/3291279.3339404}
\showDOI{\tempurl}


\bibitem[Poulsen et~al\mbox{.}(2021)]%
        {poulsen2021}
\bibfield{author}{\bibinfo{person}{Seth Poulsen}, \bibinfo{person}{Geoffrey~L.
  Herman}, \bibinfo{person}{Peter A.~H. Peterson}, \bibinfo{person}{Enis
  Golaszewski}, \bibinfo{person}{Akshita Gorti}, \bibinfo{person}{Linda Oliva},
  \bibinfo{person}{Travis Scheponik}, {and} \bibinfo{person}{Alan~T. Sherman}.}
  \bibinfo{year}{2021}\natexlab{}.
\newblock \showarticletitle{Psychometric Evaluation of the Cybersecurity
  Concept Inventory}.
\newblock \bibinfo{journal}{\emph{ACM Trans. Comput. Educ.}}
  \bibinfo{volume}{22}, \bibinfo{number}{1}, Article \bibinfo{articleno}{6}
  (\bibinfo{date}{oct} \bibinfo{year}{2021}), \bibinfo{numpages}{18}~pages.
\newblock
\urldef\tempurl%
\url{https://doi.org/10.1145/3451346}
\showDOI{\tempurl}


\bibitem[{Ralf Jung}(2023)]%
        {minirust}
\bibfield{author}{\bibinfo{person}{{Ralf Jung}}.}
  \bibinfo{year}{2023}\natexlab{}.
\newblock \bibinfo{title}{RalfJung/minirust: A precise specification for "Rust
  lite / MIR plus"}.
\newblock \bibinfo{howpublished}{\url{https://github.com/RalfJung/minirust}}.
\newblock


\bibitem[Ringer et~al\mbox{.}(2020)]%
        {ringer2020replica}
\bibfield{author}{\bibinfo{person}{Talia Ringer}, \bibinfo{person}{Alex
  Sanchez-Stern}, \bibinfo{person}{Dan Grossman}, {and} \bibinfo{person}{Sorin
  Lerner}.} \bibinfo{year}{2020}\natexlab{}.
\newblock \showarticletitle{REPLica: REPL Instrumentation for Coq Analysis}. In
  \bibinfo{booktitle}{\emph{Proceedings of the 9th ACM SIGPLAN International
  Conference on Certified Programs and Proofs}} (New Orleans, LA, USA)
  \emph{(\bibinfo{series}{CPP 2020})}. \bibinfo{publisher}{Association for
  Computing Machinery}, \bibinfo{address}{New York, NY, USA},
  \bibinfo{pages}{99–113}.
\newblock
\showISBNx{9781450370974}
\urldef\tempurl%
\url{https://doi.org/10.1145/3372885.3373823}
\showDOI{\tempurl}


\bibitem[Rivera and Krishnamurthi(2022)]%
        {rivera2022hof}
\bibfield{author}{\bibinfo{person}{Elijah Rivera} {and}
  \bibinfo{person}{Shriram Krishnamurthi}.} \bibinfo{year}{2022}\natexlab{}.
\newblock \showarticletitle{Structural versus Pipeline Composition of
  Higher-Order Functions (Experience Report)}.
\newblock \bibinfo{journal}{\emph{Proc. ACM Program. Lang.}}
  \bibinfo{volume}{6}, \bibinfo{number}{ICFP}, Article \bibinfo{articleno}{102}
  (\bibinfo{date}{aug} \bibinfo{year}{2022}), \bibinfo{numpages}{14}~pages.
\newblock
\urldef\tempurl%
\url{https://doi.org/10.1145/3547633}
\showDOI{\tempurl}


\bibitem[Sorva et~al\mbox{.}(2013)]%
        {sorva2013}
\bibfield{author}{\bibinfo{person}{Juha Sorva}, \bibinfo{person}{Ville
  Karavirta}, {and} \bibinfo{person}{Lauri Malmi}.}
  \bibinfo{year}{2013}\natexlab{}.
\newblock \showarticletitle{A Review of Generic Program Visualization Systems
  for Introductory Programming Education}.
\newblock \bibinfo{journal}{\emph{ACM Trans. Comput. Educ.}}
  \bibinfo{volume}{13}, \bibinfo{number}{4}, Article \bibinfo{articleno}{15}
  (\bibinfo{date}{nov} \bibinfo{year}{2013}), \bibinfo{numpages}{64}~pages.
\newblock
\urldef\tempurl%
\url{https://doi.org/10.1145/2490822}
\showDOI{\tempurl}


\bibitem[Taylor et~al\mbox{.}(2023)]%
        {taylor2023debugc}
\bibfield{author}{\bibinfo{person}{Andrew Taylor}, \bibinfo{person}{Jake
  Renzella}, {and} \bibinfo{person}{Alexandra Vassar}.}
  \bibinfo{year}{2023}\natexlab{}.
\newblock \showarticletitle{Foundations First: Improving C's Viability in
  Introductory Programming Courses with the Debugging C Compiler}. In
  \bibinfo{booktitle}{\emph{Proceedings of the 54th ACM Technical Symposium on
  Computer Science Education V. 1}} (Toronto ON, Canada)
  \emph{(\bibinfo{series}{SIGCSE 2023})}. \bibinfo{publisher}{Association for
  Computing Machinery}, \bibinfo{address}{New York, NY, USA},
  \bibinfo{pages}{346–352}.
\newblock
\showISBNx{9781450394314}


\bibitem[{The Rust Programming Language}(2023a)]%
        {miri}
\bibfield{author}{\bibinfo{person}{{The Rust Programming Language}}.}
  \bibinfo{year}{2023}\natexlab{a}.
\newblock \bibinfo{title}{rust-lang/miri: An interpreter for {Rust's} mid-level
  intermediate representation}.
\newblock \bibinfo{howpublished}{\url{https://github.com/rust-lang/miri/}}.
\newblock


\bibitem[{The Rust Programming Language}(2023b)]%
        {polonius-src}
\bibfield{author}{\bibinfo{person}{{The Rust Programming Language}}.}
  \bibinfo{year}{2023}\natexlab{b}.
\newblock \bibinfo{title}{{rust-lang/polonius: Defines the rust borrow
  checker}}.
\newblock \bibinfo{howpublished}{\url{https://github.com/rust-lang/polonius}}.
\newblock


\bibitem[{The Rust Survey Team}(2020)]%
        {rustsurvey2020}
\bibfield{author}{\bibinfo{person}{{The Rust Survey Team}}.}
  \bibinfo{year}{2020}\natexlab{}.
\newblock \bibinfo{title}{Rust Survey 2020 Results}.
\newblock
  \bibinfo{howpublished}{\url{https://blog.rust-lang.org/2020/12/16/rust-survey-2020.html}}.
\newblock


\bibitem[Vander~Stoep(2022)]%
        {android2022}
\bibfield{author}{\bibinfo{person}{Jeffrey Vander~Stoep}.}
  \bibinfo{year}{2022}\natexlab{}.
\newblock \bibinfo{title}{Memory Safe Languages in Android 13}.
\newblock
  \bibinfo{howpublished}{\url{https://security.googleblog.com/2022/12/memory-safe-languages-in-android-13.html}}.
\newblock


\bibitem[Wand(1986)]%
        {wand1986typeerrors}
\bibfield{author}{\bibinfo{person}{Mitchell Wand}.}
  \bibinfo{year}{1986}\natexlab{}.
\newblock \showarticletitle{Finding the Source of Type Errors}. In
  \bibinfo{booktitle}{\emph{Proceedings of the 13th ACM SIGACT-SIGPLAN
  Symposium on Principles of Programming Languages}} (St. Petersburg Beach,
  Florida) \emph{(\bibinfo{series}{POPL '86})}. \bibinfo{publisher}{Association
  for Computing Machinery}, \bibinfo{address}{New York, NY, USA},
  \bibinfo{pages}{38–43}.
\newblock
\showISBNx{9781450373470}
\urldef\tempurl%
\url{https://doi.org/10.1145/512644.512648}
\showDOI{\tempurl}


\bibitem[Webb and Taylor(2014)]%
        {webb2014}
\bibfield{author}{\bibinfo{person}{Kevin~C. Webb} {and}
  \bibinfo{person}{Cynthia Taylor}.} \bibinfo{year}{2014}\natexlab{}.
\newblock \showarticletitle{Developing a Pre- and Post-Course Concept Inventory
  to Gauge Operating Systems Learning}. In
  \bibinfo{booktitle}{\emph{Proceedings of the 45th ACM Technical Symposium on
  Computer Science Education}} (Atlanta, Georgia, USA)
  \emph{(\bibinfo{series}{SIGCSE '14})}. \bibinfo{publisher}{Association for
  Computing Machinery}, \bibinfo{address}{New York, NY, USA},
  \bibinfo{pages}{103–108}.
\newblock
\showISBNx{9781450326056}
\urldef\tempurl%
\url{https://doi.org/10.1145/2538862.2538886}
\showDOI{\tempurl}


\bibitem[Weiss et~al\mbox{.}(2021)]%
        {weiss2019oxide}
\bibfield{author}{\bibinfo{person}{Aaron Weiss}, \bibinfo{person}{Olek
  Gierczak}, \bibinfo{person}{Daniel Patterson}, {and} \bibinfo{person}{Amal
  Ahmed}.} \bibinfo{year}{2021}\natexlab{}.
\newblock \bibinfo{title}{Oxide: The Essence of {Rust}}.
\newblock
\newblock
\showeprint{arXiv:1903.00982}


\bibitem[Wittie et~al\mbox{.}(2017)]%
        {wittie2017}
\bibfield{author}{\bibinfo{person}{Lea Wittie}, \bibinfo{person}{Anastasia
  Kurdia}, {and} \bibinfo{person}{Meriel Huggard}.}
  \bibinfo{year}{2017}\natexlab{}.
\newblock \showarticletitle{Developing a concept inventory for computer science
  2}. In \bibinfo{booktitle}{\emph{2017 IEEE Frontiers in Education Conference
  (FIE)}}. \bibinfo{publisher}{IEEE}, \bibinfo{address}{New York, USA},
  \bibinfo{pages}{1--4}.
\newblock
\urldef\tempurl%
\url{https://doi.org/10.1109/FIE.2017.8190459}
\showDOI{\tempurl}


\bibitem[Wolverson(2021)]%
        {Wolverson2021-vz}
\bibfield{author}{\bibinfo{person}{Herbert Wolverson}.}
  \bibinfo{year}{2021}\natexlab{}.
\newblock \bibinfo{booktitle}{\emph{Hands-on rust}}.
\newblock \bibinfo{publisher}{Pragmatic Programmers},
  \bibinfo{address}{Raleigh, NC}.
\newblock


\bibitem[Yanovski et~al\mbox{.}(2021)]%
        {yanovski2021ghost}
\bibfield{author}{\bibinfo{person}{Joshua Yanovski}, \bibinfo{person}{Hoang-Hai
  Dang}, \bibinfo{person}{Ralf Jung}, {and} \bibinfo{person}{Derek Dreyer}.}
  \bibinfo{year}{2021}\natexlab{}.
\newblock \showarticletitle{GhostCell: Separating Permissions from Data in
  Rust}.
\newblock \bibinfo{journal}{\emph{Proc. ACM Program. Lang.}}
  \bibinfo{volume}{5}, \bibinfo{number}{ICFP}, Article \bibinfo{articleno}{92}
  (\bibinfo{date}{aug} \bibinfo{year}{2021}), \bibinfo{numpages}{30}~pages.
\newblock
\urldef\tempurl%
\url{https://doi.org/10.1145/3473597}
\showDOI{\tempurl}


\bibitem[Zeng and Crichton(2019)]%
        {zeng2018}
\bibfield{author}{\bibinfo{person}{Anna Zeng} {and} \bibinfo{person}{Will
  Crichton}.} \bibinfo{year}{2019}\natexlab{}.
\newblock \showarticletitle{{Identifying Barriers to Adoption for Rust through
  Online Discourse}}. In \bibinfo{booktitle}{\emph{9th Workshop on Evaluation
  and Usability of Programming Languages and Tools (PLATEAU 2018)}}
  \emph{(\bibinfo{series}{OpenAccess Series in Informatics (OASIcs)},
  Vol.~\bibinfo{volume}{67})}, \bibfield{editor}{\bibinfo{person}{Titus Barik},
  \bibinfo{person}{Joshua Sunshine}, {and} \bibinfo{person}{Sarah Chasins}}
  (Eds.). \bibinfo{publisher}{Schloss Dagstuhl--Leibniz-Zentrum fuer
  Informatik}, \bibinfo{address}{Dagstuhl, Germany}, \bibinfo{pages}{5:1--5:6}.
\newblock
\showISBNx{978-3-95977-091-0}
\showISSN{2190-6807}
\urldef\tempurl%
\url{https://doi.org/10.4230/OASIcs.PLATEAU.2018.5}
\showDOI{\tempurl}


\bibitem[Zhang and Myers(2014)]%
        {zhang2014error}
\bibfield{author}{\bibinfo{person}{Danfeng Zhang} {and}
  \bibinfo{person}{Andrew~C. Myers}.} \bibinfo{year}{2014}\natexlab{}.
\newblock \showarticletitle{Toward General Diagnosis of Static Errors}. In
  \bibinfo{booktitle}{\emph{Proceedings of the 41st ACM SIGPLAN-SIGACT
  Symposium on Principles of Programming Languages}} (San Diego, California,
  USA) \emph{(\bibinfo{series}{POPL '14})}. \bibinfo{publisher}{Association for
  Computing Machinery}, \bibinfo{address}{New York, NY, USA},
  \bibinfo{pages}{569–581}.
\newblock
\showISBNx{9781450325448}
\urldef\tempurl%
\url{https://doi.org/10.1145/2535838.2535870}
\showDOI{\tempurl}


\bibitem[Zhu et~al\mbox{.}(2022)]%
        {zhu2022}
\bibfield{author}{\bibinfo{person}{Shuofei Zhu}, \bibinfo{person}{Ziyi Zhang},
  \bibinfo{person}{Boqin Qin}, \bibinfo{person}{Aiping Xiong}, {and}
  \bibinfo{person}{Linhai Song}.} \bibinfo{year}{2022}\natexlab{}.
\newblock \showarticletitle{Learning and Programming Challenges of Rust: A
  Mixed-Methods Study}. In \bibinfo{booktitle}{\emph{Proceedings of the 44th
  International Conference on Software Engineering}} (Pittsburgh, Pennsylvania)
  \emph{(\bibinfo{series}{ICSE '22})}. \bibinfo{publisher}{Association for
  Computing Machinery}, \bibinfo{address}{New York, NY, USA},
  \bibinfo{pages}{1269–1281}.
\newblock
\showISBNx{9781450392211}
\urldef\tempurl%
\url{https://doi.org/10.1145/3510003.3510164}
\showDOI{\tempurl}


\end{thebibliography}

%% Appendix
\newpage
\appendix
\section{Appendix}

 \subsection{Ownership Inventory Snippets}
\label{sec:all-inventory-snippets}

\begin{description}
\item[\texttt{make\_separator}] \

\begin{lstlisting}
/// Makes a string to separate lines of text, 
/// returning a default if the provided string is blank
fn make_separator(user_str: &str) -> &str {
  if user_str == "" {
    let default = "=".repeat(10);
    &default
  } else {
    user_str
  }
}
\end{lstlisting}

\item[\texttt{get\_or\_default}] \

\begin{lstlisting}
/// Gets the string out of an option if it exists,
/// returning a default otherwise
fn get_or_default(arg: &Option<String>) -> String {
  if arg.is_none() {
      return String::new();
  }
  let s = arg.unwrap();
  s.clone()
}
\end{lstlisting}

\item[\texttt{find\_nth}] \

\begin{lstlisting}
/// Returns the n-th largest element in a slice
fn find_nth<T: Ord + Clone>(elems: &[T], n: usize) -> T {
  elems.sort();
  let t = &elems[n];
  return t.clone();
}    
\end{lstlisting}

\item[\texttt{remove\_zeros}] \

\begin{lstlisting}
/// Removes all the zeros in-place from a vector of integers.
fn remove_zeros(v: &mut Vec<i32>) {
  for (i, t) in v.iter().enumerate().rev() {
    if *t == 0 {
      v.remove(i);
    }
  }
}    
\end{lstlisting}

\item[\texttt{get\_curve}] \

\begin{lstlisting}
struct TestResult {
  /// Student's scores on a test
  scores: Vec<usize>,

  /// A possible value to curve all sores
  curve: Option<usize>
}
impl TestResult {  
  pub fn get_curve(&self) -> &Option<usize> { 
    &self.curve 
  }

  /// If there is a curve, then increments all 
  /// scores by the curve
  pub fn apply_curve(&mut self) {
    if let Some(curve) = self.get_curve() {
      for score in self.scores.iter_mut() {
        *score += *curve;
      }
    }
  }
}    
\end{lstlisting}

\item[\texttt{reverse}] \

\begin{lstlisting}
/// Reverses the elements of a vector in-place
fn reverse(v: &mut Vec<i32>) {
  let n = v.len();
  for i in 0 .. n / 2 {
    std::mem::swap(&mut v[i], &mut v[n - i - 1]);
  }
}    
\end{lstlisting}

\item[\texttt{concat\_all}] \

\begin{lstlisting}
/// Adds the string `s` to all elements of 
/// the input iterator
fn concat_all(
  iter: impl Iterator<Item = String>,
  s: &str
) -> impl Iterator<Item = String> {
  iter.map(move |s2| s2 + s)
}
\end{lstlisting}

\item[\texttt{add\_displayable}] \

\begin{lstlisting}
/// Adds a Display-able object into a vector of 
/// Display trait objects
use std::fmt::Display;
fn add_displayable<T: Display>(
  v: &mut Vec<Box<dyn Display>>, 
  t: T
) {
  v.push(Box::new(t));
}    
\end{lstlisting}
\end{description}

\newpage
\subsection{Permissions Model Proofs}
\label{sec:proofs}

\begin{theorem}
$\accesserror{G} \vdash \permfail{G}$
\end{theorem}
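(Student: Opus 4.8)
The plan is to prove the entailment by a case analysis on the final rule used to derive $\accesserror{G}$; no genuine induction is needed, because the access-error rules bottom out in the shared primitive judgments ($\liveat{\ell}{I}$, $\readat{p}{I}$, $\writtenat{p}{I}$, $\movedat{p}{I}$, $\movedbefore{p}{I}$, and $p \conflicts p'$) that both models import unchanged. In every case the goal is to produce a single application of \cref{tr:p-fail}, so I must name a path $p$, a permission $c$, and an instruction $I$, and then build $\needsat{p}{c}{I}$ and $\missingat{p}{c}{I}$ from the very premises already supplied by the Polonius derivation. The decisive observation is that these premises transfer verbatim, so each case is a one-line rearrangement rather than new reasoning.

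Since $\accesserror{G}$ is concluded only by \cref{tr:b-borrow-conflict} or \cref{tr:b-move-conflict}, there are two top-level cases. For \cref{tr:b-borrow-conflict} the premises are a live loan $\liveat{\borrow{p'}{\omega'}}{I}$ together with an invalidation $\invalidat{\borrow{p'}{\omega'}}{I}$, and I split on how the invalidation was derived. If it came from \cref{tr:b-read-invalid} then $\omega' = \uniq$ and the accessing operation is $\readat{p}{I}$ with $p \conflicts p'$; I take $c = \rpm$, obtaining $\needsat{p}{\rpm}{I}$ from \cref{tr:p-needs-read} and $\missingat{p}{\rpm}{I}$ from \cref{tr:p-missing-read}, which is exactly the rearrangement already displayed in the excerpt. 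If it came from \cref{tr:b-write-invalid} then the operation is $\writtenat{p}{I}$ with $p \conflicts p'$ for an arbitrary $\omega'$; I take $c = \wpm$, using \cref{tr:p-needs-write} and then \cref{tr:p-missing-write-own} with its set-valued conclusion instantiated to $\wpm$. If it came from \cref{tr:b-move-invalid} then the operation is $\movedat{p}{I}$; I take $c = \opm$ via \cref{tr:p-needs-own} and \cref{tr:p-missing-write-own} instantiated to $\opm$.

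For \cref{tr:b-move-conflict} the premises are $\movedbefore{p}{I}$ and $\readat{p}{I}$; I take $c = \rpm$, deriving $\needsat{p}{\rpm}{I}$ by \cref{tr:p-needs-read} and $\missingat{p}{\rpm}{I}$ by \cref{tr:p-moved-no-permissions} with its $\{\rpm,\wpm,\opm\}$ conclusion specialized to $\rpm$. In each of the four leaves I finish with a single instance of \cref{tr:p-fail} on the common $p$, $c$, and $I$.

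Because the correspondence is so tight, I do not expect a deep obstacle; the real work is bookkeeping that the cases line up exactly. Three points deserve care. First, the loan qualifier must match: \cref{tr:b-read-invalid} (and \cref{tr:p-missing-read}) are specifically about unique loans, whereas the write- and move-invalidations hold for arbitrary $\omega'$ and must be paired with \cref{tr:p-missing-write-own}, which is likewise stated for arbitrary $\omega$. Second, the needed and missing permissions must name the \emph{same} accessed path $p$ (not the loan path $p'$) and the same instruction $I$, so that \cref{tr:p-fail} applies with no residual side condition. Third, the Polonius invalidation rules name the loan path $p$ and the accessing path $p'$, the reverse of the permission rules; aligning them --- as the worked example in the excerpt silently does --- relies on treating $\conflicts$ as symmetric, which holds since it denotes overlap of two paths in memory.
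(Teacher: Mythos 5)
Your proof is correct and follows essentially the same route as the paper's: a case split on the two $\msfb{access\text{-}error}$ rules, a sub-split on the three $\msfb{invalidated\,at}$ derivations, and in each leaf a single application of the permission-failure rule with the needed/missing judgments read off directly from the Polonius premises. Your explicit remark that aligning the loan path and the accessed path relies on the symmetry of the conflicts relation is a point the paper's derivations use only silently, so if anything you are slightly more careful.
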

\begin{proof}
By cases on the derivation of $\accesserror{G}$.

\begin{description}
\item[Case: \cref{tr:b-borrow-conflict}:] Let $p, \omega, I$ such that $\liveat{\borrow{p}{\omega}}{I}$ and $\invalidat{\borrow{p}{\omega}}{I}$. Case on the derivation of $\msf{invalidated~at}$.

\begin{description}
\item[Sub-case: \cref{tr:b-read-invalid}:] Let $p'$ such that $\readat{p'}{I}$ and $p \conflicts p'$. Assume that $\omega = \uniq$. Then we can derive $\permfail{G}$:

$$
\infer*[right=(\cref{tr:p-fail})]
    {\infer*[right=(\cref{tr:p-needs-read})]
        {\readat{p'}{I}}
        {\needsat{p'}{\rpm}{I}} \\    
     \infer*[right=(\cref{tr:p-missing-read})]
        {\liveat{\borrow{p}{\uniq}}{I} \\ 
         p \conflicts p'}
        {\missingat{p'}{\rpm}{I}}}
    {\permfail{G}}
$$

\item[Sub-case: \cref{tr:b-write-invalid}:] Let $p'$ such that $\writtenat{p'}{I}$ and $p \conflicts p'$.  Then we can derive $\permfail{G}$:

$$
\infer*[right=(\cref{tr:p-fail})]
    {\infer*[right=(\cref{tr:p-needs-write})]
        {\writtenat{p'}{I}}
        {\needsat{p'}{\wpm}{I}} \\ 
     \infer*[right=(\cref{tr:p-missing-write-own})]
        {\liveat{\borrow{p}{\omega}}{I} \\ 
         p \conflicts p'}
        {\missingat{p'}{\wpm}{I}}}
    {\permfail{G}}    
$$

\item[Sub-case: \cref{tr:b-move-invalid}:] Let $p'$ such that $\movedat{p'}{I}$ and $p \conflicts p'$. Then we can derive $\permfail{G}$:

$$
\infer*[right=(\cref{tr:p-fail})]
    {\infer*[right=(\cref{tr:p-needs-own})]
        {\movedat{p'}{I}}
        {\needsat{p'}{\opm}{I}} \\ 
     \infer*[right=(\cref{tr:p-missing-write-own})]
        {\liveat{\borrow{p}{\omega}}{I} \\ 
         p \conflicts p'}
        {\missingat{p'}{\opm}{I}}}
    {\permfail{G}}    
$$

\end{description}

\item[Case: \cref{tr:b-move-conflict}:] Let $p, I$ such that $\movedbefore{p}{I}$ and $\readat{p}{I}$. Then we can derive $\permfail{G}$:

$$
\infer*[right=(\cref{tr:p-fail})]
    {\infer*[right=(\cref{tr:p-needs-read})]
        {\readat{p}{I}}
        {\needsat{p}{\rpm}{I}} \\
    \infer*[right=(\cref{tr:p-moved-no-permissions})]
        {\movedbefore{p}{I}}
        {\missingat{p}{\rpm}{I}}}
    {\permfail{G}}
$$
\end{description}
\end{proof}

\newpage
\subsection{Comprehension Questions}
\label{sec:allcomprehensionquestions}

\begin{description}
\item[Difference of stack and heap] \

\begin{tightexcerpt}
Which of the following best describes the difference between the stack and the heap?
\begin{qparachoices}
    \rightanswer The stack holds data associated with a specific function, while the heap holds data that can outlive a function.
    \wronganswer The stack can hold pointers to data stored on the heap, while the heap only holds data without pointers.
    \wronganswer The stack holds immutable data, while the heap holds mutable data.
    \wronganswer The stack holds copyable data, while the heap holds uncopyable data.
\end{qparachoices}
\end{tightexcerpt}

\item[Aliasing in runtime diagram] \

\begin{tightexcerpt}
Consider the execution of the following snippet, with the final state shown:

\includegraphics[width=0.3\linewidth]{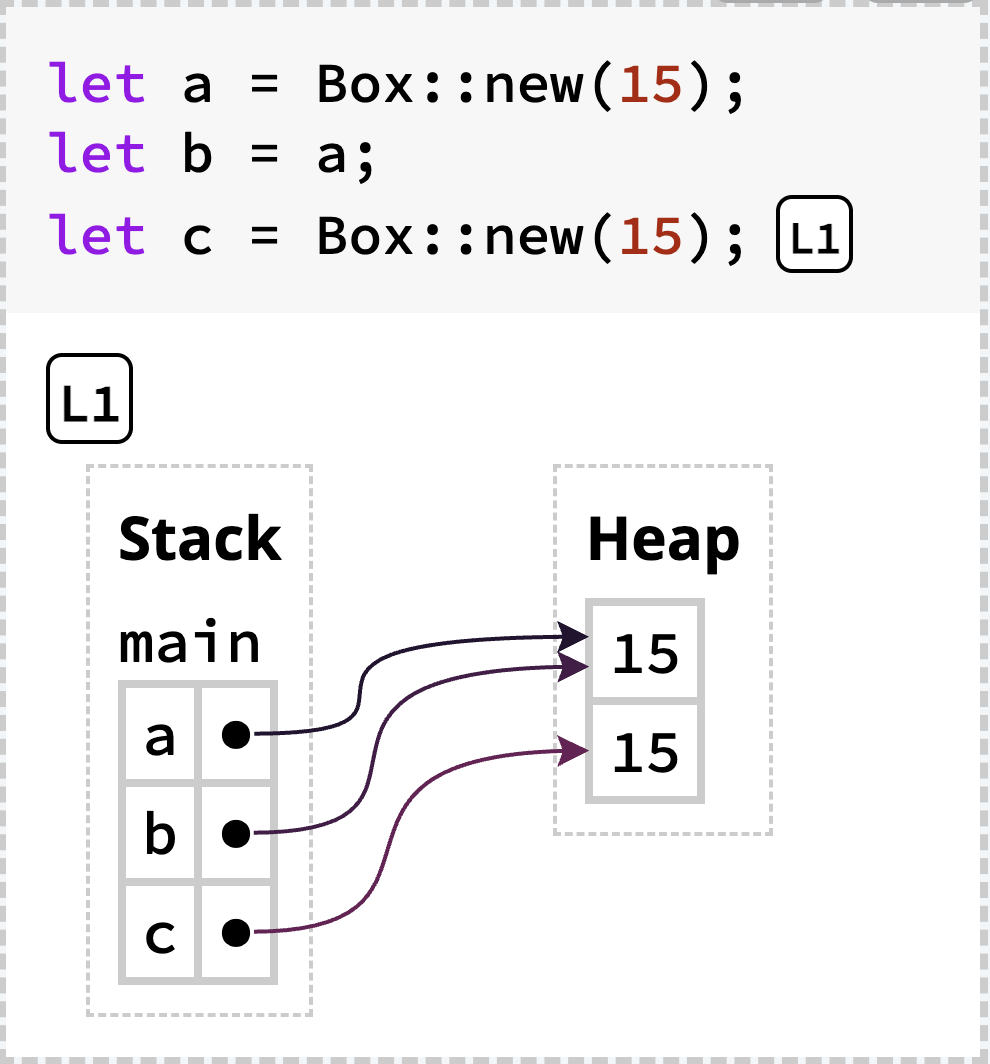}

In the final state, how many copies of the number \rs{15} live anywhere in memory? Write your answer as a digit, such as 0 or 1.
\begin{qparachoices}
\rightanswer 2
\end{qparachoices}
\end{tightexcerpt}

\item[Moves in runtime diagram] \

\begin{tightexcerpt}
Consider the execution of the following snippet, with an intermediate state shown:

\includegraphics[width=0.48\linewidth]{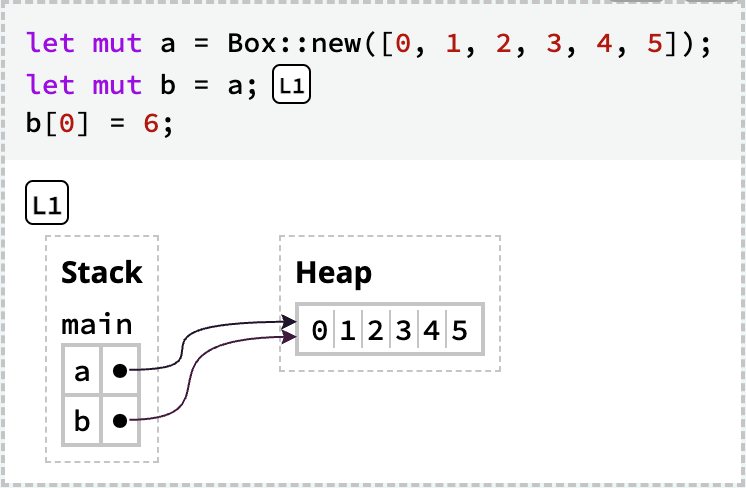}

In the final state of the snippet, what is the value of \rs{a[0]}? Write your answer as a digit, e.g. 0 or 1.
\begin{qparachoices}
\rightanswer 6
\end{qparachoices}
\end{tightexcerpt}

\newpage
\item[Defined vs. undefined behavior] \

\begin{tightexcerpt}
Which of the following is NOT a kind of undefined behavior?
\begin{qparachoices}
    \rightanswer Having a pointer to freed memory in a stack frame
    \wronganswer Using a pointer that points to freed memory
    \wronganswer Freeing the same memory a second time
    \wronganswer Using a non-boolean value as an `if` condition
\end{qparachoices}
\end{tightexcerpt}

\item[Compiler error due to move] \

\begin{tightexcerpt}
Determine whether the program will pass the compiler. If it passes, write the expected output of the program if it were executed.

\begin{lstlisting}
fn add_suffix(mut s: String) -> String {
  s.push_str(" world");
  s
}

fn main() {
  let s = String::from("hello");
  let s2 = add_suffix(s);
  println!("{}", s2);
}
\end{lstlisting}

\begin{qparachoices}
    \rightanswer It does compile, with output ``hello world''
    \wronganswer It does not compile
\end{qparachoices}
\end{tightexcerpt}

\newpage
\item[Dereferencing multiple layers of indirection] \

\begin{tightexcerpt}
Consider the following program, showing the state of memory after the last line:

\includegraphics[width=0.4\textwidth]{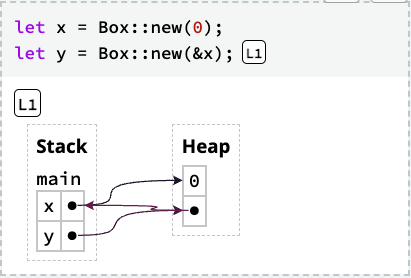}

If you wanted to copy out the number \rs{0} through \rs{y}, how many dereferences would you need to use? Write your answer as a digit. For example, if the correct expression is \rs{*y}, then the answer is 1. 

\begin{qparachoices}
    \rightanswer 3
\end{qparachoices}
\end{tightexcerpt}

\item[How moves affect deallocation] \

\begin{tightexcerpt}
Consider the following program, showing the state of memory after the last line:

\includegraphics[width=0.4\textwidth]{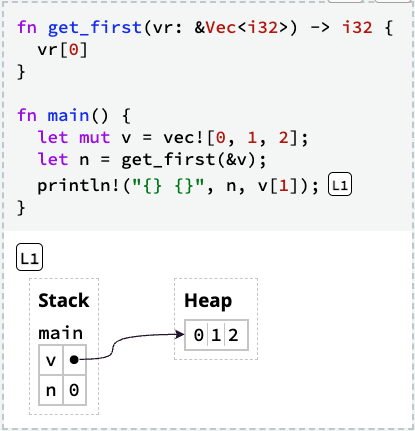}

Which of the following best explains why \rs{v} is not deallocated after calling \rs{get_first}?

\begin{qparachoices}
    \rightanswer \rs{vr} is a reference which does not own the vector it points to
    \wronganswer \rs{vr} is not mutated within \rs{get_first}
    \wronganswer \rs{get_first} returns a value of type \rs{i32}, not the vector itself
    \wronganswer \rs{v} is used after calling \rs{get_first} in the \rs{println}
\end{qparachoices}
\end{tightexcerpt}

\newpage
\item[Analysis state in permissions diagram] \

\begin{tightexcerpt}
Consider the permissions in the following program:

\includegraphics[width=0.6\textwidth]{figs/simple-eval-question.png}

At the point marked \rs{/* here */}, what are the permissions on the path \rs{s}? Select each permission below, or select ``no permissions'' if the path has no permissions.
\begin{qchoices}
\wronganswer $\rpm$
\wronganswer $\opm$
\wronganswer $\wpm$
\rightanswer No permissions
\end{qchoices}
\end{tightexcerpt}

\newpage
\item[Why borrows change permissions] \

\begin{tightexcerpt}
Consider the permissions in the following program:

\includegraphics[width=0.7\textwidth]{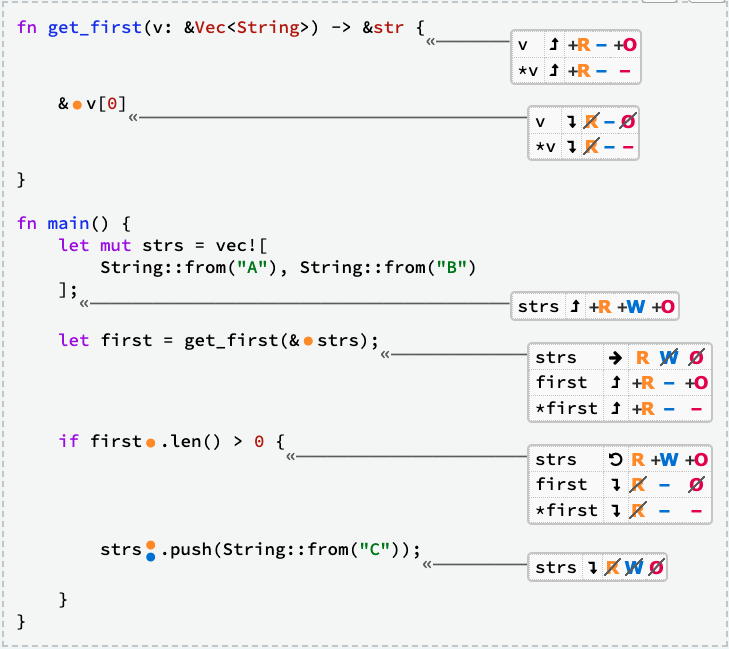}

Which of the following best explains why \rs{strs} loses and regains write permissions?
\begin{qparachoices}
\rightanswer \rs{get_first} returns an immutable reference to data within \rs{strs}, so \rs{strs} is not writable while \rs{first} is live
\wronganswer \rs{strs} is not writable while the immutable reference \rs{&strs} passed to \rs{get_first} is live
\wronganswer \rs{strs} does not need write permissions until the \rs{strs.push(..)} operation, so it only regains write permissions at that statement
\wronganswer Because \rs{first} refers to \rs{strs}, then \rs{strs} can only be mutated within a nested scope like the if-statement
\end{qparachoices}
\end{tightexcerpt}

\newpage
\item[How references can cause undefined behavior] \

\begin{tightexcerpt}
Consider this unsafe program:

\includegraphics[width=0.43\textwidth]{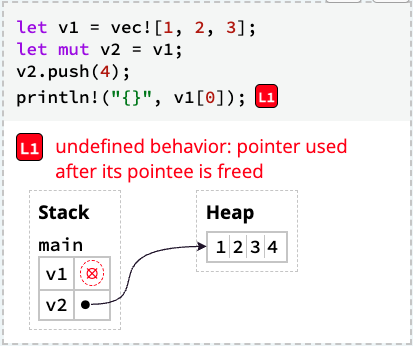}

Which of the following best describes how undefined behavior occurs in this program?
\begin{qparachoices}
\rightanswer \rs{v1[0]} reads \rs{v1}, which points to deallocated memory
\wronganswer \rs{v2} owns the vector data on the heap, while \rs{v1} does not
\wronganswer \rs{v1} has been moved into \rs{v2} on line 2
\wronganswer \rs{v1} has its pointer invalidated by the \rs{push} on line 3
\end{qparachoices}
\end{tightexcerpt}

\item[Compiler error due to overlapping borrows] \

\begin{tightexcerpt}
Determine whether the program will pass the compiler. If it passes, write the expected output of the program if it were executed.

\begin{lstlisting}
fn main() {
  let mut s = String::from("hello");
  let s2 = &s;
  let s3 = &mut s;
  s3.push_str(" world");
  println!("{s2}");
}
\end{lstlisting}

\begin{qparachoices}
    \wronganswer It does compile, with output \_\_\_\_
    \rightanswer It does not compile
\end{qparachoices}
\end{tightexcerpt}

\end{description}

\end{document}